\newcommand{\union}{\cup}
\newcommand{\inters}{\cap}
\newcommand\pfun{\mathrel{\ooalign{\hfil$\mapstochar\mkern5mu$\hfil\cr$\to$\cr}}}
\newcommand{\TU}{\mathop{\mathbf{U}\mbox{}}\nolimits}
\newcommand{\TDiam}{\mathop{\Diamond\mbox{}}}
\newcommand{\TBox}{\mathop{\Box\mbox{}}}
\newenvironment{quoting}
{
\noindent\hspace{0.05\textwidth}
\begin{minipage}[c]{0.9\textwidth}
\vspace{2mm}
}
{
\vspace{2mm}
\end{minipage}
}
\newcommand{\calr}{\ensuremath{\mathcal{R}}}
\newcommand{\calt}{\ensuremath{\mathcal{T}}}
\newcommand{\cals}{\ensuremath{\mathcal{S}}}
\newcommand{\cala}{\ensuremath{\mathcal{A}}}
\newcommand{\cale}{\ensuremath{\mathcal{E}}}
\newtheorem{thm}{Theorem}
\newtheorem{prop}{Proposition}
\newtheorem{lemma}{Lemma}
\newtheorem{defn}{Definition}
\tikzset{sem/.style={-stealth, decorate, decoration={snake, amplitude=.4mm, segment length=4mm, post length=1mm}, line width=0.3mm}}
\tikzset{split/.style={-stealth, densely dashed, line width=0.3mm}}
\tikzset{sc/.style={-stealth, double, line width=0.1mm}}
\tikzset{>=stealth}
\tikzstyle{box} = [rectangle, draw, minimum width=5em, minimum height=2em]
\tikzstyle{state} = [rounded rectangle, draw, minimum size=2em]
\tikzstyle{trans} = [rectangle, draw, minimum size=1.9em]
\tikzstyle{pnstate} = [rounded rectangle, draw, minimum size=1.5em]
\tikzstyle{pntrans} = [rectangle, draw, thick, fill=black, minimum width=6mm, inner ysep=2pt]
\newcommand\rif{\;\;\mathrm{if}\;\;}
\newcommand\splitop{\mathop{\mathrm{split}}}
\newcommand\semop{\mathop{\mathrm{sem}}}
\newcommand\varsop{\mathop{\mathrm{vars}}}
\newcommand{\trans}[1]{\mathrel{\ooalign{$-$\cr\hidewidth\hbox{$\big[\mkern -2mu$}\cr}\hbox{$\mkern3mu$}#1\hbox{$\mkern-3mu$}\mathrel{\ooalign{$\big]$\cr\hidewidth\hbox{$\rightarrow\mkern-13mu$}\cr}\hbox{$\mkern14mu$}}}}
\title[Compositional specification in rewriting logic]
      {Compositional specification in rewriting logic%
      \thanks{Partially supported by MINECO Spanish project TRACES (TIN2015-67522-C3-3-R), and Comunidad de Madrid programs N-GREENS Software (S2013/ICE-2731) and BLOQUES (S2018/TCS-4339).}}
\author[\'O. Mart{\'\i}n, A. Verdejo \and N. Mart{\'\i}-Oliet]
       {\'OSCAR MART\'IN, ALBERTO VERDEJO and NARCISO MART\'I-OLIET
       \\ Facultad de Informática, Universidad Complutense de Madrid, Spain
       \\ \email{\{omartins,jalberto,narciso\}@ucm.es}}
\begin{document}

\maketitle

\noindent\textbf{Note:} This article has been published in \emph{Theory and Practice of Logic Programming}, 20(1), 44-98, \textcopyright\ Cambridge University Press.
\vspace{5mm}

\begin{abstract}
Rewriting logic is naturally concurrent: several subterms of the state term can be rewritten simultaneously. But state terms are global, which makes compositionality difficult to achieve. Compositionality here means being able to decompose a complex system into its functional components and code each as an isolated and encapsulated system. Our goal is to help bringing compositionality to system specification in rewriting logic. The base of our proposal is the operation that we call synchronous composition. We discuss the motivations and implications of our proposal, formalize it for rewriting logic and also for transition structures, to be used as semantics, and show the power of our approach with some examples.
\end{abstract}

\begin{keywords}
compositional specification, rewriting logic, modularity, synchronous product
\end{keywords}

\section{Introduction}

To anyone in the fields of computer science and engineering, the convenience of compositionality needs not be stressed. We choose the word \emph{compositionality}, instead of the weaker \emph{modularity}. The latter includes the mere separation of code in chunks for better organisation. Compositionality hints at the existence of interfaces and encapsulation like in, say, object-oriented programming. For example, as we show in Section~\ref{motivation-mutex}, compositionality allows a mutex controller to be a separate system, interacting, but not intersecting, with the controlled components. It also allows a complex system, like a computer architecture in the example in Section~\ref{motivation-comp-arch}, to be specified as a set of independent, interacting pieces.

Rewriting logic was started by Meseguer's paper~\cite{Meseguer1992}. Maude is the name of a language and a system based on rewriting logic. It is fully described in~\cite{ClavelDELMMT2007}. It has proven a useful tool for, among other things, the specification of non-deterministic and concurrent systems. In a rewriting logic specification, the state of the system is represented as an algebraic term. Concurrency is made possible by several subterms of the state term being rewritten simultaneously. Compositionality is difficult, because the state of the whole system is represented by a unique term at each moment. We intend to help fixing this limitation. Our main goal in this paper is not to introduce novel theoretical proposals about compositionality, but rather to adapt existing ideas and see what rewriting logic has to offer. We want to do so departing as little as possible from the setting of standard rewriting logic.

The language Maude includes a useful system of modules. When a Maude module is imported into another, the result is equivalent to copying an exact duplicate of the imported module. Thus, this kind of modularity helps organizing the code, but does not provide compositionality by itself. Asynchronous message passing is often used in Maude to maintain component specifications as separate as possible, but this also has its limitations (see Section~\ref{related} for a discussion).

Different works have very different views on compositionality (see, again, Section~\ref{related}). Separation of concerns is often enforced: the specification of the internal workings of a component is separated from the specification of its interaction with the environment. Although our work is only aimed at systems specified using rewriting logic, we have kept in mind the need to separate concerns. Indeed, our definition of composition does not require that components are specified in any particular formalism. We have called \emph{synchronous composition} the operation that allows assembling systems keeping them synchronised. It works in three phases: the internal workings of each component are specified using rewriting logic in a fairly standard way; then, properties are defined on the states and transitions of the system; synchronisation between components is then established by relating the values of properties from different components. Properties are functions that take values at states and transitions. They are defined extending, but not modifying the base system, so that the component needs not know whether it is going to be synchronised or used in isolation. This provides a high degree of separation of concerns. (We have drawn inspiration from the way Maude's model checker is used. See~\cite{ClavelDELMMT2007}.)

Although our main goal is not to introduce novel ideas, it is inevitable that adapting old ideas to new settings inspires or requires new developments. What we have called \emph{egalitarianism} consists in treating transitions as first-class citizens, the same as states. It is not a strict requirement for the developments described in this paper, but we find it convenient and have stuck to it. Also, we have found nowhere else our particular setting for composition, through synchronisation based on simple equality of potentially complex properties.

Using just equality between properties for composition may seem too simple, but it allows a wide range of possibilities. The logic for complex interaction can be put into new components, representing channels or connectors, to be specified in rewriting logic like the others. Asynchronous communication, in particular, can be emulated through the use of components implementing buffers and delays. See also Section~\ref{power} for more examples.

This is what the reader can expect from the rest of the paper. In Section~\ref{choices} we explain our motivations, our goals, and how they have driven us to particular choices and definitions. All the formalisms are in Section~\ref{formal-definitions}. Then, Section~\ref{simple-example} contains a simple but complete example for illustrative purposes. Section~\ref{related} discusses related work. The next steps we intend to complete are discussed in Section~\ref{future}, and conclusions are exposed in Section~\ref{conclusion}.

Our paper~\cite{MartinVM2018} contains additional examples, particularly a large one on the alternating bit protocol, aimed at showing the full power of our proposals. It also describes techniques we have found helpful when specifying systems using our method. Additional material can be found on our website: \texttt{http://maude.sip.ucm.es/syncprod}.

\section{Motivation, goals, and choices}
\label{choices}

We discuss the different choices we have been driven to make towards our definition of the synchronous composition operation, and the rationale behind each choice. We are interested in transition structures and rewrite systems, and the synchronous composition for both. We discuss them together.

One message we want to convey in this section is that most choices were imposed on us by two goals. Our number one goal, again, is to provide flexible tools for compositionality in rewriting logic. The second is to depart as little as possible from the standard definitions of rewriting logic, so that we can benefit from all the existing machinery around it: practical tools, like model checkers, state-space explorers, and execution engines; and theoretical results about the use of abstraction, semantics, and so on.

Rewriting logic has proven useful for tasks such as the formalisation of language semantics and the emulation of other logics. Our interest, however, is only focused in system modelling, specification, and analysis. This focus has also guided some of the choices described below.

As all this section is intended to be motivational, many technical details are only going to be properly explained and (hopefully) understood in later sections.

\subsection{First motivational example: mutual exclusion}
\label{motivation-mutex}

Think of a train, a very simple model of a train, that goes round a closed railway in which there is a station and a crossing with another railway. There are three points of interest in the railway, that we use as the states of our model. There are three transitions for moving between the three states.
\begin{center}
\begin{tikzpicture}[auto, xscale=0.36, yscale=0.27]
\draw (0, 0) circle (10);
\draw (0, 0) circle (11);
\draw [dotted] (-17, 0.5) -- (-16, 0.5);
\draw (-16, 0.5) -- (-9.5, 0.5);
\draw [dotted] (-9.5, 0.5) -- (-8.5, 0.5);
\draw [dotted] (-17, -0.5) -- (-16, -0.5);
\draw (-16, -0.5) -- (-9.5, -0.5);
\draw [dotted] (-9.5, -0.5) -- (-8.5, -0.5);
\node (train) at (0, 11.3) {\includegraphics[width=10mm]{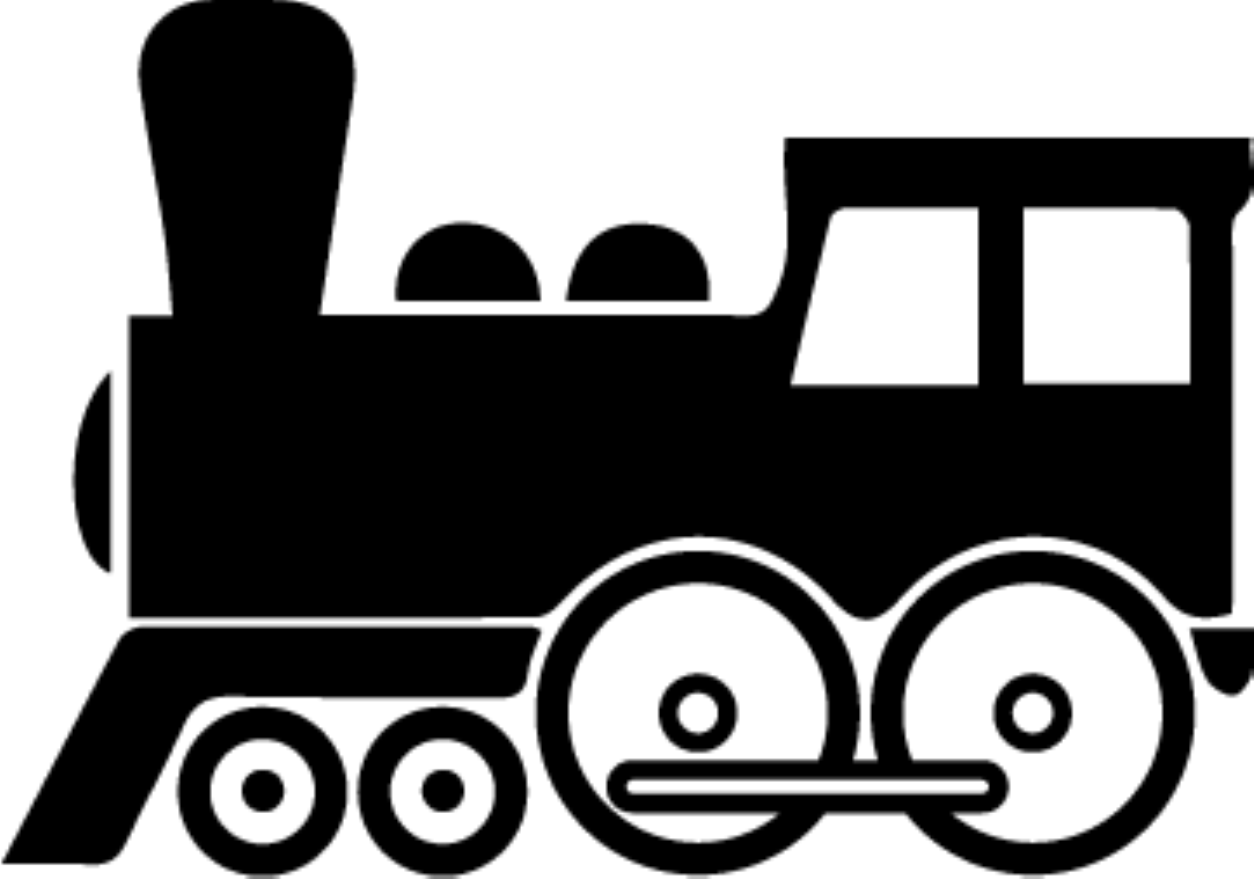}};
\node (train) at (-14.4, 0.8) {\scalebox{-1}[1]{\includegraphics[width=10mm]{train.pdf}}};
\node (station) at (10.4, 0.5) {\includegraphics[width=11mm]{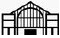}};

\node [state] (atStation) at (6, 0) {\texttt{atStation}};
\node [state] (beforeCrossing) at (-4, 3) {\texttt{beforeCrossing}};
\node [state] (afterCrossing) at (-4, -3) {\texttt{afterCrossing}};

\draw [->, bend right=35, out=-50] (atStation) edge node [sloped, above, yshift=-1mm] {\texttt{goingToCrossing}} (beforeCrossing);
\draw [->, bend right=35, out=-30, in=210] (beforeCrossing) edge node [left] {\texttt{crossing}} (afterCrossing);
\draw [->,  bend right=35, in=-120] (afterCrossing) edge node [below, sloped] {\texttt{goingToStation}} (atStation);
\end{tikzpicture}
\end{center}

In Maude-like notation:
\begin{verbatim}
rl [goingToCrossing] : atStation => beforeCrossing .
rl [crossing] : beforeCrossing => afterCrossing .
rl [goingToStation] : afterCrossing => atStation .
\end{verbatim}
The keyword \verb!rl! introduces a rewrite rule. The identifier in square brackets is the label of the rule. Rules describe transitions between states. To the left of the arrow (\verb"=>") is the origin state; to the right is the destination state.

Think, indeed, of two trains, both modelled the same, that share the piece of railway we have identified as \verb"crossing". Let us call these systems \verb"TRAIN1" and \verb"TRAIN2". This is a typical case where mutual exclusion is needed in the access to the crossing.

Modularity is desirable here. The two trains are independent systems that are better modelled separately and combined afterwards. Also, the control to ensure mutual exclusion can be exerted on the trains from outside. From a design point of view, the model of a train must specify how a train works, what actions it is able to perform, but not any control external to the machine. From a more practical point of view, having different concerns (the workings of the trains and the control) coded into different modules eases the tasks of specification and maintenance.

This is the module we propose to control for mutual exclusion:
\begin{center}
\begin{tikzpicture}[auto]
\node [state] (rem) at (0, 0) {\texttt{\;rem\;}};
\draw [<-] (rem.315) arc(-158:160:0.85);
\draw [<-] (rem.225) arc(158:-160:-0.85);
\node (c1) at (2.65, 0) {\texttt{crit(1)}};
\node (c2) at (-2.65, 0) {\texttt{crit(2)}};
\end{tikzpicture}
\end{center}
In Maude-like notation:
\begin{verbatim}
rl [crit(I)] : rem => rem .
\end{verbatim}
This system, that we call \verb!MUTEX!, is composed with the trains to produce a single system that we denote as $\texttt{TRAIN1} \,\|\, \texttt{TRAIN2} \,\|\, \texttt{MUTEX}$. For this composed system to work properly, we need to make sure that the transition \verb!crossing! in system \verb!TRAIN1! takes place synchronised with (that is, simultaneously to) transition \verb!crit(1)! in system \verb!MUTEX!; and \verb!crossing! in \verb!TRAIN2! with \verb!crit(2)!. Certainly, transitions \verb!crit(1)! and \verb!crit(2)! cannot happen at the same time and, thus, mutual exclusion is ensured. This module \verb!MUTEX! can be used to ensure mutual exclusion on any two systems, with appropriate synchronisation criteria.

This opens the door to compositional verification (that only gets some passing mentions in the present paper). The system \verb!MUTEX! satisfies mutual exclusion, represented by the linear temporal logic (LTL) formula
\[\TBox(\neg\texttt{crit(1)}\lor\neg\texttt{crit(2)}),\]
where \verb!crit(I)! is used as a proposition that holds true when the respective transition is executing, and false otherwise. Therefore, we can also assert that the composed, synchronised system satisfies mutual exclusion in the form
\[\TBox(\neg\texttt{crossing}_1\lor\neg\texttt{crossing}_2),\]
where $\texttt{crossing}_1$ represents a proposition that holds true when transition \verb!crossing! is executing in system \verb!TRAIN1!, and false otherwise; and respectively for $\texttt{crossing}_2$.

But there are issues to solve. We have used terms with variables as rule labels (like \verb!crit(I)!), which is not a standard feature of rewriting logic. Also, the critical section of a system may consist of several consecutive transitions, and all of them need to be synchronised with the same \verb!crit(I)! transition in \verb!MUTEX!. All this is explored and solved below.

\subsection{Second motivational example: computer architecture}
\label{motivation-comp-arch}

We introduce now a slightly more complex example. It models a schematic computer architecture in which a processor works with an external memory for data storage and a separate memory to store the program to be executed. We can picture the complete system like this:
\begin{center}
\begin{tikzpicture}[auto]
\node [box] (program) at (0, 0) {Program};
\node [box] (processor) at (4.5, 0) {Processor};
\node [box] (memory) at (9, 0) {Memory};
\draw [->] (program) edge node {instruction bus} (processor);
\draw [->, bend left] (processor) edge node {address bus} (memory);
\draw [<->, bend right, below] (processor) edge node {data bus} (memory);
\end{tikzpicture}
\end{center}
The processor component also includes a register for the instruction currently being executed, and another register for the last piece of data read from memory or to be written to memory.

In object-oriented and similar methodologies, the three entities---program, processor, and memory---would be coded as independent modules (classes), with internal details hidden, and with the needed operations in the interfaces to allow for the exchange of requests and data.

A typical rewriting logic specification could go like this:
\begin{itemize}
\item the stored program is represented by a set of instructions, each one similar to \verb!(1, w, 3, 7)!, meaning: ``instruction number 1 asks to write at memory address 3 the data 7'';
\item the memory is represented as a set of pairs (address, data);
\item the processor stores the counter for the next instruction to be executed, plus an instruction like \verb!(w, 3, 7)! in its instruction register, plus a piece of data in its data register.
\end{itemize}
The steps in the evolution of such a system would be represented by rewriting rules like this one, that executes a writing instruction already stored in the processor's register:
\begin{verbatim}
--- Part of the COMPUTER specification
   rl [execW] : Program: SomeSetOfInstructions
                Processor:
                   ProgCounter: N
                   Instr: (w, A, D)
                   Data: D'
                Memory: (A, D'') RestOfMemory
             => Program: SomeSetOfInstructions
                Processor:
                   ProgCounter: N + 1
                   Instr: void
                   Data: D
                Memory: (A, D) RestOfMemory .
\end{verbatim}
Maude allows using flexible syntax; all the elements in that rule that are not variables are added syntax. The rule, put in words, is saying: ``If the instruction just read by the processor is a writing requirement for address \verb"A" and data \verb"D" (that is, \verb"(w, A, D)"), then overwrite whichever data is currently stored in the memory at address \verb"A" with the new data \verb"D", update also the data register with \verb"D", and add one to the program counter.'' Other rules would take care of other instructions.

This is simple and useful for some goals. But it is not modular. The problem is that the rule above involves two components---the processor and the memory---and there is no way, in the setting of traditional rewriting logic, to model their behaviour independently. This is what we are after: being able to specify three separate systems for the three components of the computer, and to make them evolve in a synchronised way. In the processor, there would be a rule like this:
\begin{verbatim}
--- In module PROCESSOR
   rl [doingW] : ProgCounter: N
                 Instr: (w, A, D)
                 Data: D'
              => ProgCounter: N + 1
                 Instr: void
                 Data: D .
\end{verbatim}
Then, in the memory, there would be:
\begin{verbatim}
--- In module MEMORY
   rl [updating] : (A, D'') RestOfMemory
                => (A, D) RestOfMemory .
\end{verbatim}
Using these decoupled rules, we specify all the capabilities of each component. The memory is capable of updating with any values of \verb"A", \verb"D", and \verb"D''". The processor is capable of taking its step and, if run isolated, would gladly do it pretending some writing has indeed been performed. The two parts need to be synchronised. Our synchronisation mechanism restricts the wild capabilities of the components. It represents the wiring or gearing in real world systems. Such refined realism is not always needed. When it is used, each component system can be simpler, and can be given independent meaning, independent specification, and, hopefully, independent analysis.

\subsection{Egalitarianism}
\label{egalitarianism}

Using actions for synchronisation, rather than states, is in many cases the natural choice, as illustrated in the two examples above. Thus, it is unfortunate that actions, or transitions, are often treated in a discriminatory fashion with respect to states. In the basic formulations of labelled transition structures, for example, it is usually the case that we can define propositions on states, but actions are only given atomic and non-unique identifiers. In rewriting logic, states are represented by terms of any complexity, but rules are only given atomic labels. This provides little flexibility for dealing with actions.

Synchronising states is also useful, so we don't want to adopt an action-only formalism. We need to be \emph{egalitarian}. In~\cite{Martin2016} we already argued for the convenience of treating states and transitions as equals, and we proposed what we already called \emph{egalitarian} systems. Those systems were different, though related, to the ones by that adjective in the present paper. The same point had been made before, if only partially, for instance in~\cite{DeNicola1995,Kindler1998,Meseguer2008}. Be our task either the specification of systems or of their temporal properties, it can be made simpler and more natural with an egalitarian view. In~\cite{Martin2016b} we showed that also the synchronous composition of systems benefits from being egalitarian.

From the point of view of transition structures, this means that we are going to use propositions (or, rather, properties) both for states and for transitions. From a rewriting logic point of view, this means that we represent transitions (as well as states) by terms. Proof terms, as described in~\cite{Meseguer1992}, can be used in rewriting logic to represent transitions. But, to be egalitarian and to achieve our goals in this paper, proof terms are not appropriate, and we need to somewhat redefine the very concept of transition, as we do next.

\subsection{What is a transition?}
\label{what-is-a-transition}

In our toy computer architecture example, consider the way to deal with a reading instruction \verb"(r, A)", that is, a request to obtain the value stored in memory address \verb"A". In Maude-like syntax, the processor part could be written like this:
\begin{verbatim}
--- In module PROCESSOR
   rl [doingR] : ProgCounter: N
                 Instr: (r, A)
                 Data: D
              => ProgCounter: N + 1
                 Instr: void
                 Data: D' .
\end{verbatim}
That is, at the start, the processor has got an instruction to read the contents of memory address \verb"A" and, after \emph{performing} the reading, it has stored in its register the new value \verb"D'". Different instances of the rule represent different transitions. Again, for any of these transitions to be meaningful, they have to interact, to be synchronised, with some actions at the memory side, but we do not care about these right now.

The point to note here is that each transition, each instance of the rule \verb!doingR!, represents a process of reading, that we can picture as taking place over a certain time span, and the particular value of \verb"D'" is only available \emph{after} the execution of such a process, not \emph{while} executing it. Thus, \verb"D'" cannot be an attribute of a transition represented by the rule \verb!doingR!. It is only an attribute of the destination state. Therefore, the same transition based on \verb!doingR! can take the system to any of a set of destination states, each with a different value for \verb"D'". For similar reasons, \verb"D" needs not be known to the transition, but only to the origin state.

Such is the rationale behind our concept of \emph{transition}. This is our proposal: whenever from any state in a set of origin states $\{o_1,o_2,\dots\}$ a system can reach in one step any state in a set of destination states $\{d_1,d_2,\dots\}$, it is fair to consider such steps the same transition, irrespective of the actual origin and destination states used in each actual run of the system.
\begin{center}
\begin{tikzpicture}[auto]
\node [state] (o1) at (0, 3) {$o_1$};
\node [state] (o2) at (1.5, 3) {$o_2$};
\node         (oi) at (3.4, 3) {\strut\dots};
\node [trans] (t) at (2, 1.5) {};
\node [state] (d1) at (0, 0) {$d_1$};
\node [state] (d2) at (1.5, 0) {$d_2$};
\node         (di) at (3.4, 0) {\strut\dots};
\draw [->, out=-45, in=110] (o1) edge (t);
\draw [->, out=-60, in=100] (o2) edge (t);
\draw [->, out=-135, in=70] (oi) edge (t);
\draw [->, out=-110, in=45] (t) edge (d1);
\draw [->, out=-100, in=60] (t) edge (d2);
\draw [->, out=-70, in=135] (t) edge (di);
\end{tikzpicture}
\end{center}

For another example, consider the rule for updating the memory, that we wrote above and repeat now:
\begin{verbatim}
--- In module MEMORY
   rl [updating] : (A, D'') RestOfMemory
                => (A, D) RestOfMemory .
\end{verbatim}
This represents an updating of the data stored in memory address \verb"A". The data \verb"D''" is better seen as a property belonging to the origin state, irrelevant once the updating process starts. However, the new data \verb"D" is better seen as a property belonging to the transition (and to the destination state as well), because the updating process needs to work with the new value since the moment it starts. Indeed, at the processor side, the rule \verb!doingW!, the one that has to synchronise with \verb"updating", has the new value \verb"D" available already in its origin state, as part of the instruction to be executed: \verb"(w, A, D)".

Transitions are, thus, freed from the usual ``single origin and single destination'' convention. A proof term, as defined in~\cite{Meseguer1992} for rewriting logic, univocally identifies a rewriting step, and contains all the information to recover the origin and destination state terms. For example, \verb!updating(A, D'', RestOfMemory, D)! is the form of a proof term for the rule above. This is too restrictive for us. We want to remove from a transition term all the information that belongs rather to their origin or destination states. The transition term we need in this case is \verb!updating(A, RestOfMemory, D)!. From this term it is not possible to recover the particular origin state, but it is possible to recover the set of possible origin (and destination) states. In the case of rule \verb!doingR! above, the transition term needed is \verb!doingR(A, RestOfMemory)!. It may seem odd that some of these variables are part of the transition terms, specially \verb!RestOfMemory!, but it is the only way that the sets of possible origin and destination states can be recovered. These variables may be better interpreted as the context in which the transition is taking place, instead of being a parameter of the transition.

This feature is shared by states---indeed, it is a trivial thing: for each state, any transition that takes the system to it can be followed by any transition that takes the system from it. Thus, our concept of transition is egalitarian, which is nice. But the real reason for choosing this concept of transition is that we need it for synchronisation to work properly. (Interestingly, the proof terms proposed in~\cite{Boudol1988} for CCS forget the unused part of an alternative, getting thus a little closer to our proposal.)

We said that it is \emph{fair} to consider that some steps are instances of the same transition; we didn't say it is \emph{mandatory}. In the motivational example of Section~\ref{motivation-mutex}, \verb!crit(1)! and \verb!crit(2)! are different transitions, even though they share their unique origin and destination state, \verb"rem". We need them to be different transitions so that each can be synchronised to the action of a different train.

In agreement with all this, in the egalitarian transition systems we define below, transitions are represented by boxes with in and out arrows, as in the diagram above. (The graphical aspect is similar to a Petri net, but the workings are different: in a Petri net, a transition fires only when all its in-places are marked, and then all its out-places get marked; in egalitarian transition structures, only one in-state needs to be \emph{marked} for the transition to \emph{fire}, and only one out-state gets \emph{marked}.)

Translating all of this to rewriting logic leads us to some tweaking in the definition of the logic. It is fully described below. The idea is that, in egalitarian rewriting logic, we need to label rules, not with atomic labels, but with terms showing which parameters we have chosen to belong to the transition.

\subsection{Synchronising on properties}
\label{ppties}

Consider once again the rules \verb!doingW! and \verb!updating! from Section~\ref{motivation-comp-arch}, giving rise to transitions with terms of the form \[\texttt{doingW(N, A, D)} \qquad\textrm{and}\qquad \texttt{updating(A, RestOfMemory, D),}\]
respectively. Let us choose some concrete values for the initial states, as shown in this picture:
\begin{center}
\begin{tikzpicture}[auto, every text node part/.style={align=left}]
\node at (0, 4.2) {\textsc{Processor}};
\node [state] (proc-orig) at (0, 3) {\verb"ProgCounter: 1"\\
                                     \verb"Instr: (w, 1, 5)"\\
                                     \verb"Data: void"};
\node [trans] (proc-trans) at (0, 1.5) {\texttt{doingW(1, {}1, {}5)}};
\node [state] (proc-dest) at (0, 0) {\verb"ProgCounter: 2"\\
                                     \verb"Instr: void"\\
                                     \verb"Data: 5"};
\draw [->] (proc-orig) edge (proc-trans);
\draw [->] (proc-trans) edge (proc-dest);

\node at (7, 4.2) {\textsc{Memory}};
\node [state] (mem-orig) at (7, 3) {\verb"(0, 0) (1, 0)"};
\node [trans] (mem-trans1) at (4.3, 1.5) {\texttt{updating}\\\texttt{(0, (1, 0), 0)}};
\node [state] (mem-dest1) at (4.3, 0) {\verb"(0, 0) (1, 0)"};
\node         (mem-trans2) at (6.3, 2.2) {\dots};
\node         (mem-dest2) at (6.3, 0) {\dots};
\node [trans] (mem-trans3) at (8.3, 1.5) {\texttt{updating}\\\texttt{({}1, (0, 0), {}5)}};
\node [state] (mem-dest3) at (8.3, 0) {\verb"(0, 0) (1, 5)"};
\node         (mem-trans4) at (9.3, 2.2) {\dots};
\node         (mem-dest4) at (10, 0) {\dots};
\draw [->,out=185,in=75] (mem-orig) edge (mem-trans1);
\draw [->,out=-15,in=90] (mem-orig) edge (mem-trans3);
\draw [->] (mem-trans1) edge (mem-dest1);
\draw [->] (mem-trans3) edge (mem-dest3);
\end{tikzpicture}
\end{center}
In words, the processor's state is telling: that the next instruction to be executed is the one identified with number 1 in the program; that the processor has already received and stored such an instruction, which is a request to write 5 at memory address 1; and that the data register is void (maybe because we have just begun executing the program). From there, the processor can only perform the transition shown, reaching the state shown. About the memory, we assume for simplicity that it only stores two data, at addresses 0 and 1. The initial state stores two zeros. From there, the memory is ready to perform any transition following the rule \verb"updating". Two possible transitions and destination states are shown in the picture.

When the memory and the processor are composed and run synchronised, we want the memory to execute only the transition that matches the values from the processor transition term: the \texttt{{}1} and the \texttt{{}5}. In this case, we need to synchronise on a pair of natural numbers. In other cases, more complex data may be needed for synchronisation.

Thus, we use general functions on states and on transitions. We refer to these functions as \emph{properties}. For synchronisation, we need to compare the values of such properties for equality. We require their values to allow for such a comparison. No other restriction is made on the types of properties. We call \emph{synchronisation criterion} to each equality of properties required for composing two systems.

There is one more tweak: properties may be undefined at some states and transitions. In the example above, the processor needs two properties with names, say, \verb!addressBeingAccessed! (the {}1) and \verb!dataBeingWritten! (the {}5). What can the value of \verb!dataBeingWritten! be while the processor is reading or doing something else different from writing? That property makes no sense at those points. No value can be assigned to it. Thus, we do not require that each state and transition assigns a value to each property. When a property is undefined at a state or transition, it does not impose any conditions to synchronise (regarding that property---there may be others).

Partially defined properties are of practical interest. At states or transitions where the property is undefined, other systems can behave in any way. This can be seen as an implication: ``when I say \emph{yes}, you say \emph{yes}; but when you say \emph{yes}, I can say whatever, or nothing at all'' (because I am allowed to be visiting a stage where the property is undefined). This is illustrated in the example in Section~\ref{simple-example} (with the property \verb!doMoveR!).

Systems with partially defined propositions have been more thoroughly studied, in a different setting, in~\cite{Huth2001,Bruns1999,Bruns2000,Godefroid2005}, under the name of \emph{partial structures}. These are Kripke structures on whose states propositions are defined with possible values true, false, or unknown. The \emph{unknown} value can be used, for example, to perform model checking leaving part of the state space unexplored; or to represent a satisfiability problem as a Kripke structure in which all propositions are unknown at start.

\subsection{Synchronising states with transitions}
\label{sync-states-trans}

The rule \verb!doingW! at the processor side, as it has already been discussed, must be synchronised with whatever actions perform the real updating at the memory side. The processor can not know precisely what is happening in the memory, and must not care about it. In our implementation above there was just one rule for updating the memory but, if the specification of the memory were more refined, there may well be several rules involved. It must be possible to synchronise one transition at one side with more than one at the other side, just based on the values of properties.
\begin{center}
\begin{tikzpicture}[auto, every text node part/.style={align=left}]
\node at (0, 5.8) {\textsc{Processor}};
\node [state] (ao) at (0, 4.2) {before requesting update};
\node [trans] (at) at (0, 2.4) {requesting update};
\node [state] (ad) at (0, 0.6) {after requesting update};
\draw [->] (ao) edge (at);
\draw [->] (at) edge (ad);

\node at (6, 5.8) {\textsc{Memory}};
\node [state] (bo) at (6, 4.8) {before updating};
\node [rectangle,draw,minimum height=3.5cm,minimum width=4cm] (bx) at (6, 2.4) {};
\node [trans] (bt1) at (6, 3.6) {first updating step};
\node [state] (bi) at (6, 2.4) {between steps};
\node [trans] (bt2) at (6, 1.2) {second updating step};
\node [state] (bd) at (6, 0) {after updating};
\draw [->] (bo) edge (bt1);
\draw [->] (bt1) edge (bi);
\draw [->] (bi) edge (bt2);
\draw [->] (bt2) edge (bd);

\draw [dashed,out=0,in=180] (ao) edge (bo);
\draw [dashed,out=0,in=180] (ad) edge (bd);
\draw [dashed,out=0,in=180] (at) edge (bx);
\end{tikzpicture}
\end{center}

In our case, the \emph{before} states on both systems must synchronise; also the \emph{after} states must synchronise; and the unique transition at the processor side must synchronise with two transitions and one state at the memory side. In particular, one process (the memory) is in a state (``between steps'') while the other (the processor) is in a transition (``requesting update''). This kind of heterogeneous synchronisation is unavoidable. Therefore, the norm is: either states or transitions in one system can be visited at the same time as either states or transitions in the other system if they agree on the values of their properties. 

The final consequence is that the boundary between states and transitions in composed systems disappears or is blurred. For, what can we say about a system with two components, one in a state while the other in a transition? The composition is not purely in a state and not purely in a transition. When all the components happen to be simultaneously in states (resp., transitions), it is still correct to say that the composed system is in a pure state (resp., transition). Correct, but devoid of any importance. We become, in this way, utterly egalitarian. When we want to refer to either states or transitions or any composition of them, we call them \emph{stages}.

\subsection{The split}

At the start of this paper, we envisioned egalitarian transition structures as represented by bipartite graphs, with states and transitions interleaved. After the discussion above, these structures are only valid as \emph{atomic} components. Non-atomic structures are given as a set of atomic ones together with the synchronisation criteria.

We have stepped into a problem. Our goal for powerful compositionality has driven us to use structures and systems (and sets of them) that dangerously depart from the standard ones. Our second goal (using existing machinery) is compromised. Maude's model checker, for instance, works by traversing a Kripke structure that is a model of the specification, so it is not usable on non-standard structures. Fortunately, there is a way out; we call it the \emph{split} operation.

Splitting applies both to transition structures and to rewrite systems, either atomic or composed. Splitting a structure or a system produces a standard one with equivalent information and behaviour. Splitting a structure or a system consists in making each state, transition, or whatever mixture of them in the original composed system into a state (of a new sort) in the resulting one. An egalitarian transition structure, through splitting, becomes a standard one. An egalitarian rewrite system, through splitting, becomes a standard rewrite system. The reason for the name \emph{split} is that a labelled rewrite rule like $\ell : t \rightarrow t'$ in an egalitarian rewrite system becomes split as $t \rightarrow \ell$ and $\ell \rightarrow t'$. What was a rule label, $\ell$, becomes an intermediate state term in the split system.

The split of an atomic transition structure consists in turning transitions into new states:
\begin{center}
\begin{tikzpicture}[auto]
\node [state] (t1) at (0, 3) {$t_1$};
\node [trans] (l1) at (0, 1.5) {$\ell_1$};
\node [state] (t2) at (-0.6, 0) {$t_2$};
\node [state] (t3) at (0.6, 0) {$t_3$};
\draw [->] (t1) edge (l1);
\draw [->] (l1) edge (t2);
\draw [->] (l1) edge (t3);

\draw [|->] (1.5, 1.5) -- node {$\splitop$} (2.7, 1.5);

\node [state] (t1') at (4, 3) {$t_1$};
\node [state] (l1') at (4, 1.5) {$\ell_1$};
\node [state] (t2') at (3.4, 0) {$t_2$};
\node [state] (t3') at (4.6, 0) {$t_3$};
\draw [->] (t1') edge (l1');
\draw [->] (l1') edge (t2');
\draw [->] (l1') edge (t3');
\end{tikzpicture}
\end{center}

The following is an example for a composed structure. So that the resulting structure is small, we are assuming that properties have been defined in both component structures that allow states to be synchronised only with states, and transitions only with transitions:
\begin{center}
\begin{tikzpicture}[auto]
\node [state] (t1) at (0, 3) {$t_1$};
\node [trans] (l1) at (0, 1.5) {$\ell_1$};
\node [state] (t2) at (-0.6, 0) {$t_2$};
\node [state] (t3) at (0.6, 0) {$t_3$};
\draw [->] (t1) edge (l1);
\draw [->] (l1) edge (t2);
\draw [->] (l1) edge (t3);

\node at (1.1, 1.5) {\large $\|$};

\node [state] (t1') at (2.6, 3) {$t'_1$};
\node [trans] (l1') at (2, 1.5) {$\ell'_1$};
\node [trans] (l2') at (3.2, 1.5) {$\ell'_2$};
\node [state] (t2') at (2.6, 0) {$t'_2$};
\draw [->] (t1') edge (l1');
\draw [->] (t1') edge (l2');
\draw [->] (l1') edge (t2');
\draw [->] (l2') edge (t2');

\draw [|->] (4, 1.5) -- node {$\splitop$} (5.2, 1.5);

\node [state] (t1t1') at (7, 3) {$\langle t_1, t'_1\rangle$};
\node [state] (l1l1') at (6.2, 1.5) {$\langle \ell_1, \ell'_1\rangle$};
\node [state] (l1l2') at (7.8, 1.5) {$\langle \ell_1, \ell'_2\rangle$};
\node [state] (t2t2') at (6.2, 0) {$\langle t_2, t'_2\rangle$};
\node [state] (t3t2') at (7.8, 0) {$\langle t_3, t'_2\rangle$};
\draw [->] (t1t1') edge (l1l1');
\draw [->] (t1t1') edge (l1l2');
\draw [->] (l1l1') edge (t2t2');
\draw [->] (l1l1') edge (t3t2');
\draw [->] (l1l2') edge (t2t2');
\draw [->] (l1l2') edge (t3t2');
\end{tikzpicture}
\end{center}
States like $\langle t_1, \ell'_1\rangle$ are not valid in this case, according to the properties we are assuming for synchronisation. There are three more valid states in the result: $\langle t_1, t'_2\rangle$, $\langle t_2, t'_1\rangle$, and $\langle t_3, t'_1\rangle$; these are not reachable from the ones shown.

Ideally, we would have at our disposal execution engines and verification tools that work in a distributed, compositional way on egalitarian systems. This is not the case at present. The intended use of the split operation is to allow us to translate problems posed on egalitarian structures and systems to standard split ones. For instance, temporal properties would be translated and Maude's model checker can be used on the split system to draw conclusions about the original egalitarian one. The detailed way to perform these formula translations will be the subject of a future paper.

\subsection{On concurrency and topmostness}
\label{concurrency}

True concurrency is a given in rewriting logic. When two rewrite rules refer to disjoint subterms of the state term, they can be executed at the same time. A rewrite system is said to be \emph{topmost} if rewrites can only happen on the whole state term. Topmostness prevents concurrency: if each rule uses the whole state term, there is no room left for another rule to act concurrently on a different subterm. We revisit topmostness in Section~\ref{formal-definitions}.

In our setting, concurrency has one additional level. In a composed system, rules from different components can be executed at the same time, irrespective of whether each atomic component happen to be topmost or not. Carefully chosen values for properties can mandate simultaneity of actions in different components, or can prevent it. When no restrictions are made, both concurrency and interleaving are possible. Thus, true concurrency is natural if we are talking about actions taking place in different components.

Indeed, we prefer that our atomic systems are topmost. Suppose, for example, that the states of our system are given by pairs \verb"(a, b)", and that there are rules to rewrite the complete state term  and also for a part of it:
\begin{verbatim}
(a, b) => (a', b') .
a => a'' .
\end{verbatim}
In many cases, this can be interpreted as a hint that \verb"a" represents a meaningful component, able to evolve by itself. In consequence, we propose to decompose the system into two topmost ones: one for the \verb"a" part, with rules
\begin{verbatim}
a => a' .
a => a'' .
\end{verbatim}
and another for the \verb"b" part, with rule
\begin{verbatim}
b => b' .
\end{verbatim}
We need to add the synchronisation criteria to make rules \verb"a => a'" and \verb"b => b'" be run only simultaneously, so as to produce the same result as the initial composed rule. We have found this method useful when specifying our examples, and we expect it to allow to transform many interesting systems into synchronous compositions of topmost ones.

\subsection{Explicit synchronisation criteria}
\label{syncing-criteria}

In automata theory and in process algebras it is common to define the synchronous composition so that the components execute the same action at the same time. Here, ``the same action'' means ``an action with equal name.'' Instead of relying on implicit understandings, we prefer to be explicit about which properties synchronise with which others in our synchronous compositions. The way we have chosen is the following. A synchronous composition is denoted as $\cals_1 \|_Y \cals_2$. It has a parameter in addition to the systems being composed: a set $Y$ of \emph{synchronisation criteria} specifying properties to synchronise on.

Each criterion is given as a pair $(p_1,p_2)$, with $p_1$ a property from $\cals_1$ and $p_2$ from $\cals_2$. Each criterion $(p_1,p_2)$ is a requirement that, at every moment, the value of $p_1$ at the current state or transition in system $\cals_1$ is the same as the value of $p_2$ at the current state or transition in system $\cals_2$.

In the example about mutual exclusion from Section~\ref{motivation-mutex}, we would define a property \verb"grants(1)" in \verb!MUTEX! as true in the transition \verb!crit(1)! and false otherwise. At \verb!TRAIN1! we would define a property \verb!isCrossing! to be true at its transition \verb!crossing! and false at all other transitions and states. Then, we obtain the desired result by requiring the systems to synchronise on those properties. It is often the case that the same property seen from different sides deserves different, descriptive names. Also, it helps reusability that the \verb!MUTEX! system does not use train-related names.

A composition with no synchronisation is acceptable, that is, with $Y=\emptyset$. It means that each component can evolve with no consideration to what the other one is doing. Also, the same property can be used several times: $Y=\{(p_1,p_2),(p_1,p'_2)\}$. In Section~\ref{formal-definitions} we define the synchronous composition as a binary operation, but then justify expressions in which more than two systems can be composed at once.

\subsection{Properties of the composed system}
\label{composed-ppty}

A way is needed to assign properties to composed systems, so that they can be used as components in turn. Although we have mentioned encapsulation as a desirable ingredient of compositionality, we do not show in this paper the means to facilitate it. (We see in parameterised programming an ideal tool to deal with it in rewriting logic, and we have detailed this proposal in~\cite{MartinVM2018}.) For the purpose of the present paper, any property $p_1$ defined in a system $\cals_1$ is automatically a property of any composed system of which $\cals_1$ is a component, $\cals_1\|_Y\cals_2$, even if $p_1$ is already used in $Y$.

We insist that different systems are different namespaces, so that two properties that happen to get the same name in different systems are different properties. We use the name of the system as a subscript if disambiguation is needed: $p_\cals$ for property $p$ defined in system $\cals$.

\subsection{Summing up}

We have up to now described our motivation, goals, choices, their rationales and some of their consequences. The result is that a system specification is formed by the specifications of a series of \emph{atomic}, maybe topmost, components. In them, we are equally interested in states and transitions. On each system, properties are defined. They work as interfaces, or ports, or handles, or wires, or communication buses, or just attributes, different metaphors being appropriate in different cases. Properties may be partial functions, they need not be defined in all states and transitions. Then, synchronisation criteria are specified as pairs of properties whose values, if defined, must be equal for all component systems at all times. The resulting composed systems can, in their turn, be composed. Properties are inherited by the composed systems from each of its components. This is all formalized in the next section, both for transition structures and for rewrite systems.

\section{Formal definitions and results}
\label{formal-definitions}

We define and use below a number of structures and systems to which we give special names and abbreviations. This is a list of them:
\begin{description}
\item{\textsf{atEgRwSys}}: atomic egalitarian rewrite systems
\item{\textsf{EgRwSys}}: egalitarian rewrite systems
\item{\textsf{RwSys}}: plain rewrite systems
\item{\textsf{atEgTrStr}}: atomic egalitarian transition structures
\item{\textsf{EgTrStr}}: egalitarian transition structures
\item{\textsf{TrStr}}: plain transition structures
\end{description}

\noindent This diagram shows the whole set of structures and systems with their related maps:
\begin{center}
\begin{tikzpicture}[auto, every text node part/.style={align=left}]
\node [rectangle] (atEgTrStr) at (0.5, 0) {\textsf{\upshape atEgTrStr}};
\node [rectangle] (EgTrStr) at (3, 0) {\textsf{EgTrStr}};
\node [rectangle] (EgTrStr') at (6.5, 0) {\textsf{EgTrStr}};
\node [rectangle] (TrStr) at (4, 2) {\textsf{TrStr}};
\node [rectangle] (TrStr') at (7.5, 2) {\textsf{TrStr}};

\node [rectangle] (atEgRwSys) at (0.5, 3.5) {\textsf{\upshape atEgRwSys}};
\node [rectangle] (EgRwSys) at (3, 3.5) {\textsf{\upshape EgRwSys}};
\node [rectangle] (EgRwSys') at (6.5, 3.5) {\textsf{\upshape EgRwSys}};
\node [rectangle] (RwSys) at (4, 5.5) {\textsf{\upshape RwSys}};
\node [rectangle] (RwSys') at (7.5, 5.5) {\textsf{\upshape RwSys}};

\draw (EgRwSys) edge [split] (RwSys);
\draw (EgRwSys') edge [split] (RwSys');

\draw (EgTrStr) edge [split] (TrStr);
\draw (EgTrStr') edge [split] (TrStr');

\draw (atEgRwSys) edge [sem] (atEgTrStr);
\draw (EgRwSys) edge [sem] (EgTrStr);
\draw (RwSys) edge [sem] (TrStr);
\draw (RwSys') edge [sem] (TrStr');

\draw (TrStr) edge [sc] (TrStr');
\draw (EgTrStr) edge [sc] (EgTrStr');
\draw (RwSys) edge [sc] (RwSys');
\draw (EgRwSys) edge [sc] (EgRwSys');

\draw (EgRwSys') edge [sem] (EgTrStr');

\draw (-1, 6) edge [sc] node [below] {$\|$} (0, 6);
\draw (-1, 6) edge [sem] node [left] {$\semop$} (-1, 5);
\draw (-1, 6) edge [split] node {split} (-0.3, 6.7);

\draw [right hook-latex, line width=0.3mm] (atEgRwSys) edge (EgRwSys);
\draw [right hook-latex, line width=0.3mm] (atEgTrStr) edge (EgTrStr);
\end{tikzpicture}
\end{center}
Slanted dashed arrows represent the several concepts of split. Double horizontal arrows represent synchronous composition of systems or structures. Downward snake arrows represent semantic maps. It is argued below that each face of that polyhedron contains a commutative diagram (namely, see Definitions~\ref{split-ets} and~\ref{sem-ers}, Proposition~\ref{prop-sem-rs}, Theorem~\ref{split-sem}, and Section~\ref{split-ers}).

\subsection{Transition structures}
\label{tx-str}

We are interested in three types of transition structures. In atomic egalitarian transition structures, transitions have independent entity; they are, otherwise, similar to Kripke structures and other known formalisms. Then, non-atomic egalitarian transition structures are sets of atomic ones evolving in a synchronised way according to given criteria. Finally, plain structures are useful as a non-egalitarian but quite standard equivalent to egalitarian ones.

\subsubsection{Atomic egalitarian transition structures}
\label{aets}

We denote by \textsf{\upshape atEgTrStr} the class of \emph{atomic egalitarian transition structures}. An element of \textsf{\upshape atEgTrStr} contains two kinds of nodes, called states and transitions, that can only occur interleaved.

Formally, an atomic egalitarian transition structure is a tuple $\calt=(Q,T,{\rightarrow},P,g^0)$, where:
\begin{itemize}
\item $Q$ is the set of states (we reserve the letter $S$ for later use);
\item $T$ is the set of transitions;
\item ${\rightarrow} \subseteq (Q\times T) \union (T\times Q)$ is the bipartite adjacency relation;
\item $P$ is the set of properties---partial functions defined on states and transitions, with possibly different codomains: $p : Q\union T \pfun C_p$ (using $C_p$ to denote the codomain of property $p$);
\item $g^0 \in Q\union T$ is the initial state or transition.
\end{itemize}
The adjacency relation allows for several arrows in and out of a transition, as well as a state. Also, the egalitarian goal mandates that not only an initial state is possible, but also an initial transition. Think of it as if we start studying the system when it is already doing something. We introduce the word \emph{stage} to refer to either a state or a transition. We use variables typically called $g$, with or without subscripts or superscripts, to range over $Q\union T$, that is, over stages. Thus, $g^0$ is the initial stage.

These structures can be seen as generalisations of other, well-known ones. Standard transition structures are obtained as the particular cases in which each transition has a unique origin and a unique destination, that is, in which the restricted relations ${\rightarrow}|_{T\times Q}$ and ${\rightarrow}^{-1}|_{T\times Q}$ are functions. Labelled transition structures are then obtained by adding a single property $L:T\rightarrow\Lambda$ (undefined on states), for some alphabet of actions $\Lambda$. Similarly, Kripke structures are obtained by adding instead a single property $L' : Q \rightarrow 2^\text{AP}$ (undefined on transitions), for some set of atomic propositions $\text{AP}$. As discussed in Section~\ref{what-is-a-transition}, Petri nets are different to atomic transition structures. Even though they are both bipartite graphs, their runs are different. There is no concept of marking in our structures, no distributed state. Our means to get simultaneous firing of transitions is to compose several atomic structures, as we show in the next section.

\subsubsection{Egalitarian transition structures}
\label{ets}

The class of \emph{egalitarian transition structures}, \textsf{EgTrStr}, is built up from atomic transition structures by performing pairwise synchronous compositions. Formally, we define egalitarian transition structures and, simultaneously, their sets of properties, by:
\begin{itemize}
\item if $\calt=(Q,T,\rightarrow,P,g^0)\in\textsf{\upshape atEgTrStr}$, then $\calt\in\textsf{\upshape EgTrStr}$, with $P$ being its set of properties;
\item given $\calt_1,\calt_2\in\textsf{\upshape EgTrStr}$ (atomic or otherwise), with respective sets of properties $P_1$ and $P_2$, and given a relation $Y\subseteq P_1\times P_2$ (the synchronisation criteria), the expression $\calt_1\|_Y\calt_2$ denotes a new element of \textsf{EgTrStr}, called the \emph{synchronous composition of $\calt_1$ and $\calt_2$ with respect to $Y$}, whose set of properties is defined to be $P_1\uplus P_2$.
\end{itemize}
For each synchronisation criterion $(p_1,p_2)\in Y$, we require that the codomain sets of $p_1$ and $p_2$ are the same, $C_{p_1}=C_{p_2}$, so that their values can be compared for equality. The class \textsf{EgTrStr} is the smallest one obtained by applying these two rules.

All properties from $\calt_1$ and from $\calt_2$ are automatically properties of their synchronous composition. Ultimately, all properties are defined in atomic structures, and inherited by the composed structures in which they take part. We assume that we have access to properties defined in any atomic component of an egalitarian transition structure, at any level of nesting.

A note on namespaces: we see each egalitarian transition structure as an independent system. Thus, it may happen, as if by chance, that a property in $\calt_1$ has the same name as another in $\calt_2$, but they would be different properties anyway. That is why we have used the disjoint union symbol above. When disambiguation is required, we use the name of the structure as a subscript: $p_\calt$, $p_{\calt_1}$. Also the sets of states and transitions are assumed disjoint between different structures.

The behaviour of each atomic component is restricted by the behaviour of the other components according to the synchronization criteria. Intuitively, the atomic component $\cala$ is allowed to visit its stage $g$ at a given moment in time only if, simultaneously, each \emph{neighbour} $\cala'$ of $\cala$ is visiting a stage $g'$ that is \emph{compatible} with $g$. By a \emph{neighbour} of $\cala$ we mean any $\cala'$ for which there is a criterion in $Y$ involving a property from $\cala$ and another from $\cala'$; stages $g$ and $g'$ are \emph{compatible} if all the criteria are satisfied when the properties are evaluated at them. By the way, that stages $g$ and $g'$ are compatible is a necessary but not sufficient condition for them to be visited at the same time, because there can be other allowed combinations: the stage $h$, in the same atomic component as $g$, may also be compatible with $g'$. Each execution of the system must non-deterministically choose one of the allowed ways to go. It is always possible to define properties in a tighter way to achieve the desired behaviour.

We could define concepts of global stage and global transition for the composed system, but we prefer the \emph{distributed} view outlined in the previous paragraph. We consider those structures as models of distributed systems. In a distributed system, with components that may even reside in different locations, the concepts of global state and global transition are unimportant, and we avoid them in this work as much as possible (although the $\splitop$ operation defined in Section~\ref{tr-split} provides them, indirectly but explicitly).

Thus, strictly speaking, a non-atomic egalitarian transition structure is \emph{not} a transition structure, in the traditional sense of having states and an adjacency relation on them. The behaviour of $\calt_1 \|_Y \calt_2$ results from the interaction, as given by $Y$, of the behaviours of $\calt_1$ and $\calt_2$ and, ultimately, of the behaviours of their atomic subcomponents.

We do not need to make formal that intuition of simultaneous stages. We do need, however, a more restricted notion of equivalence: the one given by the different ways of composing the same components. For example, $(\calt_1\|_{Y}\calt_2)\|_{Y'}\calt_3$ and $\calt_1\|_{Y''}(\calt_2\|_{Y'''}\calt_3)$ are equivalent if $Y\union Y'=Y''\union Y'''$. Let us make this formal.

\begin{defn}
\label{atomic}
The \emph{set of atomic components} of an egalitarian transition structure is:
\begin{itemize}
\item $\mathrm{atomic}(\calt) = \{\calt\}$ if  $\calt\in\textsf{\upshape atEgTrStr}$,
\item $\mathrm{atomic}(\calt_1 \|_Y \calt_2) = \mathrm{atomic}(\calt_1)\union\mathrm{atomic}(\calt_2)$.
\end{itemize}
\end{defn}

\begin{defn}
\label{criteria}
The \emph{total set of criteria} of an egalitarian transition structure is:
\begin{itemize}
\item $\mathrm{criteria}(\calt) = \emptyset$ if $\calt\in\textsf{\upshape atEgTrStr}$,
\item $\mathrm{criteria}(\calt_1 \|_Y \calt_2) = \widetilde{Y} \union \mathrm{criteria}(\calt_1) \union \mathrm{criteria}(\calt_2)$,
\end{itemize}
where $\widetilde{Y}=\{\{x,y\} \mid \langle x,y\rangle\in Y\}$.
\end{defn}

We need to make the tuples from $Y$ into the two-element sets in $\widetilde{Y}$ so that $\widetilde{Y}=\widetilde{Y^{-1}}$ and then $\mathrm{criteria}(\calt_1\|_Y\calt_2) = \mathrm{criteria}(\calt_2\|_{Y^{-1}}\calt_1)$.

\begin{defn}
Two egalitarian transition structures $\calt_1$ and $\calt_2$ are \emph{equivalent}, denoted as $\calt_1\equiv\calt_2$, iff $\mathrm{atomic}(\calt_1)=\mathrm{atomic}(\calt_2)$ and $\mathrm{criteria}(\calt_1)=\mathrm{criteria}(\calt_2)$.
\end{defn}

In the particular case that $\calt_1$ and $\calt_2$ are atomic, we have $\calt_1\equiv\calt_2 \Leftrightarrow \calt_1=\calt_2$.

This equivalence relation can be seen as expressing a concept of \emph{same behaviour}. Equivalent structures are indistinguishable, and we consider them as representing the same system. This allows us to group the atomic components in the most suitable way for a modular design. Consequently, we allow expressions like ${\|_Y}_{i=1}^n\calt_i$, or $\|_{Y,I}\calt_i$ for some index set $I$, or just $\|_Y\calt_i$ if indexes are implied. They stand for any representative of a $\equiv$-equivalence class.

\subsubsection{Plain transition structures}
\label{ts}

What we call \emph{plain transition structures} are, in essence, traditional transition structures on whose states we define properties. These are non-egalitarian---transitions are just given by the adjacency relation between states, and do not play any special role. The reason we introduce these is that, as we show in the next section, egalitarian transition structures can be translated into plain ones, with some benefits.

Formally, a plain transition structure is a tuple $\calt=(Q,{\rightarrow},P,q^0)$, where:
\begin{itemize}
\item $Q$ is the set of states;
\item ${\rightarrow}\subseteq Q\times Q$ is the adjacency relation;
\item $P$ is the set of properties---partial functions defined on states, with possibly different codomains: $p : Q \pfun C_p$;
\item $q^0\in Q$ is the initial state.
\end{itemize}
The class of plain transition structures is denoted by \textsf{TrStr}.

The synchronous composition of two plain transition structures $\calt_i=(Q_i,{\rightarrow}_i,P_i,q_i^0)\in\textsf{\upshape TrStr}$, for $i=1,2$, with respect to a relation $Y\subseteq P_1\times P_2$ (the synchronisation criteria), is denoted by $\calt_1\|_Y\calt_2$ and is defined to be the plain transition structure $\calt=(Q,{\rightarrow},P,q^0)$, where:
\begin{itemize}
\item $Q$ is the set of pairs $\langle q_1,q_2\rangle\in Q_1\times Q_2$ such that, for each $(p_1,p_2)\in Y$, either $p_1(q_1)$ is undefined, or $p_2(q_2)$ is undefined, or $p_1(q_1)=p_2(q_2)$;
\item for $\langle q_1,q_2\rangle, \langle q'_1,q'_2\rangle\in Q$, we have $\langle q_1,q_2\rangle \rightarrow \langle q'_1,q'_2\rangle$ iff either $q_1\rightarrow_1 q'_1$ and $q_2\rightarrow_2 q'_2$, or $q_1\rightarrow_1 q'_1$ and $q_2=q'_2$, or $q_1=q'_1$ and $q_2\rightarrow_2 q'_2$;
\item $P = P_1\uplus P_2$ and, if $p$ is a property defined in $\calt_i$ ($i=1$ or $2$), then it is defined in $\calt$ by $p(\langle q_1,q_2\rangle)=p(q_i)$ or, with qualification to avoid ambiguities, $p_\calt(\langle q_1,q_2\rangle) = p_{\calt_i}(q_i)$;
\item $q^0=\langle q_1^0,q_2^0\rangle$, assumed to be in $Q$ (that is, to satisfy the criteria in $Y$).
\end{itemize}

\begin{defn}
\label{ts-equiv}
Given two plain transition structures $\calt_i=(Q_i,{\rightarrow}_i,P_i,q_i^0)\in\textsf{\upshape TrStr}$, for $i=1,2$, we say that they are \emph{equivalent}, and denote it by $\calt_1\equiv\calt_2$, if there exist two bijections, $f:Q_1\rightarrow Q_2$ and $F:P_1\rightarrow P_2$, such that, for all $q_1,q'_1\in Q_1$ and all $p_1\in P_1$:
\begin{itemize}
\item the adjacency relation is preserved: $q_1\rightarrow_1 q'_1 \Leftrightarrow f(q_1)\rightarrow_2 f(q'_1)$;
\item codomains for properties are preserved: $C_{p_1} = C_{F(p_1)}$;
\item values of properties are preserved: $p_1(q_1)=F(p_1)(f(q_1))$;
\item initial values are preserved: $f(q_1^0)=q_2^0$.
\end{itemize} 
\end{defn}

For example, $\calt_1\|_Y\calt_2 \equiv \calt_2\|_{Y^{-1}}\calt_1$. More in general, different ways to group, reorder, and compose $n$ given structures, distributing and swapping the synchronization criteria accordingly, produce equivalent structures. Thus, as in the previous section, we allow the expressions ${\|_Y}_{i=1}^n\calt_i$, or $\|_{Y,I}\calt_i$, or $\|_Y\calt_i$. Also, we denote by $\langle q_1,\dots,q_n\rangle$ a typical element in $\|_Y\calt_i$.

The operators ``$\|$'' defined for \textsf{TrStr} and for \textsf{EgTrStr} are related (see next section), but are of a different nature: $\calt_1\|_Y\calt_2$ is an irreducible expression in \textsf{EgTrStr}, but can be reduced in \textsf{TrStr} to a single transition structure. This is, indeed, one of the advantages of plain structures: they are always single graphs, even when they are the result of a composition.

In the next section we show how plain transition structures are useful as a more standard equivalent to egalitarian ones. By \emph{more standard} we mean that they are almost Kripke structures, except that we use properties instead of simply Boolean propositions. Thus, for example, the satisfiability of temporal formulas can be defined for plain transition structures in the same way as for Kripke structures, just replacing Boolean propositions by Boolean-valued properties. Finally, if one does not care about egalitarianism, plain transition structures can be taken as a basis for discussing non-egalitarian compositional specification.

\subsubsection{The split and global states}
\label{tr-split}

The \emph{split} is a translation of transition structures: $\splitop : \textsf{\upshape EgTrStr}\rightarrow\textsf{\upshape TrStr}$. The definition is quite straightforward, but its validity relies on a lemma, proved below, to ensure that the sets of properties of $\calt$ and of $\splitop(\calt)$ are the same.

\begin{defn}
\label{split-ets}
\begin{itemize}
\item Given $\calt=(Q,T,{\rightarrow},P,g^0)\in\textsf{\upshape atEgTrStr}$, its split is $\splitop(\calt)=({Q\union T},{\rightarrow},P,g^0)$, that is, stages are transformed into states.
\item For a non-atomic egalitarian structure $\calt_1\|_Y\calt_2$, its \emph{split} is recursively defined by $\splitop(\calt_1\|_Y\calt_2) = \splitop(\calt_1)\|_Y\splitop(\calt_2)$.
\end{itemize}
\end{defn}

\begin{lemma}
Denoting by $\textrm{ppt}(\calt)$ the set of properties of either an egalitarian or plain transition structure $\calt$, we have $\textrm{ppt}(\calt) = \textrm{ppt}(\splitop(\calt))$. Thus, it is valid the use of the same $Y$ on both sides of the equality ``$\splitop(\calt_1\|_Y\calt_2) = \splitop(\calt_1)\|_Y\splitop(\calt_2)$'' in the definition.
\end{lemma}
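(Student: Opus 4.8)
The plan is to establish $\textrm{ppt}(\calt) = \textrm{ppt}(\splitop(\calt))$ by structural induction on the egalitarian transition structure $\calt$, following the two clauses of Definition~\ref{split-ets}. The recursive clause of that definition and this lemma are mutually dependent — the lemma is exactly what licenses reusing the same $Y$ on the right-hand side of $\splitop(\calt_1\|_Y\calt_2)=\splitop(\calt_1)\|_Y\splitop(\calt_2)$ — so I would phrase the induction so that, when treating a composite $\calt_1\|_Y\calt_2$, the inductive hypothesis has already been secured for the strictly smaller structures $\calt_1$ and $\calt_2$. This dissolves the apparent circularity.

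For the base case, let $\calt=(Q,T,{\rightarrow},P,g^0)\in\textsf{\upshape atEgTrStr}$. By the first clause defining \textsf{EgTrStr}, $\textrm{ppt}(\calt)=P$. By the first clause of Definition~\ref{split-ets}, $\splitop(\calt)=(Q\union T,{\rightarrow},P,g^0)$; the members of $P$, being partial functions on $Q\union T$, are partial functions on the state set of $\splitop(\calt)$, and $g^0\in Q\union T$ is a legitimate initial state, so this is a well-formed element of \textsf{TrStr} whose set of properties (its third component) is $P$. Hence $\textrm{ppt}(\splitop(\calt))=P=\textrm{ppt}(\calt)$.

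For the inductive step, let $\calt=\calt_1\|_Y\calt_2$ with $\calt_1,\calt_2\in\textsf{\upshape EgTrStr}$ and $Y\subseteq P_1\times P_2$, where $P_i:=\textrm{ppt}(\calt_i)$. By the inductive hypothesis, $\textrm{ppt}(\splitop(\calt_i))=P_i$ for $i=1,2$. Therefore $Y\subseteq \textrm{ppt}(\splitop(\calt_1))\times\textrm{ppt}(\splitop(\calt_2))$, and the codomain-matching requirement $C_{p_1}=C_{p_2}$ for each $(p_1,p_2)\in Y$ — which held for $\calt_1\|_Y\calt_2$ — still holds, since $\splitop$ alters neither the properties nor their codomains; so $\splitop(\calt_1)\|_Y\splitop(\calt_2)$ is a legitimate composition in \textsf{TrStr}, and $\splitop(\calt_1\|_Y\calt_2)$ is well-defined. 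By the definition of $\|_Y$ for plain transition structures, its set of properties is $\textrm{ppt}(\splitop(\calt_1))\uplus\textrm{ppt}(\splitop(\calt_2))=P_1\uplus P_2$, which by the second clause defining \textsf{EgTrStr} is exactly $\textrm{ppt}(\calt_1\|_Y\calt_2)$. This closes the induction, and the final sentence of the lemma follows immediately.

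I do not expect a genuinely hard step: in both the atomic and the composite case, properties are transported through $\splitop$ and through $\|_Y$ verbatim, so the argument is pure bookkeeping. The one point that deserves care — and the only reason the lemma is needed at all — is the well-definedness issue just flagged: the induction must be set up explicitly enough that the mutual dependence between Definition~\ref{split-ets} and this lemma is resolved rather than glossed over, together with the routine verification that $\splitop$ of an atomic structure, and $\|_Y$ of two plain structures, are well-formed objects with the stated property sets.
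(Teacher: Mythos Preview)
Your proposal is correct and follows essentially the same structural-induction argument as the paper's own proof: the base case observes that both sides have property set $P$, and the inductive step unwinds the definitions of $\splitop$ and $\|_Y$ on the composite. You are more explicit than the paper about the well-definedness issue (the mutual dependence between Definition~\ref{split-ets} and the lemma) and the codomain-matching check, which the paper leaves implicit; otherwise the arguments coincide.
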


\begin{proof}
In the base case, trivially $\textrm{ppt}(\calt) = \textrm{ppt}(\splitop(\calt))=P$.
 
Inductively assuming that $\textrm{ppt}(\calt_i)=\textrm{ppt}(\splitop(\calt_i))$, for $i=1,2$, we have:
\begin{align*}
\mathrm{ppt}(\splitop(\calt_1 \|_Y \calt_2))
   &=\mathrm{ppt}(\splitop(\calt_1) \|_Y \splitop(\calt_2))\\
   &=\mathrm{ppt}(\splitop(\calt_1)) \union \mathrm{ppt}(\splitop(\calt_2))\\
   &=\mathrm{ppt}(\calt_1) \union \mathrm{ppt}(\calt_2)\\
   &=\mathrm{ppt}(\calt_1 \|_Y \calt_2).
\end{align*}
\end{proof}

\begin{prop}
\label{prop-equiv-ets}
For $\calt_1,\calt_2\in\textsf{\upshape EgTrStr}$, we have $\calt_1\equiv\calt_2 \Rightarrow \splitop(\calt_1)\equiv\splitop(\calt_2)$.
\end{prop}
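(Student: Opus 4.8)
The first thing to notice is that the symbol ``$\equiv$'' denotes two different relations in this statement: in the hypothesis it is equivalence of egalitarian transition structures, i.e.\ coincidence of the sets $\mathrm{atomic}(\cdot)$ and $\mathrm{criteria}(\cdot)$ (Definitions~\ref{atomic} and~\ref{criteria}), whereas in the conclusion it is the bijection-based equivalence of plain transition structures of Definition~\ref{ts-equiv}. The latter is at once reflexive, symmetric (invert the bijections $f$ and $F$) and transitive (compose them), so my plan is to exhibit, for every $\calt\in\textsf{EgTrStr}$, a single plain transition structure, equivalent to $\splitop(\calt)$, that depends on $\calt$ only through the two data $\mathrm{atomic}(\calt)$ and $\mathrm{criteria}(\calt)$. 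Once that is done, $\calt_1\equiv\calt_2$ makes those data coincide, so $\splitop(\calt_1)$ and $\splitop(\calt_2)$ are both equivalent to the same plain structure, and the conclusion follows by transitivity.

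The quickest way to carry this out is to view $\splitop$ as pushing the split through the composition tree. By Definition~\ref{split-ets} and a trivial induction, $\splitop(\calt)$ is obtained, following the parse tree of $\calt$, by iterated application of the binary \textsf{TrStr}-composition of Section~\ref{ts} to the fixed family $\{\splitop(\cala)\mid\cala\in\mathrm{atomic}(\calt)\}$, using along the way exactly the criteria recorded in $\mathrm{criteria}(\calt)$ --- each unordered pair oriented as the particular composition step needs it, which is legitimate since $\cals_1\|_Y\cals_2\equiv\cals_2\|_{Y^{-1}}\cals_1$ in \textsf{TrStr} and the preceding Lemma guarantees that the sets of properties always have the right types. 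Hence, when $\calt_1\equiv\calt_2$, the structures $\splitop(\calt_1)$ and $\splitop(\calt_2)$ are just two groupings and orderings of one and the same family of plain structures, distributing one and the same total set of criteria, and so they are equivalent by the remark following Definition~\ref{ts-equiv} (``different ways to group, reorder, and compose $n$ given structures, distributing and swapping the synchronization criteria accordingly, produce equivalent structures'').

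If one would rather not rely on that informal remark, I would instead introduce an explicit flat normal form. For a finite set $\mathcal A$ of atomic egalitarian transition structures and a set $\mathcal C$ of two-element sets of properties each straddling two distinct members of $\mathcal A$, let $N(\mathcal A,\mathcal C)$ be the plain transition structure whose states are the choice functions $s$ with $s(\cala)\in Q_{\cala}\union T_{\cala}$ for all $\cala\in\mathcal A$ that satisfy every criterion of $\mathcal C$ (undefinedness on either side being enough), whose adjacency has $s\to s'$ iff for each $\cala$ either $s(\cala)=s'(\cala)$ or $s(\cala)\to_{\cala}s'(\cala)$ while $s(\cala)\to_{\cala}s'(\cala)$ holds for at least one $\cala$, whose property set is $\biguplus_{\cala\in\mathcal A}P_{\cala}$ with a property of $\cala$ acting on $s$ by evaluation at $s(\cala)$, and whose initial state is $\cala\mapsto g^0_{\cala}$. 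This is exactly the $n$-fold iterate of the binary \textsf{TrStr}-composition, so $N(\{\cala\},\emptyset)\equiv\splitop(\cala)$. I would then prove $\splitop(\calt)\equiv N(\mathrm{atomic}(\calt),\mathrm{criteria}(\calt))$ by structural induction on $\calt$: the atomic case is the equivalence just noted; for $\calt=\calt_1\|_Y\calt_2$ one combines $\splitop(\calt)=\splitop(\calt_1)\|_Y\splitop(\calt_2)$ (Definition~\ref{split-ets}), the induction hypothesis, the fact that $\|$ on \textsf{TrStr} is a congruence for $\equiv$ (the product of the two state bijections together with the identity on properties preserves adjacency, property values and initial state, with $Y$ still correctly typed by the preceding Lemma), and the identity $N_1\|_Y N_2\equiv N(\mathrm{atomic}(\calt_1)\union\mathrm{atomic}(\calt_2),\widetilde Y\union\mathrm{criteria}(\calt_1)\union\mathrm{criteria}(\calt_2))$, which comes from gluing the two choice functions and checking that validity of the glued function splits exactly into the internal criteria of each side plus the $\widetilde Y$-criteria between them; by Definitions~\ref{atomic} and~\ref{criteria} the right-hand side is $N(\mathrm{atomic}(\calt),\mathrm{criteria}(\calt))$. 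Since $N(\mathrm{atomic}(\calt_1),\mathrm{criteria}(\calt_1))$ and $N(\mathrm{atomic}(\calt_2),\mathrm{criteria}(\calt_2))$ are literally the same structure when $\calt_1\equiv\calt_2$, transitivity closes the argument.

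The conceptual content is slight, and I do not expect a genuinely hard step; the work is the (routine but slightly fiddly) verification that regrouping a synchronous composition is harmless. The two points that need care are: (i) $\mathrm{criteria}$ stores \emph{unordered} pairs $\widetilde Y$ whereas the binary \textsf{TrStr}-composition wants an oriented relation $Y\subseteq P_1\times P_2$ --- this is harmless exactly because $\cals_1\|_Y\cals_2\equiv\cals_2\|_{Y^{-1}}\cals_1$, so each criterion may be oriented as the reassembly requires; and (ii) since distinct atomic components are distinct namespaces, each property and each criterion ``remembers'' the atomic component it lives in, so the disjoint unions $\biguplus_{\cala\in\mathrm{atomic}(\calt_1)}P_{\cala}$ and $\biguplus_{\cala\in\mathrm{atomic}(\calt_2)}P_{\cala}$ literally coincide when $\mathrm{atomic}(\calt_1)=\mathrm{atomic}(\calt_2)$ --- which is what makes $N(\mathrm{atomic}(\calt_1),\mathrm{criteria}(\calt_1))$ and $N(\mathrm{atomic}(\calt_2),\mathrm{criteria}(\calt_2))$ equal rather than merely isomorphic, and lets the property bijection $F$ be the identity throughout. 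Keeping this bookkeeping straight is the only real obstacle.
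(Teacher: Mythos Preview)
Your proposal is correct and follows essentially the same idea as the paper: both argue that $\splitop(\calt)$ depends, up to \textsf{TrStr}-equivalence, only on $\mathrm{atomic}(\calt)$ and $\mathrm{criteria}(\calt)$. The paper's proof is a two-sentence sketch (``the states of $\splitop(\calt_1)$ and those of $\splitop(\calt_2)$ are in bijection because both are given by different ways of making pairs with one stage from each component \dots\ restricted by the same criteria''), whereas you make this precise via the explicit flat normal form $N(\mathcal A,\mathcal C)$ and a structural induction; your version is more rigorous, and your attention to the orientation of criteria and the namespace issue fills in exactly the bookkeeping the paper leaves implicit.
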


\begin{proof}
As $\calt_1\equiv\calt_2$ implies $\mathrm{atomic}(\calt_1)=\mathrm{atomic}(\calt_1)$, either $\calt_1$ and $\calt_2$ are both atomic or none is. In the first case, we must have $\calt_1=\calt_2$, and the result is immediate. In the second case, the result follows easily from the definitions. For example, the states of $\splitop(\calt_1)$ and those of $\splitop(\calt_2)$ are in bijection because both are given by different ways of making pairs with one stage from each component in $\mathrm{atomic}(\calt_1)=\mathrm{atomic}(\calt_2)$, restricted by the same criteria $\mathrm{criteria}(\calt_1)=\mathrm{criteria}(\calt_2)$.
\end{proof}

The $\splitop$ operation can be extended in the usual way to work on the equivalence classes considered in previous sections. Thus, using the notation $[\|_{Y,I} \calt_i]$ to represent $\equiv$-equivalence classes, we have $\splitop([\|_{Y,I}\calt_i]) \subseteq [\|_{Y,I}\splitop(\calt_i)]$, as a restatement of Proposition~\ref{prop-equiv-ets}.

The states in $\splitop(\calt)$ can be seen as global states for the egalitarian structure $\calt$. Also, the adjacency relation in $\splitop(\calt)$ can be seen as representing global transitions of $\calt$ in the following precise sense.

\begin{prop}
Take $\calt=\|_{Y,I}\cala_i\in\textsf{\upshape EgTrStr}$ with each $\cala_i$ atomic, so that $\splitop(\calt) = \splitop(\|_{Y,I}\cala_i) = \|_{Y,I}\splitop(\cala_i)$. Let $g_i$ and $g'_i$ denote stages (not necessarily different) of $\cala_i$ for each $i\in I$. We have that:
\begin{itemize}
\item $\langle g_1,\dots,g_n\rangle$ is a state in $\splitop(\calt)$ iff $g_i$ and $g_j$ are compatible as stages of $\cala_i$ and $\cala_j$ for all $i,j\in I$, that is, the properties defined at them satisfy all the criteria in $Y$;
\item the step $\langle g_1,\dots,g_n\rangle \rightarrow \langle g'_1,\dots,g'_n\rangle$ is valid in $\splitop(\calt)$ iff (both are states in $\splitop(\calt)$, and) for each $i$, either $g_i=g'_i$ or $g_i\rightarrow g'_i$ is a valid step in $\cala_i$, with at least one instance of the latter.
\end{itemize}
\end{prop}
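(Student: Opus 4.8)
The plan is to reduce the statement to an unfolding of the definition of synchronous composition for plain transition structures, and then prove both bullets by induction on the way the $\cala_i$ are composed. First I would note that, by Definition~\ref{split-ets}, for an atomic $\cala_i=(Q_i,T_i,{\rightarrow_i},P_i,g_i^0)$ the structure $\splitop(\cala_i)$ has $Q_i\union T_i$ --- the set of stages of $\cala_i$ --- as its set of states, keeps $P_i$ as its properties, and keeps the bipartite relation ${\rightarrow_i}$ unchanged as its adjacency relation; so ``$g_i\rightarrow g'_i$ in $\cala_i$'' and ``$g_i\rightarrow g'_i$ in $\splitop(\cala_i)$'' say the same thing. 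Hence the claim is purely about the plain transition structure $\|_{Y,I}\splitop(\cala_i)$. Since both bullets only mention the atomic components, the total set of criteria $Y$, and tuples of stages, and since a $\equiv$-equivalence of plain structures is an isomorphism preserving states (up to bijection), adjacency and property values, it suffices to verify the statement for one chosen representative, i.e.\ one fixed iterated binary composition of the $\splitop(\cala_i)$ in which the criteria of $Y$ have been distributed among the binary steps in some legal way.

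For the first bullet I would induct on the structure of that composition expression. The base case, a single $\splitop(\cala_i)$, is immediate: its states are exactly the stages of $\cala_i$, with no criterion to satisfy. For the inductive step $\mathcal{U}_1\|_{Y'}\mathcal{U}_2$, the definition of $\|$ on \textsf{TrStr} says its states are the pairs of states of $\mathcal{U}_1$ and of $\mathcal{U}_2$ at which every criterion $(p,p')\in Y'$ holds (one side undefined, or the two values agree). By the induction hypothesis a state of $\mathcal{U}_j$ unfolds to a tuple of stages of the atomic components in the subtree $\mathcal{U}_j$ satisfying exactly the criteria attached inside that subtree; and, because the properties of a composite are the disjoint union of those of its components and a property's value at a composite state is its value at the corresponding atomic component's stage, each $(p,p')\in Y'$ translates into the pairwise requirement that $p$ and $p'$ agree, when both defined, at the stages of the two atomic components to which they belong. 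Collecting the contributions of all binary steps, a tuple $\langle g_1,\dots,g_n\rangle$ is a state of $\|_{Y,I}\splitop(\cala_i)$ iff every criterion in $Y$ (Definition~\ref{criteria}) is satisfied at the relevant pair of stages --- that is, iff $g_i$ and $g_j$ are compatible for all $i,j\in I$.

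For the second bullet I would use the same induction. In the base case there is nothing to prove. For $\mathcal{U}_1\|_{Y'}\mathcal{U}_2$, the definition of adjacency on \textsf{TrStr} says a step is valid iff both endpoints are states and each side either repeats its component or takes a step, with at least one side taking a step. By the induction hypothesis, a step of $\mathcal{U}_j$ means ``at least one atomic component in the $\mathcal{U}_j$-subtree steps, and every such component steps or stays'', while $\mathcal{U}_j$ repeating its state means ``every atomic component in that subtree stays''. Substituting, a step of $\mathcal{U}_1\|_{Y'}\mathcal{U}_2$ is exactly ``every atomic component steps or stays, and at least one steps''. Since membership of the endpoints in the state set was already characterised in the first bullet, this is precisely the asserted condition for $n$ components.

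The only delicate point is the bookkeeping in the inductive step of the first bullet: that criteria attached at an arbitrary internal node of the binary tree translate, after full unfolding, into constraints on the correct pair of atomic stages. This is guaranteed by the two structural facts ``$P_{\calt_1\|_Y\calt_2}=P_1\uplus P_2$'' and ``$p_\calt(\langle q_1,q_2\rangle)=p_{\calt_i}(q_i)$'' from the definition of $\|$ for \textsf{TrStr}, together with the observation --- supplied by Proposition~\ref{prop-equiv-ets} and the remark following it --- that the result does not depend on how $Y$ is distributed among the binary steps. Everything else is a direct unwinding of Definition~\ref{split-ets} and of the definition of $\|$ for \textsf{TrStr}.
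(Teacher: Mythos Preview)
Your proposal is correct and follows the same approach as the paper, which simply states ``Straightforward from the definitions of $\|$ and the split.'' You have spelled out that straightforward unfolding in full detail via structural induction on the composition tree, which is exactly what the one-line proof in the paper is gesturing at.
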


\begin{proof}
Straightforward from the definitions of $\|$ and the split.
\end{proof}

This opens interesting possibilities for verification. Very roughly, the idea is that proving $\cala_i\models\Phi_i$ for all $i$, where the $\Phi_i$ are temporal formulas, may be enough to prove $\splitop(\|_Y\cala_i)\models\Phi$, for some formula $\Phi$ built based on the $\Phi_i$ and $Y$. These issues are left for future work. See also Section~\ref{comp-verif}.

\subsection{Rewrite systems}

We are interested in three variants of rewrite systems. Atomic egalitarian rewrite systems are similar to standard ones, but rule labels are replaced by complex terms, with the same importance as state terms. Then, non-atomic egalitarian rewrite systems are sets of atomic ones interacting through given synchronisation criteria. Finally, plain rewrite systems are useful as a non-egalitarian but quite standard equivalent to egalitarian ones.

The exposition in this section reflects and, in part, repeats the one in Section~\ref{tx-str} concerning transition structures.

\subsubsection{Atomic egalitarian rewrite systems}
\label{aers}

We slightly modify the standard definition of rewrite systems from~\cite{Meseguer1992} to accommodate in it the kind of transitions we are interested in. We call the result \emph{atomic egalitarian rewrite systems} and denote their class by \textsf{\upshape atEgRwSys}. Non-atomic systems are defined below as the composition of atomic ones.

\label{atEgRwSys_def}An atomic egalitarian rewrite system is a tuple $\calr=(S,\le,\Sigma,E,M,R)$, where:
\begin{itemize}
\item $(S,\le,\Sigma)$ is an order-sorted signature. Concretely, $(S,\le)$ is a partial order of sorts and subsorts, and $\Sigma$ is a set of operator declarations of the form $f : s_1 \times \dots \times s_n \rightarrow s$ for $s_1,\dots,s_n,s$ sorts in $S$. The order-sorted signature $(S,\le,\Sigma)$ can be \emph{kind-completed} to $(\hat S,\hat\le,\hat\Sigma)$, by adding to it exactly the following:
\begin{itemize}
\item for each weakly connected component $k$ of $(S,\le)$, include in $\hat S$ an element, that we call $k$ as well, with $s\mathop{\hat\le}k$ for all sorts $s$ in $k$; we denote by $[s]$ the unique such an element with $s\mathop{\hat\le}[s]$, called the \emph{kind} of $s$;
\item for each operator $f : s_1 \times \dots \times s_n \rightarrow s$ in $\Sigma$, include in $\hat\Sigma$ an overloaded operator \emph{at the kind level} $f : [s_1] \times \dots \times [s_n] \rightarrow [s]$.
\end{itemize}
In the rest of the paper, we assume without mention that all the order-sorted signatures we use are already the result of a kind completion. The set $T_\Sigma$ of ground terms is defined in the usual way, as is the set $T_\Sigma(X)$ of terms over a given set of $S$-sorted variables, $X=\{X_s\}_{s\in S}$. Also, $T_{\Sigma,s}$ and $T_\Sigma(X)_s$ denote the sets of terms of sort $s$.

We need atomic rewrite systems to be \emph{topmost}, that is:
\begin{itemize}
\item there are two distinguished sorts in $S$, that we assume called $\texttt{State}$ and $\texttt{Trans}$, to represent states and transitions, respectively; for convenience, we also assume a supersort of both called $\texttt{Stage}$ whose elements we name \emph{stages};
\item the operators and signatures in $\Sigma$ are such that they do not allow a term of sort $\texttt{Stage}$ to include a proper subterm of this same sort.
\end{itemize}

\item $E$ is a set of (possibly conditional) equations. Each equation has the form ``$t=t' \rif C$'' or ``$t=t'$'', where $t,t'\in T_\Sigma(X)$ and $t$ and $t'$ are in the same kind, that is, $[\mathrm{ls}(t)]=[\mathrm{ls}(t')]$, where $\mathrm{ls}(u)$ denotes the \emph{least sort} of the term $u$ in $(S,\le)$. The optional condition $C$ is a conjunction of equational and membership conditions $C=\Big(\bigwedge_i t_i=t'_i\Big) \land \Big(\bigwedge_j t_j:s_j\Big)$, where, for all $i$ and $j$, we have $t_i,t'_i,t_j\in T_\Sigma(X)$, $s_j\in S$, $[\mathrm{ls}(t_i)]=[\mathrm{ls}(t'_i)]$, and $[\mathrm{ls}(t_j)]=[s_j]$. A membership condition ``$t_j:s_j$'' represents the requirement that the term $t_j$ has sort $s_j$.

\item $M$ is a set of (possibly conditional) membership axioms. Each membership axiom has the form ``$t:s \rif C$'' or ``$t:s$'', where $t\in T_\Sigma(X)$, $s\in S$, $[\mathrm{ls}(t)]=[s]$, and the optional condition $C$ is as described in the previous item. The axiom ``$t:s$'' states that the term $t$ has sort $s$.

\item $R$ is a set of (possibly conditional) rewrite rules. Each rewrite rule has the form ``$t \trans{\ell} t' \rif C$'' or ``$t \trans{\ell} t'$'', where $t,t'\in T_\Sigma(X)_\texttt{State}$, $\ell\in T_\Sigma(X)_\texttt{Trans}$, and the condition $C$ is as above. (We do not consider rewrite conditions for the time being.)
\end{itemize}

The transition term $\ell$ replaces the rule labels from standard rewrite systems. We write it in the middle of the rule, both in mathematical notation, as above, and in Maude-like code in the examples from now on. That $\ell$ is a term, and not an atomic identifier, is the main difference between standard and atomic egalitarian rewrite systems. Constructors for transition terms must be declared in $\Sigma$, the same as all other operators. Transition terms can be used in equations and in membership axioms as any other valid terms.

In all cases, unconditional instances can be considered as shortcuts for conditional ones with the condition $C$ being ``true'', which simplifies the presentation of what follows.

The choice of the transition term in a rule and, in particular, of the variables in it is a concern for the specifier, with semantic consequences. For a schematic illustration, consider the rule $t(x)\trans{\dots}t'(y)$, where $x$ and $y$ are the only variables in $t$ and $t'$, resp. We depict below the semantics for five possible transition terms. In each transition term, we show explicitly all variables involved. We assume that the variables $x$, $y$, and $z$ range over $\{0,1\}$.
\begin{center}
\begin{tikzpicture}[auto]
\node at (0.6, 4) {$t(x)\trans{\ell(x)}t'(y)$};
\node [state] (xt0) at (0, 3) {$t(0)$};
\node [state] (xt1) at (1.2, 3) {$t(1)$};
\node [trans] (xl0) at (0, 1.5) {$\ell(0)$};
\node [trans] (xl1) at (1.2, 1.5) {$\ell(1)$};
\node [state] (xt'0) at (0, 0) {$t'(0)$};
\node [state] (xt'1) at (1.2, 0) {$t'(1)$};
\draw [->] (xt0) edge (xl0);
\draw [->] (xt1) edge (xl1);
\draw [->] (xl0) edge (xt'0);
\draw [->] (xl0) edge (xt'1);
\draw [->] (xl1) edge (xt'0);
\draw [->] (xl1) edge (xt'1);

\node at (4.6, 4) {$t(x)\trans{\ell(y)}t'(y)$};
\node [state] (yt0) at (4, 3) {$t(0)$};
\node [state] (yt1) at (5.2, 3) {$t(1)$};
\node [trans] (yl0) at (4, 1.5) {$\ell(0)$};
\node [trans] (yl1) at (5.2, 1.5) {$\ell(1)$};
\node [state] (yt'0) at (4, 0) {$t'(0)$};
\node [state] (yt'1) at (5.2, 0) {$t'(1)$};
\draw [->] (yt0) edge (yl0);
\draw [->] (yt0) edge (yl1);
\draw [->] (yt1) edge (yl0);
\draw [->] (yt1) edge (yl1);
\draw [->] (yl0) edge (yt'0);
\draw [->] (yl1) edge (yt'1);

\node at (8.6, 4) {$t(x)\trans{\ell(z)}t'(y)$};
\node [state] (zt0) at (8, 3) {$t(0)$};
\node [state] (zt1) at (9.2, 3) {$t(1)$};
\node [trans] (zl0) at (8, 1.5) {$\ell(0)$};
\node [trans] (zl1) at (9.2, 1.5) {$\ell(1)$};
\node [state] (zt'0) at (8, 0) {$t'(0)$};
\node [state] (zt'1) at (9.2, 0) {$t'(1)$};
\draw [->] (zt0) edge (zl0);
\draw [->] (zt0) edge (zl1);
\draw [->] (zt1) edge (zl0);
\draw [->] (zt1) edge (zl1);
\draw [->] (zl0) edge (zt'0);
\draw [->] (zl0) edge (zt'1);
\draw [->] (zl1) edge (zt'0);
\draw [->] (zl1) edge (zt'1);
\end{tikzpicture}
\end{center}
\vspace{5mm}
\begin{center}
\begin{tikzpicture}[auto]
\node at (0.6,4) {$t(x)\trans{\ell}t'(y)$};
\node [state] (t0) at (0, 3) {$t(0)$};
\node [state] (t1) at (1.2, 3) {$t(1)$};
\node [trans] (l) at (0.6, 1.5) {$\ell$};
\node [state] (t'0) at (0, 0) {$t'(0)$};
\node [state] (t'1) at (1.2, 0) {$t'(1)$};
\draw [->] (t0) edge (l);
\draw [->] (t1) edge (l);
\draw [->] (l) edge (t'0);
\draw [->] (l) edge (t'1);

\node at (6.15, 4) {$t(x)\trans{\ell(x,y)}t'(y)$};
\node [state] (xyt0) at (5.5, 3) {$t(0)$};
\node [state] (xyt1) at (6.8, 3) {$t(1)$};
\node [trans] (xyl00) at (4.2, 1.5) {$\ell(0,0)$};
\node [trans] (xyl01) at (5.5, 1.5) {$\ell(0,1)$};
\node [trans] (xyl10) at (6.8, 1.5) {$\ell(1,0)$};
\node [trans] (xyl11) at (8.1, 1.5) {$\ell(1,1)$};
\node [state] (xyt'0) at (5.5, 0) {$t'(0)$};
\node [state] (xyt'1) at (6.8, 0) {$t'(1)$};
\draw [->] (xyt0) edge (xyl00);
\draw [->] (xyt0) edge (xyl01);
\draw [->] (xyt1) edge (xyl10);
\draw [->] (xyt1) edge (xyl11);
\draw [->] (xyl00) edge (xyt'0);
\draw [->] (xyl01) edge (xyt'1);
\draw [->] (xyl10) edge (xyt'0);
\draw [->] (xyl11) edge (xyt'1);
\end{tikzpicture}
\end{center}
Focusing only on states, abstracting transitions away, the result is in all cases the same: from any of $t(0)$ or $t(1)$, in one step, any of $t'(0)$ or $t'(1)$ can be reached. However, the importance of transitions in modelling systems is one of the ideas put forward in this work. It is specially discussed in Sections~\ref{egalitarianism} and~\ref{what-is-a-transition}.

The rewrite relation induced by an atomic egalitarian rewrite system is somewhat tricky. At first sight, the natural definition may seem to be this one, involving three terms at once:

\begin{quoting}
$u\stackrel{v}{\rightarrow}_\calr u'$ holds for terms $u,v,u'\in T_\Sigma$ iff there are a rule ``$t \trans{\ell} t' \rif C$'' in $R$, and a substitution $\theta$ such that $\calr\models C\theta$, $u =_E t\theta$, $v =_E \ell\theta$, and $u' =_E t'\theta$ (where the relation $=_E$ is the one induced by the equations in $E$).
\end{quoting}

This definition, however, is not appropriate, for reasons introduced above: first, that would allow information to be passed from the origin to the destination states that is not present in the transition, as in the rule $x\trans{a}x$ for constant $a$ and variable $x$ (see also Section~\ref{what-is-a-transition}); second, these \emph{instantaneous} transitions would make it difficult to synchronise states with transitions, as we sometimes need (see Section~\ref{sync-states-trans}); third, we loose the symmetry between the roles played by states and transitions required by egalitarianism.

Instead, we propose that each instance of a rule gives rise to two \emph{half rewrites}. We need to devote a few paragraphs to discuss this point.

\begin{defn}[half-rewrite relation]
\label{defn:half-rewrite}
Given an atomic egalitarian rewrite system $\calr=(S,\le,\Sigma,E,M,R)$ and two terms $u,v\in T_\Sigma$, we define the \emph{half-rewrite} relation $u\rightarrow^\mathrm{eg}_\calr v$ to hold if either:
\begin{itemize}
\item there are a rule ``$t \trans{\ell} t' \rif C$'' in $R$, and a substitution $\theta_1 : \varsop(\ell,t,t',C) \rightarrow T_\Sigma$ such that $\calr\models C\theta_1$, and $u =_E t\theta_1$, and $v =_E \ell\theta_1$; or
\item there are a rule ``$t \trans{\ell} t' \rif C$'' in $R$, and a substitution $\theta_2 : \varsop(\ell,t,t',C) \rightarrow T_\Sigma$ such that $\calr\models C\theta_2$, and $u =_E \ell\theta_2$, and $v =_E t'\theta_2$.
\end{itemize}
\end{defn}

The first half rewrite can be used to take the system from a state to a transition; the second, from a transition to a state. Both ornaments on the arrow, $^\mathrm{eg}$ and $_\calr$, are usually omitted when they are implied from the context.

Different substitutions can be used for each half rewrite. The substitutions $\theta_1$ and $\theta_2$, when used for two consecutive half rewrites, necessarily agree on $\varsop(\ell)$, but other variables can get different instantiations. This formalizes the fact that, as soon as a transition starts executing, it forgets which particular state it came from. Thus, no data can be passed from the origin state to the destination state, unless it is present in the transition term as well. The scope of variables is limited to half rewrites. A consequence is that the rule ``$t(x) \trans{\ell} t'(x)$'', where $x\not\in\varsop(\ell)$, is valid, but its semantics may not be the one expected: the two occurrences of $x$ are interpreted as different variables, and it may well be that $\theta_1(x)\ne\theta_2(x)$. Indeed, that rule is equivalent to ``$t(x) \trans{\ell} t'(y)$'' (whose semantics was depicted above). This is counter-intuitive, so we prefer that our specifications do not contain such rules. We say that a rule is \emph{readable} when all occurrences of a variable can be interpreted as the same variable.

\begin{defn}[readability]
Consider an atomic egalitarian rewrite system $\calr=(S,\le,\Sigma,E,M,R)$ and a rule in it ``$t \trans{\ell} t' \rif C$''. We say that the rule is \emph{readable} iff for every two consecutive half rewrites $u\rightarrow v$ and $v\rightarrow w$ performed using that rule with substitutions $\theta_1$ and $\theta_2$, respectively, the same two half rewrites can be obtained by using the same substitution for both. An atomic egalitarian rewrite system is \emph{readable} if all its rules are.
\end{defn}

According to this definition, the rule ``$x \trans{a} x$'', for constant $a$ and integer variable $x$, is not readable, because it allows the two half rewrites $3\rightarrow a\rightarrow 4$, that cannot be obtained with a single substitution for $x$. In contrast, the rule ``$x \trans{a} z$'' is readable, as is ``$x \trans{x} x$''. The rule ``$x \trans{y} z \rif x=z$'', for variables $x$, $y$, and $z$, is not readable. The similar ``$x \trans{y} z \rif x=y \land x=z$'' is indeed readable, though we would prefer the equivalent ``$x \trans{y} z \rif x=y \land y=z$'', that does not directly relate $x$ and $z$.

Fortunately, every atomic egalitarian system can be transformed into an equivalent readable one, as shown below. In practice, we prefer readable systems, but our theory treats them all the same, either readable or not. We do not require that a readable rule is always used with the same substitution for two consecutive half rewrites, but just that it \emph{could} be used that way. The point is that some variables in a readable rule are only important for a half rewrite, and can be given any values for the other. For example, the readable rule ``$x \trans{a} z$'' allows rewriting $4\rightarrow a\rightarrow 5$ with $\theta_1(x)=4$ and $\theta_2(y)=5$, choosing whatever values for $\theta_1(y)$ and $\theta_2(x)$, in particular, it is valid to choose 5 and 4, to make $\theta_1=\theta_2$.

\begin{prop}
Every atomic egalitarian rewrite system $\calr=(S,\le,\Sigma,E,M,R)$ can be transformed into a readable one $\calr'=(S,\le,\Sigma,E,M,R')$ (where only the set of rules has changed) that induces the same half-rewrite relation.
\end{prop}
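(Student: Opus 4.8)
The plan is to give an explicit rule-by-rule rewriting of $R$ that severs any ``information flow'' from the left-hand side of a rule to its right-hand side except through the transition term. Replace each rule $\rho$ of the form ``$t \trans{\ell} t' \rif C$'' by the rule
\[ t \trans{\tau} \overline{t'} \rif (\tau = \ell \land C) \land (\tau = \overline\ell \land \overline C), \]
where $\tau$ is a fresh variable of sort \texttt{Trans} and $\overline{\,\cdot\,}$ is a renaming that sends every variable occurring in $\ell$, $t'$, or $C$ to a fresh variable of the same sort, disjoint from $\tau$ and from the original variables. Let $R'$ be the set of these rules and $\calr'=(S,\le,\Sigma,E,M,R')$. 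Only the rules have changed, as the statement requires: $\tau$ and the renamed variables are plain variables (no new operators, so $\Sigma$, $E$, $M$ are untouched, and the equational theory of $\calr$ and $\calr'$ coincides), and $(\tau=\ell\land C)\land(\tau=\overline\ell\land\overline C)$ is again a conjunction of equational and membership conditions with both sides of each equation in the same kind.

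The first main step is to show that $\calr'$ induces the same half-rewrite relation as $\calr$. For a rule $\sigma$ of the form ``$t_\sigma \trans{\ell_\sigma} t'_\sigma \rif C_\sigma$'', let $\mathrm{In}(\sigma)$ be the set of pairs $(u,v)$ realized as a \emph{first} half rewrite by $\sigma$ (i.e.\ $u=_E t_\sigma\theta$, $v=_E\ell_\sigma\theta$, and $\calr\models C_\sigma\theta$ for some $\theta$) and $\mathrm{Out}(\sigma)$ the set realized as a \emph{second} half rewrite. By Definition~\ref{defn:half-rewrite}, ${\rightarrow^{\mathrm{eg}}_\calr}=\bigcup_{\sigma}(\mathrm{In}(\sigma)\cup\mathrm{Out}(\sigma))$, so it suffices to prove $\mathrm{In}(\rho')=\mathrm{In}(\rho)$ and $\mathrm{Out}(\rho')=\mathrm{Out}(\rho)$ for every $\rho\in R$. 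The inclusions ``$\subseteq$'' are routine: from a substitution witnessing a half rewrite of $\rho'$, restrict it to the original (resp.\ barred) variables and use the conjunct $\tau=\ell\land C$ (resp.\ $\tau=\overline\ell\land\overline C$) together with $v=_E\tau\theta$ to recover a witness for $\rho$. For ``$\supseteq$'' one starts from a witness $\theta$ for $\rho$, sets $\tau\mapsto\ell\theta$, transports $\theta$ onto the barred variables through the renaming, and fills in any still-unconstrained variables of $\rho'$ with arbitrary ground terms of the appropriate sorts; one then checks that both conjuncts of the new condition hold, the extra one being satisfiable precisely because the barred block is fresh and carries a faithful copy of $\theta$.

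The second main step is readability of each $\rho'$. Take two consecutive half rewrites $u\rightarrow v$ and $v\rightarrow w$ performed with $\rho'$ using substitutions $\theta_1$ and $\theta_2$; necessarily the first is a ``state $\to$ transition'' step and the second a ``transition $\to$ state'' step. The variables of $\rho'$ split into three pairwise disjoint blocks: the \emph{original} variables, which occur in $t$ and in the conjunct $\tau=\ell\land C$ but not in $\overline{t'}$; the \emph{barred} variables, which occur in $\overline{t'}$ and in $\tau=\overline\ell\land\overline C$ but not in $t$; and the lone variable $\tau$, on which $\theta_1$ and $\theta_2$ agree up to $=_E$ (both equal $v$ there). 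Hence $\theta_1$ on the original block and $\theta_2$ on the barred block glue into a single substitution $\theta$ (with $\tau\theta=\tau\theta_1$, say); since $t$ and the first conjunct see only the original block and $\tau$, while $\overline{t'}$ and the second conjunct see only the barred block and $\tau$, transitivity of $=_E$ gives that $\theta$ realizes both half rewrites at once, so $\rho'$ is readable.

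The genuinely delicate part is the ``$\supseteq$'' direction of the first step: one has to be scrupulous about which variables of $\rho'$ the new condition actually constrains and which it leaves free, and verify that assigning arbitrary values to the free ones (which presupposes the customary non-emptiness of sorts) cannot falsify a conjunct. A related subtlety worth spelling out is \emph{why} $\ell$ is renamed in the second conjunct rather than reused: if some variable of $\ell$ also occurs in $C$, reusing $\ell$ would re-couple the two copies of $C$ and could shrink $\mathrm{In}(\rho')$ or $\mathrm{Out}(\rho')$ below $\mathrm{In}(\rho)$, $\mathrm{Out}(\rho)$, breaking semantic preservation; the same remark explains why the transition term of $\rho'$ is taken to be the fresh $\tau$ rather than $\ell$ itself. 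By contrast, once the three-block decomposition is in place, the readability step is pure bookkeeping.
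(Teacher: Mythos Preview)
Your proof is correct and follows the same underlying idea as the paper---decouple the left and right halves of a rule so that they share only the transition term, and duplicate the condition so that each half carries its own copy---but the concrete construction differs. The paper proceeds in two stages: first it renames so that $\varsop(t)$, $\varsop(\ell)$, $\varsop(t')$ become pairwise disjoint (pushing the lost equalities into new conditions), and then it duplicates $C$ into $C\land C'$, renaming in $C$ the variables outside $\varsop(t,\ell)$ and in $C'$ those outside $\varsop(\ell,t')$; the transition term remains the (renamed) $\ell$, and the two halves are linked through $\varsop(\ell)$. You instead collapse the transition term to a single fresh variable $\tau$ and push the original $\ell$ entirely into the condition, obtaining the three-block structure (original, barred, $\{\tau\}$) in one shot.

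Your variant has a small technical advantage worth noting. In the paper's argument, the final gluing $\theta=\theta_1|_{\varsop(C)}\cup\theta_2|_{\varsop(C')}$ relies on $\theta_1$ and $\theta_2$ ``coinciding on $\varsop(\ell)$'', but consecutive half rewrites only guarantee $\ell\theta_1 =_E \ell\theta_2$, which need not force pointwise agreement on the individual variables of $\ell$ in the presence of nontrivial equations. Your construction sidesteps this entirely: the only shared variable is $\tau$, and you never need $\tau\theta_1=\tau\theta_2$, only $\tau\theta_1 =_E v =_E \tau\theta_2$, which you then thread through by transitivity of $=_E$ when checking the second conjunct under the glued $\theta$. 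Conversely, the paper's version keeps the shape of the transition term intact, which is closer in spirit to the original rule. Your caution about filling unconstrained variables is harmless but in fact vacuous in the $\supseteq$ direction: every variable of $\rho'$ is already hit by $\theta$, $\tau\mapsto\ell\theta$, or the transport $\overline{x}\mapsto x\theta$.
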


\begin{proof}
We will show that each rule ``$t \trans{\ell} t' \rif C$'' in $R$ can be transformed into a readable one. The first step in the transformation is the following: for each variable, say $x$, in $\varsop(t)\inters\varsop(\ell)$, we choose two fresh variables, say $x_t$ and $x_\ell$, and replace $t$ by $t[x_t/x]$ and $\ell$ by $\ell[x_\ell/x]$, and add a new condition $x_\ell=x_t$ to the set of conditions $C$. We perform the same renaming for variables shared by $\ell$ and $t'$, or by $t$ and $t'$, or by all the three of them. The result is that the sets of variables $\varsop(t)$, $\varsop(\ell)$, and $\varsop(t')$ are pairwise disjoint, and all relations among them are part of the rule conditions. For example, the rule ``$x \trans{a} x$'' is transformed into ``$x_t \trans{a} x_{t'} \rif x_t=x_{t'}$'' (which is not readable yet). The resulting rule is easily seen to induce the same half rewrites as the original one.

Assume, then, that $\varsop(t)$, $\varsop(\ell)$, and $\varsop(t')$ are pairwise disjoint in our rule. Next, we duplicate each condition in $C$. We denote the result by ``$t \trans{\ell} t' \rif C \land C'$'', where $C$ and $C'$ are identical lists of conditions. The intuitive idea is that $C'$ (after some additional work on it) will be only important for the second half rewrite, and $C$ only for the first. For each variable, say $x$, that is in $\varsop(C)\setminus\varsop(t,\ell)$, we choose a fresh variable, say $x'$, and replace $C$ by $C[x'/x]$. In the same way, for each variable, say $y$, that is in $\varsop(C')\setminus\varsop(\ell,t')$, we choose a fresh variable, say $y'$, and replace $C'$ by $C'[y'/y]$. We keep the names $C$ and $C'$ for the results of performing all these renamings. Some conditions may still be present in its exact same form in both resulting $C$ and $C'$, but this is not a problem for readability.

The new rule is semantically equivalent to the original one. Each half rewrite that was possible using the original rule and a substitution $\theta$ is still possible using the new rule and the extended substitution $\theta'$, that is defined on the new variables as they were in the original ones, that is, $\theta'(x')=\theta(x)$, if $x'$ is the new variable chosen to rename $x$. Also, if a half rewrite is possible according to the new rule with a substitution $\theta$, the same half rewrite was possible with the original rule and the corresponding substitution. Our example rule has become ``$x_t \trans{a} x_{t'} \rif x_t=x'_{t'} \land x'_t=x_{t'}$''.

The resulting rule is readable. The only variables that can be in $\varsop(C)\inters\varsop(C')$ are the ones in $\varsop(\ell)$. But, as noted before, $\theta_1$ and $\theta_2$, used for consecutive half rewrites, must coincide in $\varsop(\ell)$. Thus, we can define the common substitution $\theta = \theta_1|_{\varsop(C)} \union \theta_2|_{\varsop(C')}$.
\end{proof}

Another noteworthy consequence of our definition of the half-rewrite relation is that a transition term can be rewritten with a rule that is not the one that produced it. Consider an atomic egalitarian rewrite system with these two rules: $b \trans{a} c$ and $d \trans{a} e$, for constants $a$, $b$, $c$, $d$, and $e$ of appropriate sorts. They specify a single transition identified by the term $a$. The joint semantics is:
\begin{center}
\begin{tikzpicture}[auto]
\node [state] (b) at (0, 2) {$b$};
\node [state] (d) at (1.4, 2) {$d$};
\node [trans] (a) at (0.7, 1) {$a$};
\node [state] (c) at (0, 0) {$c$};
\node [state] (e) at (1.4, 0) {$e$};
\draw [->] (b) edge (a);
\draw [->] (d) edge (a);
\draw [->] (a) edge (c);
\draw [->] (a) edge (e);
\end{tikzpicture}
\end{center}
Once the transition $a$ is executing, it forgets where it came from. Half rewrites $b\rightarrow a\rightarrow e$ must be allowed. This may be acceptable, but one should think twice before using it in the practice of system specification.

Still one more consequence worth mentioning is that the initial term, that we identify as \verb!init!, can be a state term or a transition term. All this is in agreement with our egalitarian view of rewrite systems.

In a rule ``$t \trans{\ell} t' \rif C$'' such that $\varsop(t)\union\varsop(t')\union\varsop(C)\subseteq\varsop(\ell)$, a transition term univocally identifies one origin and one destination state terms. Its rewrite semantics is equivalent to that of the standard-setting rewrite rule ``$t \trans{\ell} t' \rif C$'' considering now $\ell$ as an unstructured atomic label. Transition terms, in this case, are proof terms. Thus, if we care to transform a standard rewrite system into a semantically equivalent atomic egalitarian one, all we need to do is to transform each rule label $\ell$ into an operator that takes as parameters all the variables in $\varsop(t)\union\varsop(t')\union\varsop(C)$ and produces a term of sort \verb"Trans".

A \emph{property} in an atomic system is any unary operator in $\Sigma$ equationally defined, partially or totally, on \verb"Stage" or some of its subsorts. Properties are used below to establish synchronisation criteria for compositions (see Section~\ref{ers}). An important ingredient of our view of compositionality is that properties are defined only at the stages where they are meaningful, and undefined properties pose no requirements for synchronisation. That is why we allow partially defined properties.

Rewriting logic is built on a base equational logic. As is usual since~\cite{Meseguer1992}, we have chosen \emph{membership equational logic}. This is apparent in our definition of atomic rewrite system in page~\pageref{atEgRwSys_def}, that includes membership axioms and membership conditions.

Membership equational logic allows partially defined properties (or  operators in general) by declaring them \emph{at the kind level}. All our properties have to be defined like this: $p : \texttt{Stage} \rightarrow [s]$. Sometimes, $p(t)$ has sort $s$, and we say it is defined at $t$; sometimes, $p(t)$ has just kind $[s]$ but no sort, which we interpret as it being undefined at this particular $t$. It is not necessary to make this explicit in the declaration of properties, however, because in a kind-completed signature each operator is automatically overloaded at the kind level, as already described.

(A different, standard way to represent partial functions is by adding a \emph{bottom} value to its codomain. Whether to use this approach or ours (representing undefined properties by kinded but not sorted expressions) is, to some extent, a matter of personal taste. But we find our approach more appealing both for theory and for implementation, for technical reasons that need not be discussed here.)

\subsubsection{Egalitarian rewrite systems}
\label{ers}

The class of \emph{egalitarian rewrite systems}, \textsf{EgRwSys}, is built up from atomic ones by performing synchronous compositions. Formally, we define egalitarian rewrite systems and, simultaneously, their sets of properties, by:
\begin{itemize}
\item if $\calr\in\textsf{\upshape atEgRwSys}$, then $\calr\in\textsf{\upshape EgRwSys}$, with its properties being its operators partially or totally defined on \verb"Stage" or some subsort;
\item given $\calr_1,\calr_2\in\textsf{\upshape EgRwSys}$, with respective sets of properties $P_1$ and $P_2$, and given a relation $Y\subseteq P_1\times P_2$ (the synchronisation criteria), the expression $\calr_1\|_Y\calr_2$ denotes a new element of \textsf{EgRwSys}, called the \emph{synchronous composition of $\calr_1$ and $\calr_2$ with respect to $Y$}, whose set of properties is defined to be $P_1\uplus P_2$.
\end{itemize}
The class \textsf{EgRwSys} is the smallest one obtained by applying these two rules.

All properties from $\calr_1$ and from $\calr_2$ are automatically properties of their synchronous composition. Ultimately, all properties are defined in atomic systems, and inherited by the composed systems in which they take part. When disambiguation is needed, we refer to a property by $p_{\calr_1}$, or by $p_{\cala}$ if $\cala$ is the atomic system on which the property was initially defined.

Each synchronisation criterion $(p_1,p_2)\in Y$ is interpreted (see below) as a required equational equivalence between the values of $p_1(g_1)$ and $p_2(g_2)$ (if defined) at the respective stages $g_1$ and $g_2$. This implies that some common equational infrastructure in the systems involved is needed. Suppose that, for $i=1,2$, the system $\cala_i$ is the atomic component of $\calr_i$ in which $p_i$ was originally defined, and declared as $p_i : \texttt{Stage}_{\cala_i} \rightarrow s_i$, for $s_i\in S_{\cala_i}$. We require and assume the existence of a membership equational theory $\cale=(S_\cale,\le_\cale,\Sigma_\cale,E_\cale,M_\cale)$, with $s_1,s_2\in S_\cale$ and $[s_1]_\cale=[s_2]_\cale$, which is a common subtheory of the membership equational theories of $\cala_1$ and of $\cala_2$. That is, $\cale$ is component-wise included and protected in both membership equational theories of $\cala_1$ and of $\cala_2$. The word \emph{protected} has here the same meaning as in Maude: the set of ground terms is preserved, that is, for each $s\in S_\cale$, we have $T_{\Sigma_\cale,s}=T_{\Sigma_{\cala_1},s}=T_{\Sigma_{\cala_2},s}$.
If $s=\text{lub}(s_1,s_2)$ is the least common supersort of $s_1$ and $s_2$ in $\cale$, we refer to $\cale$ as \emph{the common theory of $s$}. It is in this theory that the equality of $p_1(g_1)$ and $p_2(g_2)$ is to be checked for synchronisation, namely in the form $p_1(g_1)\!\downarrow_{\cala_1} \;=_\cale\; p_2(g_2)\!\downarrow_{\cala_2}$ (down arrows representing reduction to normal forms, as usual). This is only needed when $p_1(g_1)$ and $p_2(g_2)$ are both defined, that is, they have sorts $s_1$ and $s_2$ respectively, both in $S_\cale$, so that the equational equivalence check makes sense. All this is used a few lines below.

Strictly speaking, a non-atomic rewrite system is \emph{not} a rewrite system, in the traditional sense of inducing a rewrite relation between terms. A non-atomic rewrite system models a distributed system and, thus, its behaviour can only be described based on the interaction between the behaviours of, ultimately, its atomic subcomponents. The behaviour of an atomic component of an egalitarian rewrite system is restricted by the other components according to the synchronization criteria.

The definition of the set of atomic components and the total set of criteria for an egalitarian rewrite system are almost identical to the ones given for transition structures in Section~\ref{ets} (Definitions~\ref{atomic} and~\ref{criteria}), and need not be repeated here. Like there, the relation $\calr_1\equiv\calr_2$ is defined by $\mathrm{atomic}(\calr_1)=\mathrm{atomic}(\calr_2)$ and $\mathrm{criteria}(\calr_1)=\mathrm{criteria}(\calr_2)$, and is interpreted as $\calr_1$ and $\calr_2$ representing the same behaviour, that is, being equivalent models of the same system. We allow expressions like ${\|_Y}_{i=1}^n\calr_i$, or $\|_{Y,I}\calr_i$ for some index set $I$, or just $\|_Y\calr_i$ if indexes are implied, to stand for any representative of a $\equiv$-equivalence class.

\subsubsection{Plain rewrite systems}
\label{rs}

What we call \emph{plain rewrite systems} are, in essence, traditional rewrite systems whose rules have no labels. These are non-egalitarian---transitions do not play any special role. The reason we introduce these is that, as we show in the next section, egalitarian rewrite systems can be translated into plain ones, with some benefits.

In the following definition we avoid repeating what has already been said in Section~\ref{aers} when defining atomic egalitarian systems. A plain rewrite system is a tuple $(S,{\le},\Sigma,E,M,R)$, where:
\begin{itemize}
\item $(S,\le,\Sigma)$ is an order-sorted signature, assumed kind-complete.

\item $E$ is a set of (possibly conditional) equations, each of the form ``$t=t' \rif C$'', with $C=\Big(\bigwedge_i t_i=t'_i\Big) \land \Big(\bigwedge_j t_j:s_j\Big)$.

\item $M$ is a set of (possibly conditional) membership axioms, each of the form ``$t:s \rif C$'', with $C$ as above.

\item $R$ is a set of (possibly conditional) unlabelled rewrite rules. Each rewrite rule has the form ``$t \rightarrow t' \rif C$'', where $t,t'\in T_\Sigma(X)$, and $[\mathrm{ls}(t)]=[\mathrm{ls}(t')]$, and the condition $C$ is as above.
\end{itemize}
The class of plain rewrite systems is denoted by \textsf{RwSys}.

The mandatory absence of labels in rules is the main difference between standard rewrite systems and our plain ones. It has the effect of making transitions identifiable just by their origin and destination states, thus merging all edges between two given states into one.

The (one-step) rewrite relation $u\rightarrow_\calr u'$ in a plain rewrite system is as in the standard case, that is, $u\rightarrow_\calr u'$ holds for terms $u,u'\in T_\Sigma$ iff there are a rule ``$t \rightarrow t' \rif C$'' in $R$, a term $w\in T_\Sigma$, a position $p$ in $w$, and a substitution $\theta$ such that $\calr\models C\theta$, $u =_E w=w[t\theta]_p$, and $u' =_E w[t'\theta]_p$.

We assume that the distinguished sort of the terms we are interested in rewriting is called \verb"State". The initial state is assumed to be called \verb!init!. A \emph{property}, in this context, is any operator in $\Sigma$ partially or totally defined on \verb"State" or some subsort of it. Properties are used to establish synchronisation criteria for compositions.

The operation of synchronous composition for plain rewrite systems is only defined for operands that satisfy certain requirements. Given two plain rewrite systems, $\calr_1$ and $\calr_2$, with respective sets of properties $P_1$ and $P_2$, their synchronous composition with respect to the synchronisation criteria $Y\subseteq P_1\times P_2$, is only defined if (i) $\calr_1$ and $\calr_2$ are topmost and (ii) the properties used in $Y$ are totally defined, each in its respective system. We justify these points.

In rewriting logic a rule $a \rightarrow a'$ can be used to rewrite any term that contains $a$. For instance, the term $f(a,d)$ is rewritten to $f(a',d)$ using that rule. As detailed below, the synchronous composition uses the components' rules to produce composed ones. If a system includes the rule $a \rightarrow a'$ and another system includes $b \rightarrow b'$, then their composition includes the rule $\langle a,b\rangle \rightarrow \langle a',b'\rangle$ (with some conditions), to be used on composed states. But, while $f(a,d)$ can be rewritten by $a \rightarrow a'$, no term of the form $\langle f(a,d),f(b,e)\rangle$ can be rewritten by the composed rule $\langle a,b\rangle \rightarrow \langle a',b'\rangle$, because $\langle a,b\rangle$ is not a subterm of $\langle f(a,d),f(b,e)\rangle$. The way to ensure that rewrites are preserved by composition is asking the systems to be topmost, that is, that rewrites are possible only at the top of state terms.

Syntactic conditions equivalent to topmostness are given, for example, in~\cite{MeseguerT2007}. They are similar to the ones we required for atomic egalitarian rewrite systems at the beginning of Section~\ref{aers}. That same paper shows results on \emph{completing} rules so as to transform a rewrite system into an equivalent topmost one. The idea is to substitute rule $a \rightarrow a'$ by $f(a,x) \rightarrow f(a',x)$ and any other that could be needed for other contexts in which $a$ can occur. Many interesting rewrite systems are amenable to completion in this way. Our view is different. As explained in Section~\ref{concurrency}, instead of completing rules, we prefer to decompose systems until each atomic system is topmost.

The reason to require that all properties used in $Y$ be totally defined is the following. Properties undefined on a particular state pose no conditions for synchronisation. If, for example, the synchronization criteria is $Y=\{(p_1,p_2)\}$, and $p_1(q_1)$ is undefined at a given $q_1\in\texttt{State}_{\calr_1}$, then $q_1$ is compatible with any $q_2\in\texttt{State}_{\calr_2}$, and $\langle q_1, q_2\rangle$ has sort $\texttt{State}_{\calr_1\|_Y\calr_2}$. We run into problems when we need to express this as a valid membership condition. Undefinedness of a term, in membership equational logic, is represented by terms having kind but not sort. While positive sort or kind assertions like $q:[\texttt{State}]$ are valid, negative assertions like $\neg(q:\texttt{State})$ are not valid as conditions in membership equational logic (because every membership axiom is positive, only positive conclusions can be reached).

There are some ways to overcome or mitigate this limitation. The language Maude includes built-in, extra-logical features to check whether \emph{or not} a term has a sort, and that solves the problem in practice. Without leaving the strictness of rewriting logic, it is not uncommon that the cases where a property must be left undefined are identifiable and expressible as valid conditions. This opens some possibilities. One of them is to create for each state $q$ for which the property $p : \texttt{State} \rightarrow [s]$ is undefined as many states as values are in the universe of the sort $s$. For example, if $s$ is \verb!Bool!, we would replace $q$ by $q_\texttt{true}$ and $q_\texttt{false}$. We have to duplicate (or, in general, replicate) all rules and equations needed so that each of the new states has the same behaviour as the old one had. Then, we define $p(q_b)=b$, making $p$ total. States at the other system can always synchronise with the appropriate $q_b$. This procedure can result in very complicated specifications if applied literally for each partially defined property. In our experience, undefinedness is not often needed, so we hope this is not such a serious constraint.

We define the synchronous composition of two topmost plain rewrite systems, $\calr_1$ and $\calr_2$, with respect to a synchronisation criteria $Y$ that uses only totally defined properties. It is denoted by $\calr_1\|_Y\calr_2$ and is defined to be a new plain rewrite system $\calr=(S,{\le},\Sigma,E,M,R)$ that happens to be topmost as well. The set of rules $R$ is described below. The rest of the elements of $\calr$ are defined as the almost disjoint (see below) union of the respective components of $\calr_1$ and $\calr_2$ (that is, $S=S_{\calr_1}\union S_{\calr_2}$, and so on), except for the following:
\begin{itemize}
\item There is in $S$ a new sort $\texttt{State}$ $(=\texttt{State}_\calr$) and a constructor $\langle\_,\_\rangle : \texttt{State}_{\calr_1} \times \texttt{State}_{\calr_2} \rightarrow [\texttt{State}]$.

\item There is a new constant \verb!init! of sort \verb"State" and an equation $\texttt{init} = \langle\texttt{init}_{\calr_1},\texttt{init}_{\calr_2}\rangle$.

\item For each $(p_1,p_2)\in Y$, with $p_i : \texttt{State}_{\calr_i} \rightarrow [s_i]$, we assume, as we did in Section~\ref{ers}, and for the same reasons, the existence of a common equational theory of $\text{lub}(s_1,s_2)$, subtheory of both $\calr_1$ and $\calr_2$. (This is where the unions are not disjoint.)

\item We add the following membership axiom, formalizing compatibility of states: 
\[\langle q_1,q_2\rangle : \texttt{State} \rif q_1:\texttt{State}_{\calr_1}\\
     \;\land\; q_2:\texttt{State}_{\calr_2}\\
     \;\land\; \bigwedge_{(p_1,p_2)\in Y} p_1(q_1)=p_2(q_2).\]

\item For each property $p : \texttt{State}_{\calr_i} \rightarrow [s_i]$ defined in $\calr_i$, for $i=1$ or $2$, there is in $\Sigma$ a declaration $p : \texttt{State} \rightarrow [s_i]$ and in $E$ an equation \emph{exporting} the property: ``$p(\langle q_1,q_2\rangle)=p(q_i) \rif \langle q_1,q_2\rangle : \texttt{State}$''.
\end{itemize}

The set of rules $R$ is built in the following way. For each pair of rules ``$q_1\rightarrow q'_1 \rif C_1$'' from $R_1$ and ``$q_2\rightarrow q'_2 \rif C_2$'' from $R_2$, there are three rules in $R$, namely:
\begin{itemize}
\item $\langle q_1,q_2\rangle\rightarrow\langle q'_1,q'_2\rangle \rif C_1 \land C_2 \land \langle q_1,q_2\rangle:\texttt{State} \land \langle q'_1,q'_2\rangle:\texttt{State}$,
\item $\langle q_1,q_2\rangle\rightarrow\langle q'_1,q_2\rangle \rif C_1 \land \langle q_1,q_2\rangle:\texttt{State} \land \langle q'_1,q_2\rangle:\texttt{State}$,
\item $\langle q_1,q_2\rangle\rightarrow\langle q_1,q'_2\rangle \rif C_2 \land \langle q_1,q_2\rangle:\texttt{State} \land \langle q_1,q'_2\rangle:\texttt{State}$.
\end{itemize}

We remind the reader that we assume each system to be its own namespace (with the exception of the common equational subtheories mentioned in Section~\ref{ers}). It is in this sense that we speak about (almost) disjoint unions in previous lines. In particular, when we join together equations coming from $\calr_1$ and from $\calr_2$, their effects do not overlap.

As in previous sections, some different ways to compose plain rewrite systems are intuitively seen to be equivalent. We denote by $\Sigma|_\texttt{State}$ the subset of the operators in $\Sigma$ that take a single argument of sort $\texttt{State}$, and by $T_{\Sigma/E,\texttt{State}}$, as usual, the set of terms of sort $\texttt{State}$ modulo equations.

\begin{defn}
Given two plain rewrite systems $\calr_i\in\textsf{\upshape RwSys}$, for $i=1,2$, we say that they are \emph{equivalent}, and denote it by $\calr_1\equiv\calr_2$, if there exist two bijections, $f:T_{\Sigma_1,\texttt{State}} \rightarrow T_{\Sigma_2,\texttt{State}}$ and $F:\Sigma_1|_\texttt{State} \rightarrow \Sigma_2|_\texttt{State}$, such that, for all $q_1,q'_1\in T_{\Sigma_1,\texttt{State}}$ and all $p_1\in\Sigma_i|_\texttt{State}$:
\begin{itemize}
\item the rewrite relation is preserved: $q_1\rightarrow_{\calr_1} q'_1 \Leftrightarrow f(q_1)\rightarrow_{\calr_2} f(q'_1)$;
\item result sorts for properties are the same: if $p_1$ is declared in $\calr_1$ as $p_1 : \texttt{State} \rightarrow s$, for a sort $s\in S_1$, then also $s\in S_2$ and $F(p_1)$ is declared in $\calr_2$ as $F(p_1) : \texttt{State} \rightarrow s$, and $T_{\Sigma_1/E_1,s} = T_{\Sigma_2/E_2,s}$;
\item values of properties are preserved: $p_1(q_1)=F(p_1)(f(q_1))$;
\item initial states are preserved: $f(\texttt{init})=\texttt{init}$.
\end{itemize} 
\end{defn}

This definition closely resembles Definition~\ref{ts-equiv}. As in previous sections, we allow expressions like ${\|_Y}_{i=1}^n\calr_i$, or $\|_{Y,I}\calr_i$ for some index set $I$, or just $\|_Y\calr_i$ if indexes are implied. Also, we denote by $\langle q_1,\dots,q_n\rangle$ a typical element of $\|_Y\calr_i$.

Although the operator ``$\|$'' for \textsf{RwSys} reflects the one for \textsf{EgRwSys}, they are of a different nature: $\calr_1\|_Y\calr_2$ is an irreducible expression in \textsf{EgRwSys}, but can be reduced in \textsf{RwSys} to a single rewrite system as already described.

In the next section we show how plain rewrite systems are useful as a more standard equivalent to egalitarian ones. Plain systems also have the advantage of being always single systems, even when they are the result of a composition. If one does not care about egalitarianism, plain rewrite systems can be taken as a basis for discussing non-egalitarian compositional specification.

\subsubsection{The split}
\label{split-ers}

The \emph{split} is a translation of rewrite systems: $\splitop : \textsf{\upshape EgRwSys}\rightarrow\textsf{\upshape RwSys}$. Translating an atomic system into a plain one is straightforward: given $\calr=(S,{\le},\Sigma,E,M,R)\in\textsf{\upshape atEgRwSys}$, its split is $\splitop(\calr)=(S',{\le},\Sigma,E,M,R')\in\textsf{RwSys}$, where $S'$ is the result of renaming in $S$ the sort \verb"State" to \verb"State'", and \verb"Stage" to \verb"State" (with the only aim of getting the top sort still being called \verb"State"), and $R'$ is the result of splitting each rule ``$t \trans{\ell} t' \rif C$'' in $R$ to produce the two rules ``$t \rightarrow \ell \rif C$'' and ``$\ell \rightarrow t' \rif C$'' in $R'$. (This rule splitting is, by the way, the reason for choosing the name \emph{split} for all similar operations in this paper.)

For a non-atomic system $\calr_1\|_Y\calr_2$, its \emph{split} is recursively defined by $\splitop(\calr_1\|_Y\calr_2) = \splitop(\calr_1)\|_Y\splitop(\calr_2)$. With this definition, it can be proved inductively (similar to Section~\ref{tr-split}) that the sets of properties for $\calr$ and for $\splitop(\calr)$ are the same for any $\calr\in\textsf{\upshape EgRwSys}$ and, thus, that the same relation $Y$ can be used in both sides of the definition. But the composition of plain systems in the right-hand side of this equation has to satisfy the two provisos: that the properties in $Y$ are totally defined and that the operands are topmost. Topmostness, in particular, deserves a few lines. We have not defined when an atomic egalitarian rewrite system is topmost, but the definition for plain systems can be readily adapted: when rewrites are only possible for complete state terms and only through complete transition terms. With this definition, it is easily seen that $\calr\in\textsf{\upshape atEgRwSys}$ is topmost iff $\splitop(\calr)\in\textsf{\upshape RwSys}$ is. Also easy is that the composition of topmost plain rewrite systems is topmost. Thus, repeated application of ``$\|$'' is allowed, and we are entitled to write $\splitop(\|_{Y,I}\calr_i)=\|_{Y,I}\splitop(\calr_i)$.

This definition of the $\splitop$ operation is theoretically sound, but can produce a large number of rules in the resulting system, so it is not suited for an eventual implementation. This is illustrated with the example at the end of Section~\ref{simple-example}.

As we observed in Section~\ref{tr-split}, the states and rewrites in $\splitop(\calr)$ can be taken as global states and global rewrites for the egalitarian system $\calr$.

\begin{prop}
For $\calr_1,\calr_2\in\textsf{\upshape EgRwSys}$, we have $\calr_1\equiv\calr_2 \Rightarrow \splitop(\calr_1)\equiv\splitop(\calr_2)$.
\end{prop}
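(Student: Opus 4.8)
The plan is to follow the proof of Proposition~\ref{prop-equiv-ets}, the exact counterpart of this statement for transition structures. As a preliminary I would observe that the relation $\equiv$ on \textsf{RwSys} is an equivalence relation: reflexivity is witnessed by the identity bijections, while symmetry and transitivity hold because inverses and composites of bijections are bijections and the four preservation clauses of its definition are stable under inversion and composition. I would also dispatch the well-definedness caveat at once: $\calr_1\equiv\calr_2$ forces $\mathrm{atomic}(\calr_1)=\mathrm{atomic}(\calr_2)$ and $\mathrm{criteria}(\calr_1)=\mathrm{criteria}(\calr_2)$ (the rewrite-system analogues of Definitions~\ref{atomic} and~\ref{criteria}), and the two provisos that make the plain compositions inside $\splitop$ defined --- topmost operands and totally defined properties in the criteria --- are conditions on the atomic components and on the two-element criteria only, so they hold for $\calr_1$ iff for $\calr_2$, and we may assume both splits exist. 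Finally, since $\mathrm{atomic}(\calr)$ is a singleton precisely when $\calr$ is atomic, $\calr_1$ and $\calr_2$ are either both atomic or both non-atomic; in the atomic case $\{\calr_1\}=\mathrm{atomic}(\calr_1)=\mathrm{atomic}(\calr_2)=\{\calr_2\}$ gives $\calr_1=\calr_2$, whence $\splitop(\calr_1)=\splitop(\calr_2)$ and there is nothing more to prove.

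For the non-atomic case, write $\mathrm{atomic}(\calr_1)=\mathrm{atomic}(\calr_2)=\{\cala_1,\dots,\cala_n\}$ and $\mathrm{criteria}(\calr_1)=\mathrm{criteria}(\calr_2)=\widetilde C$. Unfolding the recursive clause $\splitop(\calr\,\|_Y\,\calr')=\splitop(\calr)\,\|_Y\,\splitop(\calr')$ along the composition tree of $\calr_j$ exhibits $\splitop(\calr_j)$ --- which, being a \textsf{RwSys} expression, reduces to a single plain system --- as an iterated plain composition of the leaves $\splitop(\cala_1),\dots,\splitop(\cala_n)$ with the node-sets $Y$ carried over unchanged, so that the two-element criteria occurring along that tree are exactly those of $\mathrm{criteria}(\calr_j)$, namely $\widetilde C$. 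Invoking the \textsf{RwSys} analogue of the remark made right after Definition~\ref{ts-equiv} --- that different ways of grouping, reordering and redistributing the criteria among $n$ fixed systems produce $\equiv$-equivalent plain systems --- both $\splitop(\calr_1)$ and $\splitop(\calr_2)$ are $\equiv$-equivalent to a common representative $\|_{\widetilde C}\,\splitop(\cala_i)$, and transitivity of $\equiv$ yields $\splitop(\calr_1)\equiv\splitop(\calr_2)$.

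The one step with real content, which I would spell out carefully, is the regrouping claim just used; it can equally well be applied directly to $\calr_1$ and $\calr_2$, giving a self-contained argument. One exhibits bijections $f$ and $F$ between $\splitop(\calr_1)$ and $\splitop(\calr_2)$: $f$ reassociates and permutes the nested tuples $\langle\dots\rangle$ that form the state terms of a split composition --- the valid such terms being exactly the tuples $\langle g_1,\dots,g_n\rangle$ of stages of the $\cala_i$ at which every criterion of $\widetilde C$ is satisfied, a description independent of the bracketing --- while $F$ is the identity on the underlying atomic properties. The four clauses of the \textsf{RwSys} equivalence then follow by inspection: the one-step rewrite relation is preserved because a composed rewrite advances one or more leaf components through their own rules with at least one genuine rewrite, a condition symmetric under regrouping; result sorts of properties coincide; property values are preserved because $p_{\cala_i}$ merely reads off the $i$-th component regardless of grouping; and the initial states $\langle\texttt{init}_{\cala_1},\dots,\texttt{init}_{\cala_n}\rangle$ agree up to reassociation. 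Everything else is the routine bookkeeping that the definitions already make mechanical, exactly as in Proposition~\ref{prop-equiv-ets}.
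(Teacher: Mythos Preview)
Your proposal is correct and follows essentially the same approach as the paper: split into the atomic case (where $\calr_1=\calr_2$) and the non-atomic case, then argue that the state terms, rewrites, properties, and initial states of the two splits are in bijection because both are assembled from the same atomic components under the same criteria, only grouped differently. You are considerably more explicit than the paper---adding the well-definedness check for the provisos, the observation that $\equiv$ on \textsf{RwSys} is an equivalence, and a concrete description of the bijections $f$ and $F$---but the underlying argument is the same.
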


\begin{proof}
As $\calr_1\equiv\calr_2$ implies $\mathrm{atomic}(\calr_1)=\mathrm{atomic}(\calr_2)$, either $\calr_1$ and $\calr_2$ are both atomic or none is. In the first case, indeed, $\calr_1=\calr_2$, and the result is immediate. In the second case, the result follows easily from the definitions. For example, the terms of sort $\texttt{State}$ in $\splitop(\calr_1)$ and those in $\splitop(\calr_2)$ are in bijection because both are built by applying tuple operators to the same set of \texttt{State} sorts from atomic components in different ways, and restricted by the same membership axioms, stemming from $\mathrm{criteria}(\calr_1)=\mathrm{criteria}(\calr_2)$.
\end{proof}

\subsection{Semantics}
\label{sem}

\begin{defn}
\label{sem-aers}
Given an atomic egalitarian rewrite system $\calr=(S,{\le},\Sigma,E,M,R)\in\textsf{\upshape atEgRwSys}$, we define its associated atomic egalitarian transition structure $\semop(\calr)=(Q,T,\rightarrow,P,g^0)\in\textsf{\upshape atEgTrStr}$ by:
\begin{itemize}
\item $Q$ is the set of ground terms of sort \verb"State" in $\calr$ modulo $E$, that is, $T_{\Sigma/E,\texttt{State}}$;
\item $T$ is the set of ground terms of sort \verb"Trans" in $\calr$ modulo $E$, that is, $T_{\Sigma/E,\texttt{Trans}}$;
\item $\rightarrow$ is the half-rewrite relation $\rightarrow^{eg}_\calr$ induced by $R$ according to Definition~\ref{defn:half-rewrite};
\item $P$ is $\Sigma|_\texttt{State}$ (the set of operators in $\Sigma$ whose domain is \verb"Stage" or one of its subsorts);
\item $g^0=[\texttt{init}]_{E}$.
\end{itemize}
\end{defn}

\begin{defn}
\label{sem-ers}
Given a non-atomic egalitarian rewrite system $\calr_1\|_Y\calr_2\in\textsf{\upshape EgRwSys}$, we define its associated non-atomic egalitarian transition structure by $\semop(\calr_1\|_Y\calr_2)=\semop(\calr_1)\|_Y\semop(\calr_2)\in\textsf{\upshape EgTrStr}$.
\end{defn}
Note that the same $Y$ can be used on both sides.

\begin{defn}
\label{sem-rs}
Given a plain rewrite system $\calr=(S,{\le},\Sigma,E,M,R)\in\textsf{\upshape RwSys}$, we define its associated plain transition structure $\semop(\calr)=(Q,\rightarrow,P,g^0)\in\textsf{\upshape TrStr}$ by:
\begin{itemize}
\item $Q$ is the set of ground terms of sort \verb"State" in $\calr$ modulo $E$, that is, $T_{\Sigma/E,\texttt{State}}$;
\item $\rightarrow$ is the rewrite relation $\rightarrow_\calr$ induced by $R$ according to Section~\ref{rs};
\item $P$ is $\Sigma|_\texttt{State}$;
\item $g^0=[\texttt{init}]_{E}$.
\end{itemize}
\end{defn}

\begin{prop}
\label{prop-sem-rs}
For plain rewrite systems $\calr_1,\calr_2\in\textsf{\upshape RwSys}$, each of them topmost with respective sets of totally defined properties $P_1$ and $P_2$, and for synchronisation criteria $Y \subseteq P_1\times P_2$, we have $\semop(\calr_1 \|_Y \calr_2) \equiv \semop(\calr_1) \|_Y \semop(\calr_2)$ (for the equivalence relation on \textsf{\upshape TrStr} from Definition~\ref{ts-equiv}).
\end{prop}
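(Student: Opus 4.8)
The plan is to exhibit the two bijections required by Definition~\ref{ts-equiv} and verify its four clauses, the adjacency clause being the only substantial one. Write $\calr=\calr_1\|_Y\calr_2$ for the composed plain rewrite system built in Section~\ref{rs}, and recall from that construction that $\calr$ is again topmost, that its sort $\texttt{State}$ has $\langle\_,\_\rangle$ as its only constructor (with $\texttt{init}$ equated to the tuple $\langle\texttt{init}_{\calr_1},\texttt{init}_{\calr_2}\rangle$), and that the operators of $\calr$ taking a single argument of sort $\texttt{State}$ are exactly the exported copies of the properties of $\calr_1$ and of $\calr_2$. Hence every ground term of sort $\texttt{State}$ in $\calr$ is $E$-equal to a tuple $\langle q_1,q_2\rangle$ with $q_i\in T_{\Sigma_{\calr_i},\texttt{State}}$, and such a tuple actually has sort $\texttt{State}$ (not merely kind $[\texttt{State}]$) exactly when the compatibility membership axiom fires, i.e.\ when $p_1(q_1)=_\cale p_2(q_2)$ for every $(p_1,p_2)\in Y$. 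Since the properties in $Y$ are totally defined---which is anyway a proviso for $\calr_1\|_Y\calr_2$ to be defined---this is precisely the condition cutting out the states of $\semop(\calr_1)\|_Y\semop(\calr_2)$, the ``$p_1(q_1)$ undefined or $p_2(q_2)$ undefined'' disjuncts in the composition of plain transition structures of Section~\ref{ts} being then vacuous. I would therefore take $f\colon[\langle q_1,q_2\rangle]_E\mapsto\langle[q_1]_{E_1},[q_2]_{E_2}\rangle$ and $F$ sending each exported property of $\calr$ to the property of the appropriate $\calr_i$ it was exported from.

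Next I would check that $f$ and $F$ are well-defined bijections and that the non-adjacency clauses hold. For $f$ the point is that $=_E$ decomposes over tuples---$\langle q_1,q_2\rangle=_E\langle q_1',q_2'\rangle$ iff $q_1=_{E_1}q_1'$ and $q_2=_{E_2}q_2'$---which holds because $E$ is the almost-disjoint union of $E_1$, $E_2$, the property-exporting equations, and the $\texttt{init}$ equation, none of which equates a tuple to a non-tuple or identifies previously distinct component classes, and because the shared subtheory $\cale$ is protected in both systems and so contributes no new ground identifications. For $F$, well-definedness and bijectivity are pure bookkeeping: $\Sigma_\calr|_\texttt{State}$ is the disjoint union of the exported copies of $P_1$ and $P_2$, matching $P_1\uplus P_2$ on the right. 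The codomain clause $C_p=C_{F(p)}$ holds because the declaration $p\colon\texttt{State}\to[s]$ is copied verbatim from $\calr_i$ and, the common theory of $s$ being protected, $T_{\Sigma_\calr/E,s}=T_{\Sigma_{\calr_i}/E_i,s}$. The value clause reduces to the exporting equation $p(\langle q_1,q_2\rangle)=p(q_i)$ of $\calr$, mirrored by the corresponding clause in the composition of plain transition structures, and it matches the right-hand side's value $p([q_i]_{E_i})$---including the case in which $p$ is partial and undefined at $q_i$. The initial-state clause is immediate since $\texttt{init}_\calr=_E\langle\texttt{init}_{\calr_1},\texttt{init}_{\calr_2}\rangle$.

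The heart of the proof is the adjacency clause: $[\langle q_1,q_2\rangle]_E\rightarrow_\calr[\langle q_1',q_2'\rangle]_E$ holds iff the pair of their $f$-images lies in the adjacency relation of $\semop(\calr_1)\|_Y\semop(\calr_2)$. Using topmostness of $\calr$, a rewrite of a term of sort $\texttt{State}$ must be a root rewrite by one of the three composed rule schemes of Section~\ref{rs}. For the synchronised scheme $\langle t_1,t_2\rangle\to\langle t_1',t_2'\rangle\rif C_1\land C_2\land\langle t_1,t_2\rangle\colon\texttt{State}\land\langle t_1',t_2'\rangle\colon\texttt{State}$ with a ground substitution $\theta$: matching $\langle q_1,q_2\rangle=_E\langle t_1,t_2\rangle\theta$ decomposes into $q_i=_{E_i}t_i\theta$ (and likewise for the primed terms); the side conditions $C_1\theta$ and $C_2\theta$, involving only the machinery of one component, hold in $\calr$ iff they hold in $\calr_i$, so by topmostness of $\calr_i$ they witness exactly $q_1\rightarrow_{\calr_1}q_1'$ and $q_2\rightarrow_{\calr_2}q_2'$; and the two membership conditions say exactly that source and target are valid composed states (which also shows a step out of a $\texttt{State}$ term lands in a $\texttt{State}$ term, so both sides are relations on the same carrier). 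This is the first disjunct of the adjacency relation. The left-only and right-only schemes yield, in the same fashion, the disjuncts ``$q_1\rightarrow_{\calr_1}q_1'$ and $q_2=q_2'$'' and its symmetric, and conversely every such step of $\semop(\calr_1)\|_Y\semop(\calr_2)$ is produced by exactly one of the three schemes. I expect the main obstacle to be the disciplined handling of the almost-disjoint union---confirming that $=_E$ really does split over tuples and that neither the side conditions nor the sort assertions attached to the composed rules gain or lose solutions on account of the shared, protected subtheory $\cale$---together with the fiddly but routine check that those attached membership conditions recover exactly the state-compatibility predicate used on the transition-structure side.
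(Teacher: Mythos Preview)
Your proposal is correct and follows the same approach as the paper---unfolding the definitions of $\semop$, of $\|_Y$ on \textsf{RwSys} and on \textsf{TrStr}, and checking the clauses of Definition~\ref{ts-equiv}---but you are considerably more thorough: the paper's proof is a two-sentence sketch (``easy application of the definitions'') that illustrates only the bijection on state sets and leaves the adjacency relation, the property bijection, and the well-definedness of $f$ over $=_E$ entirely implicit. Your explicit identification of the technical points the paper glosses over (that $=_E$ splits over tuples thanks to the almost-disjoint union and the protected common subtheory, and that topmostness forces root rewrites so the three rule schemes correspond exactly to the three disjuncts of the composed adjacency relation) is accurate and fills in precisely what the paper omits.
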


\begin{proof}
Easy application of the definitions. For example, the set of states for $\semop(\calr_1) \|_Y \semop(\calr_2)$ is the product $T_{\Sigma_1/E_1,\texttt{State}} \times T_{\Sigma_2/E_2,\texttt{State}}$ (that is, pairs of terms of sort $\texttt{State}$, one from each system) restricted by the conditions from $Y$. Meanwhile, the set of states for $\semop(\calr_1 \|_Y \calr_2)$ is $T_{\Sigma/E,\texttt{State}}$, where $\texttt{State}$ is defined by the constructor $\langle\_,\_\rangle : \texttt{State}_{\calr_1} \times \texttt{State}_{\calr_2} \rightarrow [\texttt{State}]$, and subject to the membership axioms drawn from $Y$. These sets can be easily put in bijection as required by Definition~\ref{ts-equiv}.
\end{proof}

As a way to validate our definitions of semantics, we state the following results.

\begin{prop}
For $\calr_1,\calr_2\in\textsf{\upshape EgRwSys}$, we have $\calr_1\equiv\calr_2 \Rightarrow \semop(\calr_1)\equiv\semop(\calr_2)$.
\end{prop}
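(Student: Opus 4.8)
The plan is to reduce the statement to two structural identities about $\semop$ and then invoke the definition of $\equiv$ on \textsf{\upshape EgTrStr}. Concretely, I would first establish, by induction on the construction of an arbitrary $\calr\in\textsf{\upshape EgRwSys}$, the auxiliary facts
\[
\mathrm{atomic}(\semop(\calr)) = \{\semop(\cala) \mid \cala\in\mathrm{atomic}(\calr)\}
\quad\text{and}\quad
\mathrm{criteria}(\semop(\calr)) = \mathrm{criteria}(\calr).
\]
In the base case $\calr\in\textsf{\upshape atEgRwSys}$, Definition~\ref{sem-aers} gives $\semop(\calr)\in\textsf{\upshape atEgTrStr}$, so $\mathrm{atomic}(\semop(\calr))=\{\semop(\calr)\}$ and $\mathrm{criteria}(\semop(\calr))=\emptyset$, which match $\mathrm{atomic}(\calr)=\{\calr\}$ and $\mathrm{criteria}(\calr)=\emptyset$. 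In the inductive case $\calr=\calr_1\|_Y\calr_2$, Definition~\ref{sem-ers} gives $\semop(\calr)=\semop(\calr_1)\|_Y\semop(\calr_2)$, and then the recursive clauses of $\mathrm{atomic}$ and $\mathrm{criteria}$ (the analogues for rewrite systems of Definitions~\ref{atomic} and~\ref{criteria}) applied on both sides, together with the two induction hypotheses, yield the identities directly: the union clause handles $\mathrm{atomic}$, and the $\widetilde{Y}\union\cdots$ clause handles $\mathrm{criteria}$.

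With these identities in place the proposition is immediate. Assuming $\calr_1\equiv\calr_2$ means $\mathrm{atomic}(\calr_1)=\mathrm{atomic}(\calr_2)$ and $\mathrm{criteria}(\calr_1)=\mathrm{criteria}(\calr_2)$. Applying $\semop$ elementwise to the first equality gives $\{\semop(\cala)\mid\cala\in\mathrm{atomic}(\calr_1)\}=\{\semop(\cala)\mid\cala\in\mathrm{atomic}(\calr_2)\}$, which by the first auxiliary fact is $\mathrm{atomic}(\semop(\calr_1))=\mathrm{atomic}(\semop(\calr_2))$; the second auxiliary fact turns the second equality into $\mathrm{criteria}(\semop(\calr_1))=\mathrm{criteria}(\semop(\calr_2))$. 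By the definition of $\equiv$ on \textsf{\upshape EgTrStr} this is exactly $\semop(\calr_1)\equiv\semop(\calr_2)$.

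I do not anticipate a real obstacle; this is a routine structural induction, entirely parallel to Proposition~\ref{prop-equiv-ets}. The one point deserving a careful sentence is the identification of the properties of an atomic rewrite system $\cala$ with the property set $P$ of $\semop(\cala)$: Definition~\ref{sem-aers} sets $P$ to be the operators of $\Sigma_{\cala}$ on \texttt{Stage}, which is literally the set of properties of $\cala$, so $\semop$ neither renames nor re-indexes properties. Consequently the set $\widetilde{Y}$ attached to a composition is the very same collection of two-element sets of operators whether it is read inside \textsf{\upshape EgRwSys} or inside \textsf{\upshape EgTrStr}, which is precisely what makes ``the same $Y$ on both sides'' of Definition~\ref{sem-ers} legitimate and what makes the criteria bookkeeping go through. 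A slightly slicker alternative would be to observe that $\semop$ is a homomorphism for the $\|_Y$-forming operation and therefore respects the congruence it generates, which is $\equiv$; but the explicit two-fact induction above is the shortest route.
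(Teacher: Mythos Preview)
Your proposal is correct and is essentially the explicit unpacking of the paper's own proof, which consists of the single sentence ``Trivial for both the atomic case and the composed, inductive case.'' Your two auxiliary identities about $\mathrm{atomic}(\semop(\calr))$ and $\mathrm{criteria}(\semop(\calr))$ are exactly what makes that triviality go through, and your remark on why the same $Y$ (hence the same $\widetilde{Y}$) appears on both sides is the one point the paper also flags, right after Definition~\ref{sem-ers}.
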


\begin{proof}
Trivial for both the atomic case and the composed, inductive case.
\end{proof}

\begin{prop}
For $\calr_1,\calr_2\in\textsf{\upshape RwSys}$, we have $\calr_1\equiv\calr_2 \Rightarrow \semop(\calr_1)\equiv\semop(\calr_2)$.
\end{prop}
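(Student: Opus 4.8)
The plan is to reuse the very bijections that witness $\calr_1 \equiv \calr_2$. By definition of $\equiv$ on \textsf{RwSys} there are bijections $f : T_{\Sigma_1,\texttt{State}} \to T_{\Sigma_2,\texttt{State}}$ and $F : \Sigma_1|_\texttt{State} \to \Sigma_2|_\texttt{State}$ preserving the rewrite relation, the result sorts of properties (together with the corresponding sets of ground terms modulo equations), the values of properties, and the initial state. I claim these are already, up to passing to $E$-equivalence classes, the bijections required by Definition~\ref{ts-equiv} for $\semop(\calr_1) \equiv \semop(\calr_2)$. Since a plain rewrite system is always a single system (no synchronous-composition expression survives, unlike in \textsf{EgRwSys}), there is no induction to do: the whole argument is a matter of unfolding Definition~\ref{sem-rs}.

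Concretely, I would first note that all the clauses defining $\equiv$ on \textsf{RwSys} are phrased modulo $E$ --- the rewrite relation $\rightarrow_{\calr_i}$ is defined up to $=_{E_i}$, and property values are compared in the relevant (common) equational theory --- so $f$ descends to a well-defined bijection $Q_1 = T_{\Sigma_1/E_1,\texttt{State}} \to T_{\Sigma_2/E_2,\texttt{State}} = Q_2$, which I keep calling $f$. By Definition~\ref{sem-rs} the state set of $\semop(\calr_i)$ is $T_{\Sigma_i/E_i,\texttt{State}}$, its adjacency relation is $\rightarrow_{\calr_i}$, its property set is $\Sigma_i|_\texttt{State}$, and its initial stage is $[\texttt{init}]_{E_i}$. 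Hence preservation of the adjacency relation in Definition~\ref{ts-equiv} is literally the clause ``$q_1 \rightarrow_{\calr_1} q'_1 \Leftrightarrow f(q_1) \rightarrow_{\calr_2} f(q'_1)$''; preservation of property values is literally ``$p_1(q_1) = F(p_1)(f(q_1))$''; and preservation of the initial stage is literally ``$f(\texttt{init}) = \texttt{init}$''. For the codomain clause $C_{p_1} = C_{F(p_1)}$, one first has to spell out that the codomain of a property $p : \texttt{State} \to [s]$ inside $\semop(\calr)$ is $T_{\Sigma/E,s}$; then the \textsf{RwSys} clause ``result sorts are the same and $T_{\Sigma_1/E_1,s} = T_{\Sigma_2/E_2,s}$'' gives exactly $C_{p_1} = C_{F(p_1)}$.

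The only point needing genuine care --- and it is minor --- is partiality: a property is a \emph{partial} function in a transition structure, and the \textsf{RwSys}-equivalence does not mention definedness explicitly. Here I would argue that $p_1$ is defined at $q_1$ (i.e.\ $p_1(q_1)$ has a sort $\le s$, not merely kind $[s]$) iff $F(p_1)$ is defined at $f(q_1)$: if $p_1(q_1) \in T_{\Sigma_1/E_1,s}$ then, since $p_1(q_1) = F(p_1)(f(q_1))$ and $T_{\Sigma_1/E_1,s} = T_{\Sigma_2/E_2,s}$, the term $F(p_1)(f(q_1))$ lies in $T_{\Sigma_2/E_2,s}$, i.e.\ it has sort $s$ in $\calr_2$; the converse is symmetric via $f^{-1}$ and $F^{-1}$. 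So $f$ and $F$ transport the domains of the partial functions correctly, and all four clauses of Definition~\ref{ts-equiv} hold. I expect this definedness bookkeeping, together with the explicit identification of $C_p$ with $T_{\Sigma/E,s}$, to be the only non-mechanical steps; everything else is reading off the definitions, which is presumably why the statement is billed as a mere validation of the semantics.
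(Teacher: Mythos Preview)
Your proposal is correct and follows exactly the approach the paper has in mind: the paper's own proof is the single sentence ``Easy thanks to the very similar-looking definitions of the two equivalence relations,'' and what you have written is precisely the unfolding of that sentence, transporting the witnessing bijections $f$ and $F$ from the \textsf{RwSys} equivalence to the \textsf{TrStr} equivalence via Definition~\ref{sem-rs}. Your extra care about partiality and about identifying $C_p$ with $T_{\Sigma/E,s}$ is more detail than the paper bothers to give, but it is in the same spirit and not a different route.
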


\begin{proof}
Easy thanks to the very similar-looking definitions of the two equivalence relations.
\end{proof}

\begin{thm}
\label{split-sem}
For every egalitarian rewrite system $\calr\in\textsf{\upshape EgRwSys}$ with totally defined properties and all whose components are topmost we have $\semop(\splitop(\calr)) \equiv \splitop(\semop(\calr))$ (for the equivalence relation on \textsf{\upshape TrStr} from Definition~\ref{ts-equiv}).
\end{thm}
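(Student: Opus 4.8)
I would prove the identity by structural induction on the way $\calr$ is built up as an element of \textsf{\upshape EgRwSys}, following the two clauses of that definition. Two preliminary observations keep the bookkeeping under control. First, the hypotheses descend to subcomponents: an atomic egalitarian rewrite system is topmost by definition, ``$p$ is totally defined'' is a statement about $p$ alone, and (Section~\ref{split-ers}) $\splitop$ sends a topmost system to a topmost one while preserving total definedness, with the composition of topmost plain systems again topmost; hence whenever the argument below needs Proposition~\ref{prop-sem-rs} or the plain-system composition, the relevant operands satisfy the two provisos. Second, both $\semop$ and $\splitop$ preserve the underlying set of properties (Definitions~\ref{sem-aers} and~\ref{sem-rs}, the lemma preceding Proposition~\ref{prop-equiv-ets}, and the remark in Section~\ref{split-ers}); consequently $\semop(\splitop(\calr'))$ and $\splitop(\semop(\calr'))$ carry \emph{the same} property set for every subexpression $\calr'$, the property bijection $F$ of Definition~\ref{ts-equiv} may always be taken to be the identity, and ``the same $Y$'' is meaningful on every composition that occurs. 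I would also record the easy corollary that $\equiv$ on \textsf{\upshape TrStr} is a congruence for $\|_Y$ with $Y$ fixed: given state/property bijections witnessing $\calt_i\equiv\calt_i'$, the map $\langle q_1,q_2\rangle\mapsto\langle f_1(q_1),f_2(q_2)\rangle$ is a bijection on the compatible pairs because the $f_i$ preserve property values, and it preserves adjacency, exported properties, and the initial state.

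For the base case, take $\calr=(S,{\le},\Sigma,E,M,R)\in\textsf{\upshape atEgRwSys}$. On the one hand, $\semop(\calr)=(Q,T,{\rightarrow^{eg}_\calr},P,g^0)$ with $Q=T_{\Sigma/E,\texttt{State}}$, $T=T_{\Sigma/E,\texttt{Trans}}$, $P$ the unary $\Sigma$-operators defined on \texttt{Stage} (or a subsort), and $g^0=[\texttt{init}]_E$, so that $\splitop(\semop(\calr))=(Q\union T,{\rightarrow^{eg}_\calr},P,g^0)$. On the other hand, $\splitop(\calr)$ renames \texttt{Stage} to \texttt{State} and replaces each rule $t\trans{\ell}t'$ by the pair $t\rightarrow\ell$ and $\ell\rightarrow t'$; thus $\semop(\splitop(\calr))$ has state set $T_{\Sigma/E,\texttt{Stage}}=Q\union T$, the same property set $P$, the same initial state $g^0$, and adjacency given by the one-step rewrite relation of $\splitop(\calr)$. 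Since $\calr$ is topmost, so is $\splitop(\calr)$, hence every rewrite step occurs at the top of a \texttt{Stage} term through one of the two kinds of split rule; matching $t\rightarrow\ell$ against the first clause and $\ell\rightarrow t'$ against the second clause of Definition~\ref{defn:half-rewrite} shows that this one-step relation, on ground terms modulo $E$, is exactly ${\rightarrow^{eg}_\calr}$. So the two plain transition structures coincide outright, and in particular are $\equiv$-equivalent via identity bijections.

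For the inductive step, with $\calr=\calr_1\|_Y\calr_2$, I would chain
\begin{align*}
\semop(\splitop(\calr_1\|_Y\calr_2))
 &= \semop\bigl(\splitop(\calr_1)\;\|_Y\;\splitop(\calr_2)\bigr)\\
 &\equiv \semop(\splitop(\calr_1))\;\|_Y\;\semop(\splitop(\calr_2))\\
 &\equiv \splitop(\semop(\calr_1))\;\|_Y\;\splitop(\semop(\calr_2))\\
 &= \splitop\bigl(\semop(\calr_1)\;\|_Y\;\semop(\calr_2)\bigr)\\
 &= \splitop(\semop(\calr_1\|_Y\calr_2)),
\end{align*}
where the first equality is the definition of $\splitop$ on \textsf{\upshape EgRwSys} (the property-preservation remark of Section~\ref{split-ers} licensing the common $Y$); the first $\equiv$ is Proposition~\ref{prop-sem-rs}, legitimate since $\splitop(\calr_1)$ and $\splitop(\calr_2)$ are topmost plain rewrite systems whose properties appearing in $Y$ are totally defined; the second $\equiv$ combines the induction hypotheses $\semop(\splitop(\calr_i))\equiv\splitop(\semop(\calr_i))$ with the congruence of $\equiv$ for $\|_Y$ noted above; the fourth line is Definition~\ref{split-ets}; and the last is Definition~\ref{sem-ers}. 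Transitivity of $\equiv$ then yields $\semop(\splitop(\calr))\equiv\splitop(\semop(\calr))$.

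The step I expect to demand the most care is not any single line of that chain but checking that they fit together: that at each ``$\|_Y$'' the operands have the property sets $Y$ expects, that Proposition~\ref{prop-sem-rs} is applied only to genuinely topmost plain systems with totally defined synchronisation properties (for which I lean on the closure facts of Section~\ref{split-ers}), and that the congruence of $\equiv$ under $\|_Y$ is indeed available — it is, precisely because, as noted at the outset, all the structures in play share one property set, so ``the same $Y$'' makes sense on both sides. The genuinely substantive ingredient, namely that composing plain rewrite systems and then taking semantics agrees (up to $\equiv$) with taking semantics and then composing, is already packaged in Proposition~\ref{prop-sem-rs}; once that is in hand, the theorem is a matter of assembling the pieces in the correct order, together with the base-case observation that splitting an atomic system and then reading off its semantics reproduces the half-rewrite relation verbatim.
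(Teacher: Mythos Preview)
Your proof is correct and follows essentially the same structural-induction approach as the paper's own proof, with the same base-case identification and the same five-step chain in the inductive case. You are in fact more careful than the paper at one point: the paper writes ``$=$'' at the induction-hypothesis step where only ``$\equiv$'' is available, silently relying on the congruence of $\equiv$ under $\|_Y$ that you take the trouble to state and justify.
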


This states the commutativity of one of the faces of the polyhedron in the introduction to Section~\ref{formal-definitions}. Indeed, this was the only face whose commutativity remained to be addressed.

\begin{proof}
We use structural induction on the shape of $\calr$. The base case is the atomic one. Let $\calr=(S,{\le},\Sigma,E,M,R)\in\textsf{\upshape atEgRwSys}$. From the definitions, it is not difficult to see that both $\splitop(\semop(\calr))$ and $\semop(\splitop(\calr))$ result in the plain transition structure
\[(T_{\Sigma/E,\texttt{Stage}}, {\rightarrow_\calr^\text{eg}}, \Sigma|_{\texttt{Stage}}, [\texttt{init}]_E).\]

For the composed case, $\calr=\calr_1\|_Y\calr_2$:
\begin{align*}
&\quad \semop(\splitop(\calr_1\|_Y\calr_2))\\
&= \qquad\textrm{(definition in Section~\ref{split-ers},
                  $\splitop$ for $\textsf{\upshape EgRwSys}$)}\\
&\quad \semop(\splitop(\calr_1)\|_Y\splitop(\calr_2))\\
&\equiv \qquad\textrm{(Proposition~\ref{prop-sem-rs})}\\
&\quad \semop(\splitop(\calr_1))\|_Y\semop(\splitop(\calr_2))\\
&= \qquad\textrm{(induction hypothesis)}\\
&\quad \splitop(\semop(\calr_1))\|_Y\splitop(\semop(\calr_2))\\
&= \qquad\textrm{(Definition~\ref{split-ets},
                  $\splitop$ for $\textsf{\upshape EgTrStr}$)}\\
&\quad \splitop(\semop(\calr_1)\|_Y\semop(\calr_2))\\
&= \qquad\textrm{(Definition~\ref{sem-ers},
                  $\semop$ for $\textsf{\upshape EgRwSys}$)}\\
&\quad \splitop(\semop(\calr_1\|_Y\calr_2)).
\end{align*}
\end{proof}

\subsection{The power of simple synchronisation}
\label{power}

Our mechanism for handling properties and synchronisation is quite simple: synchronisation is given by mere equality of properties, which are inherited untouched from a component to its composition. In contrast, other formalisms (some of which are mentioned in Section~\ref{related}) have complex channels for communication. And, in object-oriented technologies, for example, the interface for a container object may be quite different from the interfaces of the objects it contains or uses. Our concept of connector is that of a simple wire. A complex \emph{connector}, with some logic in it, has to be modelled as a new, intermediate component.

We expect that this choice for utmost simple interaction will make the compositional analysis of systems easier. The implicit danger is that such simplicity restricts the variety of systems that can be modelled. We show in this section two examples of how our simplicity is not a restriction.

\paragraph{Complex properties}
\label{complex}

In this example there are two components, $\cals_1$ and $\cals_2$, each providing a natural number by means of a property, $n_1$ and $n_2$, respectively. A different system $\cals'$ has a property $n'$, and we want something to the effect of $(\cals_1 \|_\emptyset \cals_2) \|_{\{(n_1+n_2,n')\}} \cals'$. In words: at each moment, only stages $g_1$, $g_2$, and $g'$ satisfying $n_1(g_1)+n_2(g_2)=n'(g')$ can be visited simultaneously.

This can be achieved by adding a third component able to perform addition. This component $\cals_+$ has to present three properties: $m_1$, $m_2$, and $s$, and has to guarantee that $s(g)=m_1(g)+m_2(g)$ at each stage $g$. For example, the system $\cals_+$ can be implemented with pairs of numbers as states and a single rule ``$(p,q) \trans{\ell} (p',q')$''. The properties are defined as $m_1((p,q))=p$, $m_2((p,q))=q$, and $s((p,q))=p+q$. The system $S_+$ has no restrictions about which state to visit next, but appropriate synchronisation makes the state of $S_+$ reflect the current values of $n_1$ and $n_2$ from the components. The composition we were looking for is $((\cals_1 \|_\emptyset \cals_2) \|_{\{(m_1,n_1),(m_2,n_2)\}} \cals_+) \|_{\{(s,n')\}} \cals'$. Or, if we prefer, $\|_{\{(m_1,n_1), (m_2,n_2), (s,n')\}} \{\cals_1, \cals_2, \cals_+, \cals'\}$.

\paragraph{Synchronising on relations other than equality}
\label{other-rel}

Suppose that we need to synchronise two systems not just by equality of the values of some properties, but on some other relation between them. For example, both properties are numbers and the relation is $<$. Or a property is a number, the other a set, and the relation is $\in$. This can be obtained by introducing an intermediate component.

More formally, suppose we have a system $\cals_1$ that defines a property $p_1$ with codomain $C_1$, and another $\cals_2$ that defines $p_2$ with codomain $C_2$, and we need to synchronise so that the values of the properties are related by a given relation $R\subseteq C_1\times C_2$. That is, something to the effect of $\cals_1 \|_{R(p_1,p_2)} \cals_2$. The intermediate component $\cals_R$ has a single rule ``$(p,q) \trans{\ell} (p',q') \rif R(p',q')$'', and properties $\pi_1((p,q))=p$ and $\pi_2((p,q))=q$. The composed system is $\|_{\{(p_1,\pi_1), (p_2,\pi_2)\}} \{\cals_1, \cals_2, \cals_R\}$.

\subsection{Executability}

Executability is one big benefit of rewriting logic as a system specification formalism. A specification that satisfies some requirements (see below) can be executed and model checked. This is in addition to being formal and, thus, amenable to formal analysis. If our proposal for synchronous composition is to provide some value, executability of composed systems is essential.

In general, executability of each atomic component in isolation may be neither necessary nor sufficient to ensure executability of the composed system. Not necessary, because a specification that is not executable by itself, may become so when restricted or enriched by interaction with the appropriate environment. Not sufficient, because also synchronization criteria have to be evaluated at runtime.

The rewriting relation between terms (that is, whether a term $t$ can be rewritten to $t'$ in a rewrite system) is undecidable for arbitrary sets of equations and rules. In the setting of standard rewriting logic, in~\cite{Meseguer2008}, for instance, conditions are stated on how to make that relation effectively decidable. A system that satisfies those conditions is called a \emph{computable system}. The three conditions, stated here in a rather simplistic way, are:
\begin{itemize}
\item equality modulo a set $A$ of equational axioms (like commutativity or associativity of certain operators) is decidable;
\item the set of equations is ground Church-Rosser and ground terminating (modulo the axioms $A$);
\item rules are ground coherent with respect to the equations (modulo the axioms $A$).
\end{itemize}
These conditions are easy to meet. Usually, the rewrite systems a sensible programmer would code are computable.

We can adapt these conditions to atomic egalitarian rewrite systems. Only the third one deals with rules and, thus, only it needs to be adapted. In the standard setting, coherence means that if a rewrite is possible from a state term $t$ to another $t'$, then from any term in the equational class of $t$ a rewrite is possible to a term in the equational class of $t'$. This allows an execution engine to work by, first, reducing the current state term to its normal form with respect to the equations and, then, rewriting from the normal form; see~\cite[Sect. 6.3]{ClavelDELMMT2007} for complete explanations. In our setting, transition terms can be reduced by equations, in the same way as state terms can. Thus, we need coherence in the two phases:
\begin{itemize}
\item if a transition with term $\ell$ can be fired from a state with term $t$, then from any state term in the equational class of $t$ a transition can be fired whose term is in the equational class of $\ell$;
\item if a transition with term $\ell$ can reach a state with term $t'$, then from any transition term in the equational class of $\ell$ a state can be reached whose term is in the equational class of $t'$.
\end{itemize}
These two complementary conditions are required for a system in \textsf{\upshape atEgRwSys} to be considered \emph{computable}.

For a non-atomic egalitarian rewrite system, we also need that the synchronization criteria be effectively checkable. But these are equalities in what we have called in Section~\ref{ers} the common equational theory for the sort of the values involved in the synchronization. As this common theory is included in each atomic component that needs it, its decidability is already implied by the conditions above. Thus, we can call a non-atomic rewrite system \emph{computable} if all its atomic components are. With this definition, a system is computable iff its split is.

That a system is computable in the sense above is a basic requirement towards executability. The practical implementation of an execution engine or a model checker is to be expected to need additional requirements. In the case of Maude's standard engine and model checker, there are a series of conditions to ensure that any variable in the right-hand side or the conditions of a rule can be instantiated at runtime (maybe in a non-deterministic way). These are called \emph{admissibility conditions} in~\cite{ClavelDELMMT2007}.

More recently, narrowing-based procedures have been developed for symbolic execution and model checking of Maude specifications; see~\cite{MeseguerT2007} for details. These procedures do not require the same admissibility conditions, but have their own limitations. For example, they cannot handle conditional rules.

In Section~\ref{other-rel}, we proposed using the rule ``$(p,q) \trans{\ell} (p',q') \rif R(p',q')$''. This involves fresh, unbound variables that Maude's execution engine is not able to handle (that is, even when split into two standard rules). And it is conditional, so the existing narrowing engine cannot be used either. We discuss some possible ways out in the rest of this section.

The language Maude allows \emph{matching conditions} in equations and in rules. They take the form \verb!if pattern := term!. Semantically, this is equivalent to the equational condition \verb!if pattern = term!, but it allows Maude's engine to instantiate variables appearing in the \verb!pattern! by matching against the reduced \verb!term!; see~\cite{ClavelDELMMT2007} for precise descriptions of what a pattern is and related explanations. In particular, matching conditions can be used to make Maude's engine choose a value non-deterministically from a set:
\begin{verbatim}
op FiniteSetOfAdmissibleValues : -> Set{Value} .
eq FiniteSetOfAdmissibleValues = ...
vars V V' : Value .
var Rest : Set{Value} .
crl V => V' if V' Rest := FiniteSetOfAdmissibleValues .
\end{verbatim}
Assuming the empty-syntax set constructor used in \verb!V' Rest! is commutative and associative, the value of \verb!V'! is chosen from \verb!FiniteSetOfAdmissibleValues! non-deterministically.

A model of a memory cell, for one example, must be ready to store any value \emph{within a range}, corresponding to the computer's word size. It is to be expected that models of real-world systems are often restricted in a similar way to a finite set of different states. Incorporating this knowledge into the model using matching conditions may turn the system into an executable one.

This idea is not always helpful, however. The system $\cals_+$ from Section~\ref{complex}, for example, must be able to provide a \emph{sum property} for whatever values it receives, and it must be usable in any environment. So, an \emph{a priori} finite range should not be used. But systems like this are not meant to be executed in isolation. In the composition $\|_{\dots}\{\cals_1, \cals_2, \cals_+, \cals'\}$, if we know that the values coming from $\cals_1$ and $\cals_2$ are within some finite range, that will give us a handle to the executability of the whole, composed system. Syntactically, it seems that a good way to represent this in code is extending the syntax for synchronization criteria to allow \emph{matching criteria}. Instead of writing $\|_{\{(m_1,n_1), (m_2,n_2), (s,n')\}} \{\cals_1, \cals_2, \cals_+, \cals'\}$, as we did above, we would write something like $\|_{\{m_1:=n_1,\, m_2:=n_2,\, n':=s\}} \{\cals_1, \cals_2, \cals_+, \cals'\}$. We can even expect this criteria to be translated into matching conditions by the split operation.  All this, however, concerns the implementation, that is not available as yet.

In addition to making the composed system executable, this idea of injecting knowledge from one component into a neighbouring one is also related to the assume-guarantee technique for compositional verification (also mentioned in Section~\ref{comp-verif}). That is, the assumption that \emph{input} values satisfy some restrictions, either in the form of temporal formulas or otherwise, can make a system amenable to both execution and verification.

\section{A simple but complete example}
\label{simple-example}

We show the specification of an example system, small but completely developed, with the aim of showing the way to think and use our tools for specification. We use the language Maude extended with some syntactic constructs for synchronous composition that we intend to implement in the near future.

A few declarations of sorts and operators are used often. So as to avoid repeating them, we assume they are included in a common module that is implicitly and silently imported whenever needed.
\begin{verbatim}
--- In the common module
   sorts State Trans Stage .
   subsorts State Trans < Stage .
   op init : -> Stage .
\end{verbatim}

Some changes are convenient in the implementation with respect to the theoretical description. Properties will not be functions in our Maude extension. Property sorts are created with this parameterised module:
\begin{verbatim}
fmod PPTY{X :: TRIV} is
   sort Ppty{X} .
   op _@_ : Ppty{X} Stage -> [X$Elt] .
endfm
\end{verbatim}

In this module, \verb"X" is the name of the formal parameter, and \verb"TRIV" is the name of a \emph{trivial} standard theory in Maude, that just requires the existence of a sort \verb!Elt! in the actual parameter. If we need, for example, to declare a Boolean-valued property \verb"B", we import \verb!PPTY{Bool}! and declare \verb!op B : -> Ppty{Bool}!. The instantiated module \verb!PPTY{Bool}! makes available the sort \verb!Ppty{Bool}! and also the infix evaluation operator \verb!_@_ : Ppty{Bool} Stage -> [Bool] .! Thus, to get the value of property \verb"B" at the stage \verb"G", we write the expression \verb"B @ G" (instead of \verb"B(G)"), which has kind \verb![Bool]!, in this case.

This example is inspired in one from~\cite{DeNicola1993}. Two trains move independently on the same linear railway divided in track sections. They both move rightwards, one track section at a time. The train on the left is not allowed to reach the one on the right, so that, when the trains are in consecutive track sections, the right one has to move.

\begin{center}
\begin{tikzpicture}[auto]
\draw [thick] (0,0) -- (8,0);
\draw [thick] (0,0.4) -- (8,0.4);
\draw [fill] (0.95,-0.1) rectangle (1.05,0.5);
\draw [fill] (2.95,-0.1) rectangle (3.05,0.5);
\draw [fill] (4.95,-0.1) rectangle (5.05,0.5);
\draw [fill] (6.95,-0.1) rectangle (7.05,0.5);
\node (train) at (2,0.53) {\scalebox{-1}[1]{\includegraphics[width=15mm]{train.pdf}}};
\node (train) at (4,0.53) {\scalebox{-1}[1]{\includegraphics[width=15mm]{train.pdf}}};
\draw [|->|,dashed] (4,-0.3) -- (6,-0.3);
\end{tikzpicture}
\end{center}

The two trains are modelled by identical modules. We leave the control of the movements to an external module, so that the model of each train needs not care about the existence of another train. Moreover, the behaviour of a train is likely to be the same no matter which track section it is sitting on, so we prefer that sections are not part of the model of each train, but of a different system that we call the \emph{reckoner}.

We structure the complete system in two levels. First, the two trains, \verb!LTRAIN! and \verb!RTRAIN!, and the \verb"RECKONER" are combined into a module called \verb"RECKONED-TRAINS"; then, in the second level, this system is combined with the \verb"CONTROLLER". Graphically:
\begin{center}
\begin{tikzpicture}[auto]
\draw (-1.5, -0.8) rectangle (4.5, 2.2);

\node [box] (ltrain) at (0, 1) {\verb"LTRAIN"};
\node [box] (rtrain) at (0, 0) {\verb"RTRAIN"};
\node [box] (reckoner) at (3, 0.5) {\verb"RECKONER"};
\node [box] (controller) at (7, 0.5) {\verb"CONTROLLER "};

\node (title) at (0.1, 1.9) {\verb"RECKONED-TRAINS"};

\draw [densely dotted, out=east, in=north] (ltrain) edge[)-(] (reckoner.north);
\draw [densely dotted, out=east, in=south] (rtrain) edge[)-(] (reckoner.south);
\draw [densely dotted, out=east, in=west] (reckoner.10) edge[)-] (4.5, 0.6);
\draw [densely dotted, out=east, in=west] (reckoner.-10) edge[)-] (4.5, 0.4);
\draw [densely dotted, out=west, in=south] (4.5, 0.2) edge[-(] (reckoner.south);
\draw [densely dotted, out=west, in=east] (controller.170) edge[)-(] (4.5, 0.6);
\draw [densely dotted, out=west, in=east] (controller.180) edge[)-(] (4.5, 0.4);
\draw [densely dotted, out=west, in=east] (controller.190) edge[)-(] (4.5, 0.2);
\end{tikzpicture}
\end{center}
Each system is represented as a box. Rounded connectors on the edges represent the properties each system provides. Dotted-line wires represent synchronisation. The reckoner, for example, needs to synchronise with the trains to know when each is moving. The controller sees the system \verb"RECKONED-TRAINS" as a black box that provides three properties to synchronise on. Inside the system \verb"RECKONED-TRAINS", we see that its three properties are originally defined as properties of the reckoner. This is another difference between the implementation and the theoretical description: although the internals of the \verb"RECKONED-TRAINS" module are not hidden, we prefer to pretend they are, and the \verb"CONTROLLER" behaves as if they were.

We proceed in a top-down fashion to translate the diagram into Maude. This is the top-level system:
\begin{verbatim}
mod CONTROLLED-TRAINS is
   pr RECKONED-TRAINS || CONTROLLER
   sync on RECKONED-TRAINS.areConsec = CONTROLLER.areConsec
        /\ RECKONED-TRAINS.isSomeMoving = CONTROLLER.doMove
        /\ RECKONED-TRAINS.isRMoving = CONTROLLER.doMoveR .
endm
\end{verbatim}
The keyword \verb"pr" is short for \emph{protecting}, a way to import modules in Maude. Synchronisation criteria are specified after the keywords \verb"sync on" (which is not standard Maude). They correspond to the set we called $Y$ above, and to dotted lines in the picture. The names of the properties are prefixed with the names of their modules, so as to avoid ambiguities. The six properties used in the synchronisation criteria (three from each system) are defined below.

This is the three-way composition that produces the system \verb"RECKONED-TRAINS", again assuming some properties to be defined below:
\begin{verbatim}
--- In module RECKONED-TRAINS
   pr LTRAIN || RTRAIN || RECKONER
   sync on LTRAIN.isMoving = RECKONER.isLMoving
        /\ RTRAIN.isMoving = RECKONER.isRMoving .
\end{verbatim}
This has to be complemented by the declaration and definition of the properties that have been used in the criteria for \verb"CONTROLLED-TRAINS":
\begin{verbatim}
--- Also in module RECKONED-TRAINS
   pr PPTY{Bool} .
   ops areConsec isSomeMoving isRMoving : -> Ppty{Bool} .
   var G : Stage .
   eq areConsec @ G = areConsec @ RECKONER(G) .
   eq isSomeMoving @ G = isSomeMoving @ RECKONER(G) .
   eq isRMoving @ G = isRMoving @ RECKONER(G) .
\end{verbatim}
These properties are defined for \verb"RECKONED-TRAINS" based on the properties of the \verb"RECKONER" component, following the design in the figure above. Instead of using the properties automatically inherited from the components, for clarity, we prefer to define explicitly new properties for the composed system, even if they are called the same. The design of our extension for the language Maude includes the existence of projection operators for the stage term with the same name as the components. Thus, \verb!areConsec @ RECKONER(G)! is the value of the property \verb!areConsec! at the \verb"RECKONER" part of the stage \verb"G". (We have argued before that the concept of global state is better avoided for distributed systems, so we would better interpret the use of the variable \verb"G" as convenient syntax.)

We have arrived at the level of the atomic components. The systems that model the two trains are identical, with a single state called \verb"stopped", a single transition called \verb"moving", and a single property called \verb!isMoving!, used to \emph{inform} the reckoner. The egalitarian rule is written with the transition term in the middle.
\begin{verbatim}
--- In modules LTRAIN and RTRAIN
   op stopped : -> State .
   op moving : -> Trans .
   rl stopped =[ moving ]=> stopped .

   pr PPTY{Bool} .
   op isMoving : -> Ppty{Bool} .
   eq isMoving @ moving = true .
   eq isMoving @ stopped = false .
\end{verbatim}
The property is defined as true at transitions and false at states. This may seem natural for a property called \verb!isMoving!, but it is not necessary. If the specification were more refined, for instance so as to allow a train to \emph{do something} while staying on the same track section, there would be a transition corresponding to that, and \verb!isMoving! would be false at that transition.

As for the \verb"RECKONER", it keeps in its state the distance (number of track sections) between the trains---an integer number. It has three rules, corresponding to the movement of each train and of both at the same time:
\begin{verbatim}
--- In module RECKONER
   subsort Int < State .
   ops lmoving|_ rmoving|_ 2moving|_ : Int -> Trans .
   var D : Int .
   rl D =[ lmoving | D ]=> D - 1 .
   rl D =[ rmoving | D ]=> D + 1 .
   rl D =[ 2moving | D ]=> D .
\end{verbatim}
The declaration \verb!subsort Int < State! is a way of saying that an integer is a state term by itself. Transitions with the first rule allow the distance to become zero, and even negative. This is just a reckoner, so it has nothing to say about it. The controller has to take care.

This is one instance in which the transitions have to \emph{remember} some value from the state, \verb"D" in this case; otherwise, each \verb"D" could be interpreted as a different variable (see Section~\ref{aers}). This is sometimes counter-intuitive. It is probably more intuitive to view \verb"D" as the \emph{context} in which the movement takes place, instead of being a parameter of the movement. We have chosen a syntax for transition terms that enforces this view: \verb!lmoving | D!.

The four properties of the \verb"RECKONER" are defined like this:
\begin{verbatim}
--- Also in module RECKONER
   pr PPTY{Bool} .
   ops areConsec isSomeMoving isLMoving isRMoving : -> Ppty{Bool} .
   var T : Trans .
   eq areConsec @ 1 = true .
   eq areConsec @ D = false [otherwise] .
   eq isSomeMoving @ D = false .
   eq isSomeMoving @ T = true .
   eq isLMoving @ (lmoving | D) = true .
   eq isLMoving @ (rmoving | D) = false .
   eq isLMoving @ (2moving | D) = true .
   eq isLMoving @ D = false .
   eq isRMoving @ (lmoving | D) = false .
   eq isRMoving @ (rmoving | D) = true .
   eq isRMoving @ (2moving | D) = true .
   eq isRMoving @ D = false .
\end{verbatim}
We do not define the value of \verb!areConsec! while the trains are moving. We assume that the controller (the ultimate user of this property) only needs to know its value when the trains are stopped. However, this is part of the interface (of \verb"RECKONER" and, then, of \verb"RECKONED-TRAINS") and we must be careful and consistent with our assumptions throughout the specification.

Only the controller is left. Its task is to ensure that, when the trains are in consecutive track sections, the right one moves, alone or otherwise. When the trains are not in consecutive sections, any one can move, or both. As shown above, we synchronise on three criteria: the first for the controller to be aware of consecutive-train situations; the second so that the controller can command some train to move; the third so that the controller can command the right train to move.

The word \emph{command} must be correctly understood. Synchronisation works in a symmetrical way: any system can be seen as commanding the other. Intuitively, in this case, the controller will set its property \verb"doMove" to true in some situations, and this will make mandatory for the trains to execute some action for which \verb!isSomeMoving! is true. This is what we dub as the controller commanding the trains.

It often happens that controllers have complete meanings by themselves, and can be applied to different base systems. This is the case with the one we are describing. In an abstract way, the controller's task is to detect when the base system is in a particular situation and command that only a particular action be allowed: the right train must move when the left one is next to it; only deposits are allowed in a bank account whose balance is zero; defensive moves are advised when our king is in trouble. Seen in this way, the names of the properties for the controller would be better chosen agnostic: instead of \verb!areConsec!, use \verb!isMarkedState!; instead of \verb!doMoveR!, use \verb!doMarkedAction!; instead of \verb!doMove!, use \verb!doAnyAction!. For the time being, however, we keep using train-related names.

The controller has two states: one, \verb!consec!, to represent trains in consecutive track sections; the other, \verb!nonConsec!, for the rest. A different transition is needed from each state, so that the one leaving from \verb!consec! commands (through synchronisation) a movement of the right train. No more refinement is needed.
\begin{verbatim}
--- In module CONTROLLER
   ops consec nonConsec : -> State .
   ops fromConsec fromNonConsec : -> Trans .
   rl consec =[ fromConsec ]=> consec .
   rl consec =[ fromConsec ]=> nonConsec .
   rl nonConsec =[ fromNonConsec ]=> consec .
   rl nonConsec =[ fromNonConsec ]=> nonConsec .
\end{verbatim}
Note that the first two rules define the same transition, as do the last two.

Finally, this is the definition of the properties for the controller that we have already used to synchronise with \verb"RECKONED-TRAINS":
\begin{verbatim}
--- Also in module CONTROLLER
   pr PPTY{Bool} .
   ops areConsec doMove doMoveR : -> Ppty{Bool} .
   var S : State .
   var T : Trans .
   eq areConsec @ consec = true .
   eq areConsec @ nonConsec = false .
   eq doMove @ S = false .
   eq doMove @ T = true .
   eq doMoveR @ fromConsec = true .
\end{verbatim}
When moving away from a non-consecutive state, any movement of the trains is valid. Thus, \verb"doMoveR" must be left undefined at \verb"fromNonConsec": setting it to false would prevent the movement of the right train.

This finishes the specification of the whole system. Let us describe informally how it works at a low level, that is, looking at the internals of the specifications. Requiring that \verb"doMove" equals \verb"isSomeMoving" (as we do in the synchronisation criteria for \verb"CONTROLLED-TRAINS") has the effect that states in \verb"RECKONER" can only be simultaneous with states in \verb"CONTROLLER", and the same for transitions. The equality of the two properties called \verb"areConsec" ensures, then, that the state \verb"consec" in \verb"CONTROLLER" is always visited simultaneously with the state \verb"1" in \verb"RECKONER". The only possible transition from \verb"consec" is called \verb"fromConsec" and sets the property \verb"doMoveR" to true. The synchronisation criterion \verb"doMoveR = isRMoving", then, mandates that a transition at which \verb"isRMoving" is true be executed. There are two such transitions in \verb"RECKONER": \verb"lmoving" and \verb"2moving". Each of them makes \verb"RTRAIN" execute its \verb"moving" transition, as can be inferred by continuing this chain of implications inside the definition of \verb"RECKONED-TRAINS" and its components.

Let us use this example to illustrate how compositional verification would work, even if only informally. The temporal property we are interested in proving is the absence of crashes between the trains: $\TBox\neg\texttt{crash}$, where the proposition \verb"crash" originates in the \verb"RECKONER" and can be defined as equivalent to $\texttt{D}<1$. However, the \verb"RECKONER" by itself cannot guarantee the formula. Its satisfaction depends on the \verb"CONTROLLER", who, in turn, does not know about distances among trains.

The system \verb"RECKONED-TRAINS" can guarantee $\TBox\neg\texttt{crash}$ if it assumes something from the environment, namely, the formula:
\[\varphi=\texttt{areConsec} \rightarrow (\neg\texttt{isSomeMoving} \TU \texttt{isRMoving}).\]
(We are using our Boolean-valued properties as Boolean propositions.) This is something that the \verb"CONTROLLER" is ready to guarantee if we replace the names of the Boolean-valued properties by the equivalent ones at its side:
\[\varphi=\texttt{areConsec} \rightarrow (\neg\texttt{doMove} \TU \texttt{doMoveR}).\]
The synchronisation criteria allow us to see these two formulas as the same one. In conclusion, we are saying that the \verb"CONTROLLER" guarantees $\varphi$, the \verb"RECKONED-TRAINS" guarantee $\TBox\neg\texttt{crash}$ under the assumption of $\varphi$, and from that we can assert $\texttt{CONTROLLED-TRAINS}\models\TBox\neg\texttt{crash}$. Adapting assume-guarantee techniques to formally justify this is left for future work.

As a last service from this example, we use it to illustrate the split operation. It allows to transform all the above into an equivalent, monolithic specification, that can be fed to Maude's tools, in case it would be needed. We illustrate how it works for the smaller, but still composed system \verb"RECKONED-TRAINS". The states of the split system are triples of component stages:
\begin{verbatim}
op <_,_,_> : LTRAIN.Stage RTRAIN.Stage RECKONER.Stage -> [State] .
\end{verbatim}
The membership axiom asserting when a triple can be considered a true state is this:
\begin{verbatim}
var L : LTRAIN.Stage .
var R : RTRAIN.Stage .
var K : RECKONER.Stage .
cmb < L, R, K > : State if isMoving @ L = isLMoving @ K
                        /\ isMoving @ R = isRMoving @ K .
\end{verbatim}
The condition is a direct translation of the synchronisation criteria.

For atomic components, the split of a rule produces two. Thus, the split of each train system produces two rules:
\begin{verbatim}
rl stopped => moving .
rl moving => stopped .
\end{verbatim}
The split of the reckoner produces six. This makes a total of $2\times2\times6=24$ combinations of rules. For each of these 24 triplets of rules, one has to consider that all the three rules can execute in one step, or just one or two of them. Each of these possibilities produces a rule in the resulting split system. This gives a large number of rules for a quite simple system. This is theoretically sound, but certainly not the way to go for an implementation. In some cases, like the present one, some membership conditions turn out to be trivially false, and the corresponding rules can be removed. Consider, for instance, this composed rule:
\begin{verbatim}
crl < moving, stopped, lmoving | D > => < moving, moving, lmoving | D >
   if < moving, stopped, lmoving | D > : State /\
   /\ < moving, moving, lmoving | D > : State .
\end{verbatim}
The first membership condition is always true and the second is always false, no matter what the value of \verb"D" is. A straightforward static analysis should be able to detect these cases and to avoid generating the rules.

In contrast, in the rule
\begin{verbatim}
crl < moving, stopped, lmoving | D > => < stopped, stopped, D - 1 >
   if < moving, stopped, lmoving | D > : State /\
   /\ < stopped, stopped, D - 1 > : State .
\end{verbatim}
both conditions are always true, so that it must be generated, but a static analysis should be able to remove the condition from it. The implementation and all the issues related to it will be the subject of future work.

\section{Related work}
\label{related}

The literature on compositionality is vast. From it we highlight and discuss below a few items: theoretical foundations, general frameworks, language and system implementations. Although this section is big, inevitably, many interesting models, languages, and systems are left out. We hope that the items that have made it into our discussion provide a useful overview of the available landscape.

Before delving into the main topic of compositionality, we review the precedents of what we have called \emph{egalitarianism}.

\paragraph{Egalitarianism}

Several temporal logics have been proposed that make joint use of actions and propositions on states: ACTL*~\cite{DeNicola1990}, RLTL~\cite{Sanchez2012}, SE-LTL~\cite{Chaki2004}, TLR*~\cite{Meseguer2008}. There are also transition structures with mixed ingredients: LKS~\cite{Chaki2004}, L2TS~\cite{DeNicola1995}. Although all of them bring actions (or transitions) to the focus, none tries to be utterly egalitarian, as we do.

The best move towards egalitarianism we know of is the temporal logic of rewriting, TLR* (which was an inspiration for the present work). The explanations and examples in~\cite{Meseguer2008} are good arguments for an egalitarian view. Consider this formula to express fairness in the execution of a rule with label $\ell$: $\TBox\TDiam\mathtt{enabled}.\ell \rightarrow \TBox\TDiam\mathtt{taking}.\ell$. The proposition $\mathtt{enabled}.\ell$ is on states: it means that the current state of the system admits the rule $\ell$ to be applied to it. But $\mathtt{taking}.\ell$ is on transitions: it means that a transition is being executed according to rule $\ell$. The simplicity of the formula is only possible by being egalitarian.

Plain TLR*, as described in~\cite{Meseguer2008}, stays a step away from our goal, because transitions are given by proof terms, that univocally determine one origin state and one destination state for each transition. TLR* uses proof-term patterns (called \emph{spatial actions}), that are used literally in temporal formulas. The problem is that, in this way, a TLR* formula is tied to a particular algebraic specification (one in which the pattern makes sense). In contrast, an LTL or CTL formula is meaningful by itself and can be used on any system specification by using atomic proposition definitions as interfaces. Notably, propositions on transitions have been added to plain TLR*, in some way or another, in all the implementations of model checkers for (the linear-time subset of) TLR* that we are aware of \cite{BaeM10,BaeM12,BaeM15,MartinVM14}. None of them, however, tries to allow a same proposition to be defined both in states and in transitions, which we need for flexible synchronization.

\paragraph{Our previous work}

Our paper~\cite{Martin2016b} contains a first definition of the synchronous composition of rewrite systems. There, we proposed to synchronise the execution of rules from different systems based on the coincidence of (atomic) rule labels. This reflects the synchronisation of actions in process algebras and in automata, for example. We also proposed to synchronise states by agreement on the Boolean values of propositions defined on them. We implemented that concept of synchronisation on Maude. That proposal had the advantage that it used standard machinery already existing in Maude: rule labels are basic elements of Maude's syntax, and propositions are customarily defined and used to build LTL formulas to be used with Maude's model checker. Why is the present, much more involved paper needed? We refer the reader to Section~\ref{choices}. In short: Boolean-valued propositions are not enough to allow flexible synchronisation and value-passing; we need to give more substance to transitions; we want to be able to synchronise an action at one system with several consecutive ones at the other system. A complex realistic example like the one on the alternating bit protocol in~\cite{MartinVM2018} would not be possible in our previous setting.

In a different topic, the paper~\cite{MartinVM2018} also describes the use of parameterised programming to add encapsulation to our setting. We have already mentioned it in Section~\ref{composed-ppty}. We outline it roughly refering to the example from Section~\ref{simple-example}, on two controlled trains. First, a so-called theory is used to state that a train is any system that defines a Boolean-valued property called \verb"isMoving". Requirements for reckoners and controllers are similarly stated. These are our interface specifications. The composition is specified in a parameterised module, whose formal parameters are the theories (that is, the interfaces). Thus, the composition can only be specified using the formal names and the properties in the interfaces. The particular implementations of trains and the other components are written and the needed properties are defined. Finally, the parameters of the composition module are instantiated with the component implementations, producing an implementation of the complete system.

\paragraph{Process algebras}

The parallel composition operator in process algebras (like CCS and CSP) is a notable precedent of our synchronous composition. Classic references are~\cite{Hoare1978,Milner1980}. A well-known example in CCS involves the specification of a coffee machine \verb"CM" and a computer scientist \verb"CS":
\begin{align*}
&\texttt{CM} := \texttt{coin}\,.\,\overline{\texttt{coffee}}\,.\,\texttt{CM},\\
&\texttt{CS} := \overline{\texttt{coin}}\,.\,\texttt{coffee}\,.\,\overline{\raisebox{2mm}{}\texttt{paper}}\,.\,\texttt{CS}.
\end{align*}
They are composed requiring actions \verb"coin" and \verb"coffee" to execute simultaneously:
\[(\texttt{CM}\,|\,\texttt{CS})\setminus\texttt{coin}\setminus\texttt{coffee}.\]

This can be translated to rewriting logic by coding two modules, \texttt{CM} and \texttt{CS}, with simple rules representing the actions, trivially defining properties, and then:
\begin{verbatim}
mod ComposedSystem is
   pr CM || CS
   sync on CM.isAcceptingCoin = CS.isInsertingCoin
        /\ CM.isProducingCoffee = CS.isTakingCoffee .
endm
\end{verbatim}

Process algebras were initially designed as theoretical tools. They focus on actions and synchronisation, and do not provide any means to specify internal computations, or to handle complex data types. However, later developments have taken process algebras as a basis for practical modelling and verification tools. Examples are occam~\cite{WelchB2004}, SCEL~\cite{DeNicolaLLLMMMPTV2015}, FSP+LTSA~\cite{MageeK2006}, CSP$\|$B~\cite{ButlerL2005}, and LOTOS and the CADP tool~\cite{GaravelLS17}.

\paragraph{Automata and labelled transition structures}

Both automata and labelled transition structures are formally digraphs, whose nodes represent states and whose edges represent actions. Classic references are~\cite{Hopcroft2006} for automata, and~\cite{Clarke1999} for labelled transition structures (although they are rather called \emph{Kripke structures} there). In automata, edges are labelled with atomic identifiers from some alphabet. In labelled transition structures, each node is assigned a set of propositions that are said to hold at that state; edge labels can be used as well. A labelled transition structure is thought of as modelling a system; an automaton is rather thought of as accepting a language.

These two graphs represent the coffee machine and the scientist from the previous section, and can be seen either as automata or as labelled transition structures:
\begin{center}
\begin{tikzpicture}[auto]
\node [state] (cm) at (0, 0) {\texttt{CM}};
\node [state] (cm') at (2.5, 0) {};
\draw [->, bend left] (cm) edge node {\texttt{coin}} (cm');
\draw [->, bend left] (cm') edge node {\texttt{coffee}} (cm);

\node [state] (cs) at (5, 0) {\texttt{CS}};
\node [state] (cs') at (7, 1) {};
\node [state] (cs'') at (7, -1) {};
\draw [->, bend left] (cs) edge node {\texttt{coin}} (cs');
\draw [->, bend left] (cs') edge node {\texttt{coffee}} (cs'');
\draw [->, bend left] (cs'') edge node {\texttt{paper}} (cs);
\end{tikzpicture}
\end{center}

We have used labelled transition structures as semantics in this paper. However, they can be used as modelling tools by themselves. Their limitation, that they share with process algebras, is that they do not provide means for specifying internal computations or handling complex data types. But variations and extensions abound, and different definitions for synchronous products have been proposed; see references cited above. Worth mentioning are I/O automata~\cite{LynchT1989}, explicitly designed with compositionality and distributed systems in mind.

\paragraph{Petri nets}

Petri nets do not include compositionality as a built-in ingredient. According to \cite{Sobocinski2016}, the first to address this problem was Mazurkiewicz in \cite{Mazurkiewicz1987}. Notably, while compositionality deserves no mention in \cite{Reisig1985} (one of the standard references on Petri nets), it is the subject of several chapters in the much more modern \cite{Jensen2009} (one of the standard references on coloured Petri nets), and is also introduced in~\cite{Reisig2013}.

Typically, there are two ways to compose Petri nets. One is given by hierarchical nets, that is, nets in which a transition can represent a complete separate net, that is described independently. The second way is to identify, or \emph{fuse}, either places or transitions from two different nets. For example, the coffee machine and the scientist can be modelled and then composed by fusing transitions like this:
\begin{center}
\begin{tikzpicture}[auto,scale=0.7]
\node [pnstate] (a1) at (-1, 3) {$\bullet$};
\node [pntrans] (coin1) at (-1, 2) [label=right:\texttt{coin}] {};
\node [pnstate] (b1) at (-1, 1) {};
\node [pntrans] (coffee1) at (-1, 0) [label=right:\texttt{coffee}] {};
\node (dummy1) at (-2, 1.7) {};
\draw [->] (a1) edge (coin1);
\draw [->] (coin1) edge (b1);
\draw [->] (b1) edge (coffee1);
\draw [->, out=270, in=270, looseness=2] (coffee1) to (dummy1) to [->, out=90, in=90, looseness=2] (a1);

\node at (1.5, 1) {\huge$\|$};

\node [pnstate] (a2) at (4, 3) {$\bullet$};
\node [pntrans] (coin2) at (4, 2) [label=left:\texttt{coin}] {};
\node [pnstate] (b2) at (4, 1) {};
\node [pntrans] (coffee2) at (4, 0) [label=left:\texttt{coffee}] {};
\node [pnstate] (c2) at (4, -1) {};
\node [pntrans] (paper2) at (4, -2) [label=left:\texttt{paper}] {};
\node (dummy2) at (5, 1.2) {};
\draw [->] (a2) edge (coin2);
\draw [->] (coin2) edge (b2);
\draw [->] (b2) edge (coffee2);
\draw [->] (coffee2) edge (c2);
\draw [->] (c2) edge (paper2);
\draw [->, out=270, in=270, looseness=2] (paper2) to (dummy2) to [->, out=90, in=90, looseness=2] (a2);

\node at (6, 1) {\huge$=$};

\node [pnstate] (a31) at (8, 3) {$\bullet$};
\node [pntrans] (coin3) at (9, 2) [label=right:\texttt{coin}] {};
\node [pnstate] (b31) at (8, 1) {};
\node [pntrans] (coffee3) at (9, 0) [label=right:\texttt{coffee}] {};
\node (dummy31) at (7, 1.7) {};
\draw [->] (a31) edge (coin3);
\draw [->] (coin3) edge (b31);
\draw [->] (b31) edge (coffee3);
\draw [->, out=270, in=270, looseness=2] (coffee3) to (dummy31) to [->, out=90, in=90, looseness=2] (a31);
\node [pnstate] (a32) at (10, 3) {$\bullet$};
\node [pnstate] (b32) at (10, 1) {};
\node [pnstate] (c32) at (10, -1) {};
\node [pntrans] (paper32) at (10, -2) [label=left:\texttt{paper}] {};
\node (dummy32) at (11.5, 1.2) {};
\draw [->] (a32) edge (coin3);
\draw [->] (coin3) edge (b32);
\draw [->] (b32) edge (coffee3);
\draw [->] (coffee3) edge (c32);
\draw [->] (c32) edge (paper32);
\draw [->, out=270, in=270, looseness=2] (paper32) to (dummy32) to [->, out=90, in=90, looseness=2] (a32);
\end{tikzpicture}
\end{center}
Some approaches propose the introduction of interfaces, that allow to see each component net as a black box. That is the case of the recent work described in \cite{BruniMMS13}.

Petri nets seem to have been extended in all possible directions. There are extensions in which complex data types are handled, and others that can include pieces of executable code within a net. However, for the proposals about compositionality that we are aware of, synchronisation is performed only on basic terms, and value-passing is not addressed.

\paragraph{Tile logic}

Tile logic was introduced in~\cite{GadducciM2000}, and is closely related to rewriting logic. In short, tile logic is rewriting logic with side effects for composition. A tile is written as $t\xrightarrow[b]{a}t'$ with $a$ being the condition for, and $b$ the effect of, rewriting $t$ to $t'$. The intuitive meaning of that tile is: ``the term $t$ is rewritten to the term $t'$, producing an effect $b$, but the rewrite can only happen if the variables of $t$ (that represent as yet unspecified subcomponents) are rewritten with a cumulative effect $a$.'' Effects are given by terms of any complexity.

Tiles can be composed in three ways: vertically, so that $t$ is first rewritten to $t'$ and then $t'$ to $t''$ according to another tile, if conditions are met; horizontally, so that several subterms of $t$ are rewritten according to some tiles, their effects being composed; and in parallel, representing the simultaneous but independent execution of several rewrites. Conditions and effects are the boundaries (interfaces) of the tiles, and they are the only thing important for composition.

Connections between tile logic and rewriting logic have been drawn in~\cite{MeseguerM97} and~\cite{BruniMM02}, mainly in the language of category theory.

\paragraph{$\textrm{Span}(\textrm{Graph})$}

The definition of $\textrm{Span}(\textrm{Graph})$ is based on category theory, although this is largely irrelevant for its use. It is described, for example, in~\cite{GianolaKS2017}. Each component is given by a graph (a kind of automaton), and each has left and right interfaces. Each action that takes place in the graph is reflected (or not) in the left and/or right interfaces.

There are three forms of composition available: parallelism without communication, parallelism with communication (where the right interface of a component gets attached to the left interface of another), and sequentiality.

The split of the interface in two parts, left and right, is an artifact only needed so that each component can be linearly assembled to other two, one in each side. This eases formalisation. There is no difference for an interface to be left or right. In particular, it is not the case that left is input and right is output. Indeed, there are no in and out interfaces, it depends on how they are used (it is the same in our setting).

Complex compositions can be written as algebraic expressions. For example, for processes (graphs) $A$ and $B$, a circuit with a feedback connection is written like this: $\eta * ((A * B) \otimes \iota) * \varepsilon$, and drawn like this:
\begin{center}
\begin{tikzpicture}[auto]
\node [state] (A) at (1.9, 0.8) {$A$};
\node [state] (B) at (3.1, 0.8) {$B$};
\draw (0, 0) edge node{$\eta$} (0.4, 0);
\draw (0.4, 0) edge (1, 0.8);
\draw (1, 0.8) edge (A);
\draw (A) edge (B);
\draw (B) edge (4, 0.8);
\draw (4, 0.8) edge (4.6, 0);
\draw (0.4, 0) edge (1, -0.8);
\draw (1, -0.8) edge node{$\iota$} (4, -0.8);
\draw (4, -0.8) edge (4.6, 0);
\draw (4.6, 0) edge node{$\varepsilon$} (5, 0);
\end{tikzpicture}
\end{center}
That is, $\eta$ is producing a fork with two copies of its input, $\varepsilon$ is doing the opposite at the other end, and $\iota$ is identity (a simple piece of wire). The operator $*$ represents sequence, and $\otimes$ is parallelism. The circuit has to be described from left to right, even though the bottom wire is intended to work the other way.

\paragraph{Asynchronous messages}

Asynchronous message passing can be used to achieve some extent of modular design. Actions that involve more than one component (because they implement interactions) are replaced by two or more: one or several actions in which a component produces and sends a message, and one or several actions in which the other component receives and deals with the message. The components must agree on the structure of the messages they exchange, which can be seen as the specification of the interface. The components and the messages are made to be elements of a set, which allows new elements, like messages, to be added or removed on the fly. This is a usual technique in Maude, where there is even an object-based notation, described in~\cite[Ch.~11]{ClavelDELMMT2007}, to ease the coding of components and messages and sets of them.

Asynchronous message passing has its limitations, though. In each component, the logic for message passing has to be mixed with the specification of its proper workings. This hinders reusability. In addition, some systems are better seen as synchronous, and are difficult to emulate with asynchronous messages. A controller, for example, would have a hard time trying to efficiently manage some component systems just by placing asynchronous messages. Also, asynchrony is not appropriate when physical components are physically geared: it is not satisfactory, nor realistic, that something that must be immediate is implemented by a multiple exchange of messages.

\paragraph{Behavioural programming}

Use cases and scenarios are techniques used in the requirement-collection phase of software development. Behavioural programming, as described for example in \cite{Harel2012}, proposes that it is possible to keep these up to the coding phase, transformed into synchronised threads, so called \emph{behavioural threads}, or just \emph{behaviours}, each one corresponding to a use-case. The concept of behaviour here is really elementary. For instance, when coding the rules for the game of tic-tac-toe, there is a behavioural thread that prevents marking cell $(1,1)$ if it is already marked; other eight behaviours do the same for the other eight cells; yet another enforces alternating turns; and so on. The paper \cite{Harel2012} hints at the possibility that, for complex systems, behavioural threads may need to be grouped into \emph{nodes}, each thread synchronised only with others in its node, and nodes synchronised by external events.

The synchronisation mechanism for behavioural threads is based on the \emph{request-wait-block} paradigm. Each thread, at each synchronisation point, \emph{requests} some events (needs some of them to go on), \emph{waits} for some others (wants to be informed if they happen), and \emph{blocks} some others. The set of events is global and shared. When all threads reach a synchronisation point, somehow an event is chosen that is requested by some thread and blocked by none, and so the system goes on to the next simultaneous synchronisation point. Separation of concerns (computation from interaction) is not perfect, as requesting, blocking, and waiting are an indispensable part of the logic of each thread.

Behavioural programming is claimed to allow incremental development, because new behaviours can be added to an already running system with no need to modify existing ones (as long as new events are not needed). Several implementations of behavioural programming are available---see at \texttt{www.wisdom.weizmann.ac.il/\textasciitilde bprogram}.

\paragraph{Aspect-oriented programming}

Aspects are concerns of a system that \emph{cross-cut} the system's base functionality. Typical examples of aspects are tracing, error handling, and monitoring. Aspect-oriented programming, as explained in \cite{Kiczales1997}, proposes, first, grouping in one module all code related to a given aspect, and, second, establishing at which points in the base code the aspect code must be executed. The language must provide some means for establishing those \emph{join points}. For instance, some monitoring method can be needed each time an object is created, or each time a method named \verb"imdangerous" is executed. This frees the programmer of the base code from worrying about such concerns.

The motivations for aspect-oriented programming are not too different from ours. Our properties can be seen as the mechanism to establish when other (aspect) code must be run. A difference is our insisting in simultaneity. Aspect code is usually executed right after or right before the need is found in the base code. For aspects, it is more difficult to prevent, and easier to react.

Aspects are seen as an ingredient \emph{added} to an existing programming paradigm, like object orientation. The most notable implementation of aspects is AspectJ, built on Java (and with a nice plug-in for Eclipse). Objects provide a first tool for modularity. In our case, all modularity is based on the synchronous composition operation.

A frequent criticism to aspects is that programmers who are coding base functionality do not know what is really happening in the whole system---aspect code executes out of their control. In principle, aspects can modify variables, create new objects, and so on. A monitor, for example, can be designed, not just to detect unsafe states, but to take control of the system and bring it back to safety. Such powerful aspects can be seen as breaking modularity. Our own view on this is that a monitor or controller does not introduce new behaviours on the system. If the system can be taken to a safe state, it is because it was already able to get there. Left to itself, the system would probably not choose that path, so the task of the controller is to enforce the path to safety and ban the rest.

\paragraph{Assembly theories and interface theories}

Assembly theories and interface theories share the goal of establishing requirements for something to be called an assembly or interface (which are similar concepts). They are described, respectively, in~\cite{HennickerKW2014} and~\cite{deAlfaroH2005}. Assemblies and interfaces, as characterized in those papers, are rich structures that exert a distributed control on the components by prescribing how and when they can or must communicate or collaborate.

Assembly theories are more abstract than interface theories. Where interface theories consider ports and connections, assembly theories just suppose that some form of communication between components is provided. Where interface theories use automata to specify requirements to a component's interface, assembly theories assume some requirements can be made. Each at its level of abstraction, both formalisms discuss refinement, encapsulation of several components into one, composition of assemblies/interfaces. And both state needed properties such as compositionality of refinement.

A singular feature of interface theories is that they see the environment in an  optimistic way, that is: a component is declared valid if there is \emph{some} environment in which it can be used. The component restricts the environment, which must know the correct way to \emph{use} the component. An interface specification needs not be ready for any inputs it could receive, but just for the correct ones, which makes the specification simpler. The authors claim this eases incremental design, although incremental verification requires the pessimistic view.

\paragraph{Coordination}

The goal of coordination is to make different components work together. The components may have been coded in different languages, reside in different servers, with different architectures. The paper~\cite{PapadopoulosA1998} is a comprehensive reference, though old.

There is a large variety of languages and systems for coordination. Some enforce complete separation of concerns between coordination and computation; others require each component to include its part of the coordination logic. Communication may be port to port, with well-defined interfaces; or there may be a central repository of data shared by all components; or anything in between. Channels may be active components, able to hold data and follow some logic; or they may be just lines joining two ports. Some models allow dynamic reconfiguration of the component network; others are more static. Communication can be synchronous or asynchronous.

Coordination is a very general term, and some of the proposals we have discussed above can be seen as belonging to it. Let us name a few additional examples. Linda, with all its variants and implementations, is one of the best known coordination languages. See~\cite{Wells2005} for a relatively recent take. It is based on the idea of a shared repository of data and relies on each component using coordination primitives in appropriate ways. REO is at the other extreme. It enforces separation of concerns and is based on composing basic channels to provide port to port communication between components. Basic channels can be of any imaginable sort (synchronous or not, lossy\dots) and the means to compose them are very flexible. REO is described in~\cite{Arbab2004}.

BIP stands for \emph{behaviour}, \emph{interaction}, \emph{priority}---the three \emph{layers} of a composed specification, as proposed by the authors. The behaviour of atomic components is specified by automata (of a special kind) some of whose actions are also taken as port names for communication. These automata are a specification of requirements on the component, whose real implementation can be made using any language or tool. Interaction is performed through connectors linking ports in potentially complex ways. Among the interactions that are allowed at any given time, the one with the highest priority is chosen and performed. Interaction and priority together implement control. The paper~\cite{BasuBS2008} has a good overview. Several implementations exist that allow to use the BIP framework within programming languages like Java and C++.

\section{Future work}
\label{future}

Bringing compositionality to rewriting logic in the sense discussed in this paper can open it to new fields of application like coordination models and component-based software development. Also, hardware specification could benefit. But all this is quite speculative right now. In this section we describe the more feasible tasks we intend to apply ourselves to in the near future.

In addition to those tasks, the ultimate test for our tool would be its use to model and analyse a real system, preferably not just a realistic one, but one drawn from the real world, either one in which rewriting logic or other formal methods have already been used with success, and that we can approach with our tools and compare results, or a new system that can particularly benefit from our methods. This will only be possible once the implementation is working and the other lines of work described below have been at least partially explored.

\subsection{Implementation}

We need a usable prototype of the synchronous composition operation that supports running all the examples in this paper. It will be developed by extending Full Maude. This is a reimplementation of the Maude interpreter in Maude itself. It is described, for instance, in~\cite{ClavelDELMMT2007}. Full Maude has the advantage of being easily extensible. It is the natural choice to implement our tool.

At the syntactic level, our implementation must be ready to accept, first, transition terms in rules, and, second, the synchronous composition operator \verb"||" with its \verb"sync on" clauses. Also, the split of the synchronous composition must be computed when needed so as to allow an easy use of Maude's execution and search engines, and the LTL model checker.

That plain implementation will need to be improved in two aspects. First, the number of rewrite rules of the resulting split system can easily get very large and inefficient. Depending on the case, some or many of the conditions of these rules are trivially false or trivially true, and a straightforward static analysis can remove them. Even whole rules can be statically removed sometimes. This can have a large impact on performance.

Second, we have found that coding the specifications for composed systems in the way we propose is not always easy nor intuitive; but we have also found that some tricks and shortcuts can be used. See, in particular, \cite{MartinVM2018}. We must consider including some of these tricks in the implementation to increase our tool's usability.

\subsection{Strategies}

As important as the possibility of assembling \emph{physical} components (like a processor and a memory) is that of using \emph{abstract} components to control others. By \emph{abstract} we mean that they do not represent physical entities. Several examples in this paper illustrate the use of components as controllers or---as they are usually called in rewriting logic, with roughly the same meaning---strategies. The basic idea is that we make mandatory that some actions in the base system are synchronised with some actions in the controller. Thus, if the controller refuses to execute a particular action, the corresponding one on the base system is prevented.

This allows us to specify a base system with all its non-deterministic capabilities as one component, and use it only under the control of another component; even in different ways under different controls. This is the idea used in~\cite{Lescanne1989,Clavel1997,Verdejo2012} to implement Knuth-Bendix-like completion as a basic set of correct rules on which different strategies are applied to get different actual procedures. Again, the same idea is used in~\cite{Bachmair2003} for congruence closure. And in~\cite{Marti2004,Eker2007} insertion sort is implemented as a base system with a single rule for swapping cell contents and, then, a control on how to use that rule.

The language Maude includes the nice possibility of working at the meta-level. That is, a Maude module can be \emph{meta-represented}, stored, handled, and used as an object within another (meta-level) Maude module. Rewriting can be performed in a controlled way at the meta-level. This is a direct way to implement strategies in Maude. However, working at the meta-level is sometimes cumbersome, and requires a deeper understanding of Maude. And the results obtained are difficult to export to other formalisms, even to other rewriting-based ones. That is why object-level strategy languages are desirable and have indeed been developed, in Maude and in other formalisms (for Maude, see~\cite{Marti-Oliet2009}).

There are two tasks to be addressed. The first, theoretical one is putting our proposal in the context of the existing work on strategies: studying which kinds of strategies are implementable using synchronous composition, with which advantages and drawbacks, seeing if we can help clarify the philosophical question on the nature of strategies. The second task is implementing translations that render strategies written in some strategy language as components in rewriting logic and synchronisation criteria to compose them.

\subsection{Compositional verification}
\label{comp-verif}

Compositional specification is very nice, but it gets even better if one can go on working compositionally, particularly for verification purposes. This observation has certainly been made many times in the past and there is abundant work on modular, or component-wise, verification. A well-known paradigm is \emph{assume-guarantee}, initially proposed in~\cite{Pnueli1985}. According to it, each component assumes some nice behaviour from the rest of the system and, under such an assumption, guarantees its own nice behaviour. If each component can be proven to satisfy some condition of this type, conclusions can be drawn on the whole composed system. We have already mentioned this at the end of Section~\ref{tr-split}.

We do not foresee substantial theoretical developments on our part here. We just intend to adapt existing relevant work to our framework. This can include the implementation of new commands or facilities for compositional model checking.

\section{Conclusion}
\label{conclusion}

We are confident that the most appealing parts of our work are still to come. Strategies, compositional verification, runtime verification, trace model checking, coordination models, component-based software development---compositionality can turn rewriting logic suitable for some of these or other fields. These explorations will be enjoyed, hopefully, at some future time. But all the necessary theoretical basis for them has been laid down in this paper.

We have explained why transitions and states must be treated as equals, and how this is possible in so-called \emph{egalitarian} rewrite systems, which allow for complex transition terms instead of the usual atomic labels. We have proposed a flexible means for specifying how several rewrite systems are synchronised, based on agreement on the values of \emph{properties}. We have shown the power of properties in several realistic examples, and we have also justified why we need all that power, even though it entails more complexity. We have developed the theory for transition structures as well, so that our rewrite systems get a semantic ground. We have described the \emph{split} operations, that translate sets of synchronised egalitarian rewrite systems into standard ones.

We can now begin our further explorations walking on firm ground.

\paragraph*{Acknowledgements} This paper owes a good share of whatever value it may have to the very careful and useful remarks and suggestions from its anonymous referees.

\bibliographystyle{acmtrans}
\bibliography{modspec}

\end{document}